\documentclass[showpacs,amsmath,amssymb,twocolumn,superscriptaddress,notitlepage,preprintnumbers,pra]{revtex4-1}

\usepackage{graphicx}
\usepackage{dcolumn}
\usepackage{bm}
\usepackage{times}
\usepackage{multirow}
\usepackage{dcolumn}
\usepackage{tabularx}
\usepackage{subfigure}
\usepackage{ae}
\usepackage{lettrine}
\usepackage{color}
\usepackage{amsmath,amsthm,amssymb,amsfonts,boxedminipage}
\usepackage{epstopdf}
\usepackage{hyperref}
\usepackage{dsfont}
\newcommand{\algmargin}{\the\ALG@thistlm}
\makeatother



\makeatletter
\makeatother
\newtheorem{theorem}{Theorem}
\newtheorem{lemma}{Lemma}
\newtheorem{corollary}{Corollary}
\newtheorem{definition}{Definition}

\newtheorem{fact}{Fact}

\bibliographystyle{unsrt}
\addtolength{\itemsep}{-5ex}
\makeatletter
\renewcommand\@biblabel[1]{#1.}
\makeatother

\usepackage[noend,noline,ruled,linesnumbered]{algorithm2e}
\usepackage{hyperref}
\usepackage{bm}
\usepackage{times}
\usepackage{multirow}
\usepackage{dcolumn}
\usepackage{tabularx}
\usepackage{graphicx}
\usepackage{subfigure}
\usepackage{ae}
\usepackage{lettrine}
\usepackage{color}
\usepackage{makecell}

\usepackage{amsmath,amsthm,amssymb,amsfonts,boxedminipage}
\usepackage{epstopdf}

\definecolor{comment}{rgb}{0, 0, 0}

\hypersetup{colorlinks=true, linkcolor=cyan, citecolor=cyan, urlcolor=blue }

\usepackage{braket}

\newcommand{\abs}[1]{\left| #1 \right|}

\newtheorem{problem}{Problem}

\newcommand{\PKU}{Center on Frontiers of Computing Studies, School of Computer Science, Peking University, Beijing 100871, China}
\newcommand{\bnu}{School of Artificial Intelligence,
 Beijing Normal University, Beijing,
 100875, China}


\begin{document}


\title{Hamiltonian Dynamics Learning: A Scalable Approach to Quantum Process Characterization}

\author{Yusen Wu}
\thanks{These authors contributed equally}
\affiliation{\bnu}

\author{Yukun Zhang}
\thanks{These authors contributed equally}
\affiliation{\PKU}

\author{Chuan Wang}
\email{wangchuan@bnu.edu.cn}
\affiliation{\bnu}

\author{Xiao Yuan}
\email{xiaoyuan@pku.edu.cn}
\affiliation{\PKU}

\date{\today}

\begin{abstract}
Quantum process characterization is a fundamental task in quantum information processing, yet conventional methods, such as quantum process tomography, require prohibitive resources and lack scalability. Here, we introduce an efficient quantum process learning method specifically designed for short-time Hamiltonian dynamics. Our approach reconstructs an equivalent quantum circuit representation from measurement data of unknown Hamiltonian evolution without requiring additional assumptions and achieves polynomial sample and computational efficiency.
Our results have broad applications in various directions. We demonstrate applications in quantum machine learning, where our protocol enables efficient training of variational quantum neural networks by directly learning unitary transformations. Additionally, it facilitates the prediction of quantum expectation values with provable efficiency and provides a robust framework for verifying quantum computations and benchmarking realistic noisy quantum hardware. This work establishes a new theoretical foundation for practical quantum dynamics learning, paving the way for scalable quantum process characterization in both near-term and fault-tolerant quantum computing.

\end{abstract}

\maketitle


Quantum computers are entering regimes beyond the reach of classical computational power~\cite{arute2019quantum,morvan2024phase,zhong2020quantum}. Coherent manipulation of complex quantum states with hundreds of physical qubits has been demonstrated across multiple platforms, including trapped ions~\cite{smith2016many}, neutral atom arrays~\cite{evered2023high}, and superconducting qubit circuits~\cite{arute2019quantum,morvan2024phase,acharya2024quantum}. As quantum hardware continues to scale in size and complexity, the ability to characterize quantum processes becomes critical for advancing quantum error correction code~\cite{bravyi2024high, acharya2024quantum,putterman2025hardware}, quantum error mitigation~\cite{kim2023scalable,o2023purification}, and quantum algorithms~\cite{Youngseok2023evidence,morvan2024phase,huang2022quantum,wu2023quantum,guo2024experimental}. Among characterization tools, quantum process tomography (QPT) stands as a foundational method for reconstructing unknown quantum processes from measurement data, thereby elucidating the intrinsic structure of quantum circuits~\cite{mohseni2008quantum, gebhart2023learning,ahmed2023gradient}. Predicting the expectation value of the output of quantum processes has been proved to be efficient in some cases~\cite{raza2024online,huang2023learning2,mcginley2024postselection}, however, general QPT requires an exponential query complexity in the worst-case scenario~\cite{haah2023query}, rendering it infeasible for large-scale systems.

To address this challenge, various heuristic approaches have been developed, leveraging parameterized quantum ansätze~\cite{xue2022variational,xue2023variational,wulearning}, Bayesian inference~\cite{granade2012robust, wiebe2014hamiltonian,stenberg2014efficient}, neural network models~\cite{xin2019local, melko2019restricted}, and tensor networks~\cite{torlai2023quantum}. However, neural network and tensor network methods generally lack theoretical guarantees, making their applicability and scalability uncertain.  
Notably, regarding shallow quantum circuits with constant circuit depths, Huang et al.~\cite{huang2024learning} recently demonstrated the feasibility of efficient process characterization, making it a promising approach for analyzing large-scale quantum circuits.
Nevertheless, for more general quantum processes governed by Hamiltonian dynamics, such as those in both analog quantum simulators and digital quantum computers, the problem remains an open challenge.

Recent advances in provable Hamiltonian learning protocols have partially addressed this challenge by focusing on learning the underlying Hamiltonian. These approaches generally assume prior knowledge of the Hamiltonian structure~\cite{huang2023learning,gu2024practical}, though recent works have relaxed this assumption~\cite{haah2024learning,zhao2024learning, bakshi24structure,ma2024learning} and extended to general sparse Hamiltonians~\cite{hu2025ansatz}. Additionally, they typically require access to tunable evolution times \( t \) for Hamiltonian dynamics \( e^{-iHt} \) or a known inverse temperature \( \beta \) for thermal states of the form \( e^{-\beta H}/{\rm Tr}[e^{-\beta H}] \)~\cite{anshu2021sample,haah2024learning}.  
Since the Hamiltonian and its corresponding dynamics are generally nontrivially related, these results cannot be directly applied to the problem of learning Hamiltonian dynamics. This naturally raises an intriguing question: 
\emph{``Can we efficiently learn Hamiltonian dynamics without prior structural knowledge and without access to tunable evolution times?"}

\begin{figure*}
\centering
\includegraphics[width=\textwidth]{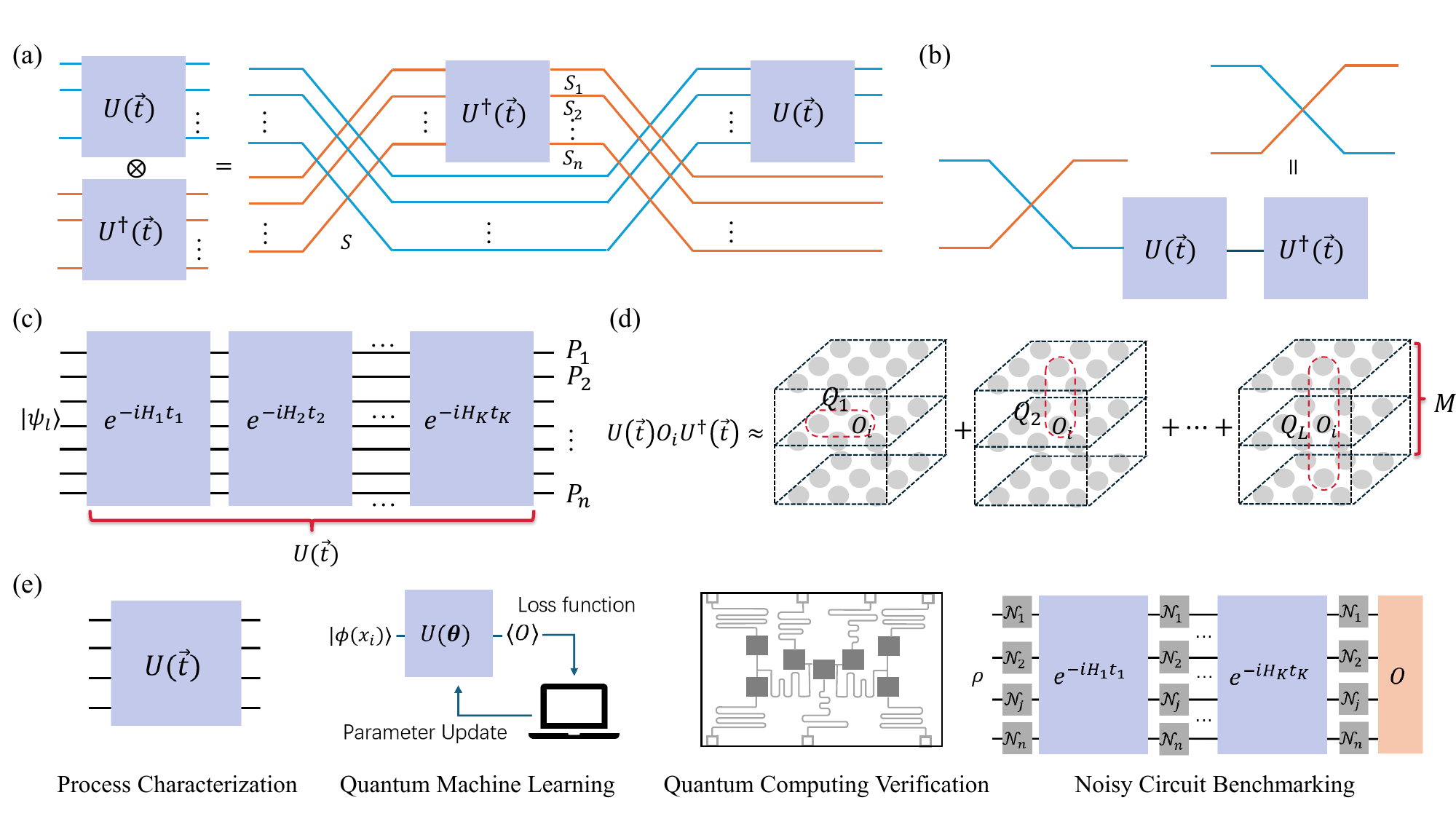} 
\caption{(a),~(b) A basic idea to prove the identity Eq.~\eqref{Eq:Identity}. Specifically, (a) demonstrates the relationship $U(\vec{t})\otimes U(\vec{t})=SU^{\dagger}(\vec{t})S_1\cdots S_nU(\vec{t})$, meanwhile (b) inserts an identity $U(\vec{t})U^{\dagger}(\vec{t})$ between swap operator pairs $S_i,S_{i+1}$, which thus gives rise to Eq.~\eqref{Eq:Identity}. (c) In the learning phase, random tensor product state $|\psi_l\rangle$ is applied by the unknown quantum process $U(\vec{t})$, followed by a random Pauli measurement. (d) Visualization on approximating $U(\vec{t})O_iU^{\dagger}(\vec{t})\approx\sum_{Q(O_i)}\alpha_{Q(O_i)}Q(O_i)$ by using the cluster expansion method via Lemma~\ref{lemma1}. In this visualization, each grey point represents a single qubit and the red dot circle represents a Pauli term $Q$. 
  The approximation is essentially a linear combination of ${\rm poly}(n)$ matrices induced by connected clusters, and it is applied to $\max\{\abs{{\rm supp}(Q(O_i))}\}\leq M$ qubits. (e) The proposed quantum learning algorithm can be employed to characterize an unknown quantum process, train HV ansatz based quantum machine learning models, verify the output of quantum computers, and benchmark quantum states produced by weakly noisy quantum circuits.  }
\label{fig:1}
\end{figure*}

Here, we address this problem by presenting an efficient protocol for learning short-time Hamiltonian dynamics. Specifically, we consider an unknown unitary process  
$U(\vec{t}) = e^{-iH^{(1)}t_1} \cdots e^{-iH^{(K)}t_K}$, 
which is generated by a sequence of \( n \)-qubit, \(\mathcal{O}(1)\)-dimensional local Hamiltonians \(\{H^{(k)}\}_{k=1}^K\) and a time series \(\vec{t} = \{t_k\}_{k=1}^K\).  
We propose an efficient protocol to learn an approximation and construct a unitary that is \(\epsilon\)-close to \(U(\vec{t})\) in terms of the diamond norm. Our results have broad applications in quantum machine learning, quantum computation verification, and noisy device benchmarking.  
First, they provide an efficient method for training Hamiltonian variational ansatz-based quantum neural networks (QNNs) for classification tasks. While optimizing variational parameters is known to be NP-hard~\cite{bittel2021training}, our approach circumvents this difficulty by directly learning the QNN’s unitary action without requiring parameter extraction.  
Next, the learned model can be used to predict quantum expectation values. Specifically, for global observables with two-dimensional Hamiltonian dynamics, predictions require quasi-polynomial classical time, whereas for local observables with constant-dimensional dynamics, efficient classical prediction is feasible. 
Finally, our protocol extends to noisy quantum devices, offering an efficient approach for benchmarking realistic large-scale quantum processes.

\vspace{0.2cm}

\noindent\textbf{Hamiltonian dynamics learning} --- We first introduce the definition of Hamiltonian dynamics learning.
We consider an $n$-qubit $D$-dimensional geometrically local Hamiltonian
\begin{align}
    H=\sum\limits_{X\in S}\lambda_Xh_X,
    \label{Eq:Hamiltonian}
\end{align}
where $S$ represents a set of subsystems, real-valued coefficient $\abs{\lambda_X}\leq 1$, and $h_X$ represents a Hermitian operator non-trivially acting on the local qubit set $X\subset S$. Without loss of generality, we assume the operator norm of each $h_X$ satisfies $\|h_X\|\leq 1$, meanwhile their locality satisfies $\max_{X\subset S}\abs{{\rm supp}(h_X)}=\Lambda$.
To characterize the locality and correlations presented by the Hamiltonian, we introduce the associated interaction graph $G$ to depict overlaps of operators contained in $H$~\cite{haah2024learning,wild2023classical, wu2024efficient,zhang2024dequantized}. Specifically, given the Hamiltonian terms $\{h_X\}_{X\subset S}$, the interaction graph $G$ is a simple graph with vertex set $\{h_X\}_{X\subset S}$. An edge exists between $h_X$ and $h_{X^{\prime}}$ if $X\cap X^{\prime}\neq\emptyset$, and we denote the degree $\mathfrak{d}(h_X)$ of a vertex $h_X$, which is the number of edges incident to it. The maximum degree among all vertexes within the interaction graph $G$ is denoted by $\mathfrak{d}=\max_{h_X\in H}\left\{\mathfrak{d}(h_X)\right\}$.

For a set of $K$ unknown Hamiltonians $\{H^{(1)}, H^{(2)}, \dots, H^{(K)}\}$ defined by Eq.~\eqref{Eq:Hamiltonian}, and an evolution time series $\vec{t} = \{t_1, \dots, t_K\}$ with $\abs{t_k} = \mathcal{O}(1)$, we represent the corresponding Hamiltonian dynamics as $
U(\vec{t}) = \prod_{k=1}^{K} e^{-iH^{(k)}t_k}
$. Now, assuming that we have access only to the quantum dynamics $U(\vec{t})$, we define the problem of Hamiltonian dynamics learning as follows.

\begin{problem}
    [Hamiltonian Dynamics Learning]
\label{task1}
 Given access to $U(\vec{t})$, output an $n$-qubit channel $\mathcal{V}$ such that $\left\|\mathcal{V}-\mathcal{U}(\vec{t})\right\|_{\diamond}\leq\epsilon$ with high probability, where the unitary channel $\mathcal{U}(\vec{t})=U(\vec{t})(\cdot)U^{\dagger}(\vec{t})$.
\end{problem}

\noindent The diamond distance between quantum channels $\mathcal{V}$ and $\mathcal{U}$ is quantifies by $ \left\|\mathcal{V}-\mathcal{U}\right\|_{\diamond}=\max_{\sigma}\|(\mathcal{V}\otimes I)(\sigma)-(\mathcal{U}\otimes I)(\sigma)\|_1$, where $\sigma$ denotes all density matrices of $2n$ qubits.

\vspace{0.2cm}
\noindent\textbf{Quantum Learning Algorithm} --- 
Now, we introduce our Hamiltonian dynamics learning algorithm. Firstly, the algorithm exploits the operator identity established in Refs.~\cite{arrighi2011unitarity,huang2024learning}:
\begin{align}
U(\vec{t})\otimes U^{\dagger}(\vec{t})=S\prod\limits_{i=1}^n\left[U^{\dagger}(\vec{t})S_iU(\vec{t})\right],
\label{Eq:Identity}
\end{align}
where $S$ represents the $2n$-qubit SWAP operator and $S_i$ represents a $2$-qubit operator acting on qubit pair $(i,n+i)$. The basic idea behind this identity is that the involved global swap gates can induce a twist in the topological arrangement of the quantum circuit, thereby leaving the circuit itself unchanged. Then, by inserting $U(\vec{t})U^{\dagger}(\vec{t})$ after each $S_i$, the final identity is obtained. A visualization explanation on the proof of Eq.~\eqref{Eq:Identity} is provided in Fig.~\ref{fig:1} (a),(b). Crucially, the $2$-qubit SWAP operator can be decomposed as $S_i=\frac{1}{2}\sum_{O\in\{I,X,Y,Z\}}O_i\otimes O_{i+n}$. Substituting this decomposition into the identity yields 
\begin{align}\label{Eq:decompositin}
    U^{\dagger}(\vec{t})S_iU(\vec{t})=\frac{1}{2}\sum\limits_{O\in\{I,X,Y,Z\}}U^{\dagger}(\vec{t})O_iU(\vec{t})\otimes O_{i+n}
\end{align}
under the condition ${\rm supp}\left(U^{\dagger}(\vec{t})O_iU(\vec{t})\right)\cap {\rm supp}(O_{i+n})= \emptyset$. Here, the support of an operator $P$ represents the minimal qubit set such that $P=Q_{{\rm supp}(P)}\otimes I_{n\setminus{\rm supp}(P)}$ for some operator $Q$. When $U(\vec{t})$ is given by a $D$-dimensional Hamiltonian dynamics with constant evolution time, it can be approximated by quantum circuits with depth $\mathcal{O}\left(t{\rm poly}\log(nt)\right)$~\cite{haah2021quantum}. Combining this with the lightcone argument for information propagation in a $D$-dimensional lattice, we establish the support upper bound $\abs{{\rm supp}\left(U^{\dagger}(\vec{t})O_iU(\vec{t})\right)}\leq \mathcal{O}\left(t^D{\rm poly}\log(nt)\right)$, which implies the relationship ${\rm supp}\left(U^{\dagger}(\vec{t})O_iU(\vec{t})\right)\cap {\rm supp}(O_{i+n})= \emptyset$ when $t=\mathcal{O}(1)$. Therefore, learning the dynamics $U(\vec{t})$ can be effectively reduced to learning the set of operators $\left\{U_{O_i}(\vec{t})=U^{\dagger}(\vec{t})O_iU(\vec{t})\right\}_{i=1}^n$, which can subsequently be combined to reconstruct the full dynamics. 

To learn the Hermitian operator $U_{O_i}(\vec{t})$, one approach is to decompose $U_{O_i}(\vec{t})$ into a linear combinations of Pauli operators $\sum_{Q(O_i)}\alpha_{Q(O_i)}Q(O_i)$, and subsequently learn the real-valued coefficients $\alpha_{Q(O_i)}$. However, when $\abs{{\rm supp}(U_{O_i}(\vec{t}))}={\rm poly}\log n$, 
the direct decomposition may lead to some Pauli operators $Q(O_i)$ whose support size scales as ${\rm poly}\log n$. This further results in quasi-polynomial sample complexity and classical post-processing running time when using the brute-force computation. How to reduce the complexities to polynomial has been left as an open problem in Ref.~\cite{huang2024learning}. 
Here, we resolve this problem. We propose that \emph{learning coefficients of Pauli operators $Q(O_i)$ acting on $\mathcal{O}(\log n)$ qubits is sufficient to reconstruct the operator $U_{O_i}(\vec{t})$}. This statement is rigorously supported by the following lemma.
\begin{lemma}[Informal]
\label{lemma1}
Consider the Hamiltonian dynamics with evolution time $t=\max_k\{\abs{t_k}\}$, the operator $U_{O_i}(\vec{t})$ can be approximated by $V_{O_i}(\vec{t})=\sum_{Q(O_i)}\alpha_{Q(O_i)}Q(O_i)$ such that $\|\mathcal{U}_{O_i}(\vec{t})-\mathcal{V}_{O_i}(\vec{t})\|_{\diamond}\leq\epsilon^{\prime}\|O_i\|_{\infty}$. Here, $\max_{Q(O_i)}\abs{{\rm supp}(Q(O_i))}\leq \mathcal{O}(M(t))$ and $V_{O_i}(\vec{t})$ contains $L=\mathcal{O}((e\mathfrak{d})^{M(t)})$ Pauli operators $Q(O_i)$ with 
\begin{eqnarray}
M(t)=\left\{
    \begin{split}
        &\mathcal{O}\left(\frac{\log(1/\epsilon^{\prime})}{\log(t^*/t)}\right), ~\text{when}~t<t^*\\
        &\mathcal{O}\left(e^{\pi teK\mathfrak{d}}\log\left[\frac{e^{\pi teK\mathfrak{d}}}{\epsilon^{\prime}}\right]\right), ~\text{when}~t=\mathcal{O}(1)
    \end{split}
\right.
\end{eqnarray}
where the constant threshold $t^*=1/(2eK\mathfrak{d})$. Here, $\mathcal{U}_{O_i}(\vec{t})$, $\mathcal{V}_{O_i}(\vec{t})$ are channel representations of $U_{O_i}(\vec{t})$ and $V_{O_i}(\vec{t})$, respectively.
\end{lemma}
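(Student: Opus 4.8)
The plan is to prove Lemma~\ref{lemma1} through a \emph{cluster expansion} of the Heisenberg-evolved operator $U_{O_i}(\vec t)=\bigl(\prod_{k=K}^{1}e^{iH^{(k)}t_k}\bigr)O_i\bigl(\prod_{k=1}^{K}e^{-iH^{(k)}t_k}\bigr)$. First I would expand each layer $e^{-iH^{(k)}t_k}$ as a time-ordered Dyson series and reorganize the resulting multi-commutator expansion of $U_{O_i}(\vec t)$ as a sum over ordered sequences of Hamiltonian terms $(h_{X_1},\dots,h_{X_m})$, each contributing a nested commutator $[h_{X_m},[\dots,[h_{X_1},O_i]\dots]]$ weighted by $\prod_j\lambda_{X_j}$ and a nested time integral. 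The structural fact that drives everything is that such a nested commutator vanishes unless $\{\supp(O_i),X_1,\dots,X_m\}$ is connected in (a rooted version of) the interaction graph $G$; hence only \emph{connected clusters} anchored at the vertex of $O_i$ contribute, and I would define $V_{O_i}(\vec t)$ as the truncation of this sum to clusters of total size at most $M=M(t)$, re-expanded in the Pauli basis to give the claimed form $\sum_{Q(O_i)}\alpha_{Q(O_i)}Q(O_i)$.

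Two combinatorial consequences follow immediately. Since each $h_X$ is $\Lambda=\mathcal O(1)$-local, a connected cluster of size at most $M$ occupies $\mathcal O(M)$ qubits, so every Pauli $Q(O_i)$ appearing in $V_{O_i}(\vec t)$ obeys $\abs{\supp(Q(O_i))}\le\mathcal O(M)$. For the count $L$, the number of connected subgraphs of $G$ of size at most $M$ through a fixed vertex is $\mathcal O((e\mathfrak d)^M)$ by the standard bounded-degree connected-subgraph counting bound, and expanding each such cluster operator in the Pauli basis on its $\mathcal O(M)$-qubit support costs only an extra $e^{\mathcal O(M)}$ factor, which after absorbing $\Lambda=\mathcal O(1)$ constants gives $L=\mathcal O((e\mathfrak d)^M)$. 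This is stronger than the light-cone truncation from writing $e^{-iHt}$ as a depth-$\mathcal O\!\left(t\,{\rm poly}\log(nt)\right)$ circuit, which would only give support ${\rm poly}\log(n)$; the cluster expansion does better because it additionally exploits the decay of cluster amplitudes.

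The analytic core is the truncation error $\|U_{O_i}(\vec t)-V_{O_i}(\vec t)\|_{\infty}$, a sum over connected clusters of size exceeding $M$. Bounding each nested commutator by $2^m\|O_i\|_{\infty}$, using $\abs{\lambda_{X_j}}\le1$, and observing that the $K$-fold nested time integrals over $m$ events contribute a factor of order $(Kt)^m/m!$ while the number of \emph{ordered} connected sequences of length $m$ anchored at $O_i$ is at most $m!(e\mathfrak d)^m$ up to an $\mathcal O(1)^m$ factor absorbed into the base --- so that the factorials cancel --- one obtains, schematically, $\|U_{O_i}(\vec t)-V_{O_i}(\vec t)\|_{\infty}\le\|O_i\|_{\infty}\sum_{m>M}(2eK\mathfrak d\,t)^m$. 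For $t<t^{\ast}=1/(2eK\mathfrak d)$ this geometric tail is $\lesssim(t/t^{\ast})^M$, and demanding it be at most $\epsilon'$ yields $M(t)=\mathcal O\!\left(\log(1/\epsilon')/\log(t^{\ast}/t)\right)$. For the remaining window $t=\mathcal O(1)$ the series is no longer term-by-term summable, and I would instead resum the cluster expansion after rescaling the expansion parameter (equivalently, a Cauchy/analytic-continuation estimate), which restores convergence at the price of a prefactor $e^{\mathcal O(\pi eK\mathfrak d t)}$ and reproduces the second branch of $M(t)$. Finally, the operator-norm estimate $\|U_{O_i}(\vec t)-V_{O_i}(\vec t)\|_{\infty}\le\epsilon'\|O_i\|_{\infty}$ transfers to $\|\mathcal U_{O_i}(\vec t)-\mathcal V_{O_i}(\vec t)\|_{\diamond}\le\epsilon'\|O_i\|_{\infty}$ by the standard inequality bounding the diamond norm of a multiplication superoperator by the operator norm of its multiplier.

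The step I expect to be the main obstacle is the $t=\mathcal O(1)$ regime: past $t^{\ast}$ the cluster series is not absolutely summable term-by-term, so the truncation must be justified by a genuine cancellation/resummation rather than a crude bound, and this is precisely what forces the exponential prefactor $e^{\pi eK\mathfrak d t}$ into $M(t)$. A secondary difficulty is the bookkeeping for clusters that span several of the $K$ layers --- getting the ordering factors across layers right is what makes $K$ enter $t^{\ast}$ and the prefactor correctly --- and, more minorly, making the informal "channel" difference $\mathcal U_{O_i}-\mathcal V_{O_i}$ precise given that $U_{O_i}(\vec t)$ and its truncation need not be unitary.
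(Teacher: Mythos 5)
Your proposal follows essentially the same route as the paper: a nested-commutator cluster expansion anchored at $O_i$ (the paper derives it via a multivariate Taylor/BCH expansion in $z_X=-it\lambda_X$ rather than a Dyson series, but the resulting sum over connected clusters is identical), the same connected-subgraph counting for the support size and the bound on $L$, the same geometric-tail estimate giving $M(t)=\mathcal O(\log(1/\epsilon')/\log(t^*/t))$ below $t^*=1/(2eK\mathfrak d)$, and for $t=\mathcal O(1)$ the same analytic-continuation idea, which the paper implements concretely via the conformal map $\phi(z)=\log(1-z/R')/\log(1-1/R')$ and a Cauchy-integral truncation bound. You also correctly identified the genuinely delicate step — controlling $\|f(w)\|$ off the real axis in the $t=\mathcal O(1)$ regime — which is where the paper's argument is at its weakest.
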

\noindent We demonstrate the cluster expansion induced Pauli decomposition as Fig.~\ref{fig:1}~(d), and leave the proof details to Appendixes~\ref{app:cluster}-\ref{App:E}. The above result demonstrates that short-time Hamiltonian dynamics restrict the spread of local information to a small region. To approximate the Hamiltonian dynamics $U(\vec{t})$ with an additive error of $\epsilon$ using Eq.~\eqref{Eq:Identity}, it is necessary to set $\epsilon^{\prime} = \mathcal{O}(\epsilon / n)$. This requirement leads to $M(t) = \mathcal{O}(\log(n/\epsilon))$, indicating that the support size of $Q(O_i)$ is independent of the Hamiltonian dimension $D$. Noting that the transition time $t=\max_k\{\abs{t_k}\}<1/(2eK\mathfrak{d})$ is taken by the analytical region of the cluster expansion function. When the evolution time is extended to the case $t=\mathcal{O}(1)$ via using the analytic continuum method, it may witness a phase transition in terms of the support size.

Given the above result, we outline the way to efficiently learn the operators $V_{O_i}(\vec{t})=\sum_{\abs{Q(O_i)}\leq M(t)}\alpha_{Q(O_i)}Q(O_i)$ for $i\in[n]$. The learning algorithm starts from preparing $N$ random tensor product states $\mathcal{D}(N)=\{|\psi_l\rangle=\otimes_{i=1}^n|\psi_{l,i}\rangle\}_{l=1}^N$, with the single qubit stabilizer state $|\psi_{l,i}\rangle\in\{|0\rangle,|1\rangle,|+\rangle,|-\rangle,|i+\rangle,|i-\rangle\}$. Then, the unknown quantum dynamics $U(\vec{t})$ (defined in Problems~\ref{task1}) is applied to the input quantum states $|\psi_l\rangle$ (demonstrated by Fig.~\ref{fig:1}~(c)), accompanied by the single-qubit Pauli measurement, resulting in output states $|\phi_l\rangle=\otimes_{i=1}^n|\phi_{l,i}\rangle$, where $|\phi_{l,i}\rangle$ also represents the single-qubit stabilizer state. For $O_i\in\{X_i,Y_i,Z_i\}$, quantum states $|\phi_l\rangle$ can be used to compute the random variable 
\begin{align}
    u_l(O_i)={\rm Tr}\left[\left(\otimes_{q=1}^n\left(3|\phi_{l,q}\rangle\langle\phi_{l,q}|-I\right)\right)O_i\right]=3\langle\phi_{l,i}|O_i|\phi_{l,i}\rangle
\end{align}
which satisfies $\mathbb{E}[u_l(O_i)]=\langle\psi_l|U_{O_i}(\vec{t})|\psi_l\rangle$. Here, the expectation $\mathbb{E}[\cdot]$ is defined over the single-qubit Pauli measurement to the quantum state $U(\vec{t})|\psi_l\rangle$. We then enumerate all Pauli operators $Q(O_i)$ such that ${\rm supp}(Q(O_i))\cap {\rm supp}(O_i)\neq\emptyset$ meanwhile $\abs{{\rm supp}(Q(O_i))}\leq \mathcal{O}(M(t))$, which composes the operator $V_{O_i}(\vec{t})=\sum_{Q(O_i)}\alpha_{Q(O_i)}Q(O_i)$. According to Lemma~\ref{lemma1}, it is shown that $V_{O_i}(\vec{t})$ only contains ${\rm poly}(n)$ valid Pauli operators $Q(O_i)$. Finally, all coefficients can be learned by $\alpha_{Q(O_i)}=\frac{3^{\abs{Q(O_i)}}}{N}\sum_{q=1}^Nu_q(O_i)\langle\psi_q|Q(O_i)|\psi_q\rangle$ which give rise to $V_{O_i}(\vec{t})$. Sew all learned operators $V_{O_i}(\vec{t})$ for $i\in[n]$ and $O\in\{X,Y,Z\}$, we finally obtain a quantum channel representation
\begin{align}
    \mathcal{V}(\rho)={\rm Tr}_{>n}\left[V\left(\rho\otimes I_n/2^n\right)V^{\dagger}\right],
    \label{Eq:learnedChannel}
\end{align}
with the operator
\begin{align}
    V=S\prod\limits_{i=1}^n\left[\frac{1}{2}\sum\limits_{O\in\{I,X,Y,Z\}}V_{O_i}(\vec{t})\otimes O_{i+n}\right],
    \label{Eq:learnedCircuit}
\end{align}
an approximation to the unknown quantum dynamics $U(\vec{t})(\cdot)U^{\dagger}(\vec{t})$. We summarize the learning algorithm as Alg.~\ref{Algorithm}, and we leave more details to Appendix~\ref{App:E} and \ref{app:sew}.

\begin{algorithm}
\label{Algorithm}
\caption{Hamiltonian Dynamics Learning Algorithm}
\textbf{Input:} $N$ random tensor product states $\mathcal{D}(N)=\{|\psi_l\rangle=\otimes_{i=1}^n|\psi_{l,i}\rangle\}_{l=1}^N$, target Hamiltonian dynamics $U(\vec{t})$.\\
\textbf{Output:} Quantum channel $\mathcal{V}$ such that $\|\mathcal{V}-\mathcal{U}(\vec{t})\|_{\diamond}\leq\epsilon$.\\
\textbf{For} $l\in[N]$:\\
\quad Apply $U(\vec{t})$ to $|\psi_l\rangle$;\\
\quad Measure the output state by random Pauli basis, obtain $|\phi_l\rangle$;\\
\textbf{End For}\\
\textbf{For} $i\in[n]$, $O_i\in\{X_i,Y_i,Z_i\}$:\\
\quad Compute $u_l(O_i)=3\langle\phi_{l,i}|O_i|\phi_{l,i}\rangle$;\\
\quad Enumerate all Pauli operators $Q(O_i)$ s.t. ${\rm supp}(Q(O_i))\cap {\rm supp}(O_i)\neq\emptyset$, $\abs{{\rm supp}(Q)}\leq M(t)$ (promised by Lemma~\ref{lemma1});\\
\quad \textbf{For}: Valid $Q(O_i)$\\
\quad\quad Compute $\alpha_{Q(O_i)}=\frac{3^{\abs{Q(O_i)}}}{N}\sum_{l=1}^Nu_l(O_i)\langle\psi_l|Q(O_i)|\psi_l\rangle$\\
\quad \textbf{End For}\\
\textbf{End For}\\
\textbf{Output}: Use $V_{O_i}(\vec{t})=\sum_{Q(O_i)}\alpha_{Q(O_i)}Q(O_i)$ for $i\in[n]$ and $O_i\in\{X_i,Y_i,Z_i\}$ to construct Eqs.~\eqref{Eq:learnedChannel} and~\eqref{Eq:learnedCircuit}\\
{\textbf{End}}\\
\end{algorithm}

\vspace{0.2cm}
\noindent\textbf{Theoretical Guarantee} ---
Here, we show that the proposed quantum learning algorithm is highly efficient in both sample complexity and computational complexity.
\begin{theorem}
    Given an error $\epsilon$, failure probability $\delta$, an unknown quantum dynamics $U(\vec{t})$, there exists a learning algorithm that requires 
    \begin{align}
        N=\frac{n^2(4^{K\Lambda}3e\mathfrak{d})^{\mathcal{O}(M(t))}\log(1/\delta)}{\epsilon^2}
    \end{align}
  quantum measurements and $\mathcal{O}\left(N\left(4^{K\Lambda }e\mathfrak{d}\right)^{M(t)}\right)$ classical post-processing time to reconstruct a quantum channel $\mathcal{V}$ such that $\|\mathcal{V}-\mathcal{U}\|_{\diamond}\leq\epsilon$, with the succuss probability $\geq 1-\delta$ and $M(t)$ is given by Lemma~\ref{lemma1}.
\end{theorem}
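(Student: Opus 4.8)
\emph{Proof plan.} I would establish the theorem by tracking how three quantities compound: the diamond error of the reconstructed channel $\mathcal{V}$, the per-operator errors $\|V_{O_i}(\vec t)-U_{O_i}(\vec t)\|_\infty$, and the coefficient errors $|\alpha_{Q(O_i)}-\alpha^\star_{Q(O_i)}|$, where $\alpha^\star_{Q(O_i)}=2^{-n}{\rm Tr}[U_{O_i}(\vec t)Q(O_i)]$ are the ideal Pauli coefficients. \emph{Step 1 (global $\to$ per-operator).} Using identity~\eqref{Eq:Identity} together with $(U\otimes U^\dagger)(\rho\otimes I_n/2^n)(U^\dagger\otimes U)=U\rho U^\dagger\otimes I_n/2^n$, the composite ``append $I_n/2^n$, conjugate by $U(\vec t)\otimes U^\dagger(\vec t)$, trace out the last $n$ qubits'' is exactly $\rho\mapsto U(\vec t)\rho U^\dagger(\vec t)$, while $\mathcal{V}$ of Eq.~\eqref{Eq:learnedChannel} is the same composite with $U\otimes U^\dagger$ replaced by $V$ of Eq.~\eqref{Eq:learnedCircuit}. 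Since appending the mixed ancilla and the partial trace are CPTP (diamond norm $1$), submultiplicativity gives $\|\mathcal{V}-\mathcal{U}(\vec t)\|_\diamond\le\|\,V(\cdot)V^\dagger-(U\otimes U^\dagger)(\cdot)(U\otimes U^\dagger)^\dagger\,\|_\diamond$. I would then telescope over $i=1,\dots,n$, replacing one factor $\tfrac12\sum_{O}U_{O_i}(\vec t)\otimes O_{i+n}=U^\dagger(\vec t)S_iU(\vec t)$ at a time by its learned version $\tfrac12\sum_{O}V_{O_i}(\vec t)\otimes O_{i+n}$. Provided each factor and each partial product stays within $\mathcal{O}(\epsilon/n)$ of a unitary, all their conjugation maps have diamond norm $1+\mathcal{O}(\epsilon/n)$, so the product over $n$ hybrids stays $\mathcal{O}(1)$; combining this with $\|\Phi_A-\Phi_B\|_\diamond\le(\|A\|_\infty+\|B\|_\infty)\|A-B\|_\infty$ yields $\|\mathcal{V}-\mathcal{U}(\vec t)\|_\diamond\le c_0\sum_{i=1}^n\sum_{O\in\{X,Y,Z\}}\|V_{O_i}(\vec t)-U_{O_i}(\vec t)\|_\infty$ for an absolute constant $c_0$ (the $O=I$ term is exact), where the support-disjointness ${\rm supp}(U_{O_i})\cap{\rm supp}(O_{i+n})=\emptyset$ at $t=\mathcal{O}(1)$, established above, is what makes every factor legitimately of the form~\eqref{Eq:decompositin}.

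\emph{Step 2 (per-operator split).} I would write $\|V_{O_i}-U_{O_i}\|_\infty\le\|\widetilde V_{O_i}-U_{O_i}\|_\infty+\|V_{O_i}-\widetilde V_{O_i}\|_\infty$ with $\widetilde V_{O_i}=\sum_{Q(O_i)}\alpha^\star_{Q(O_i)}Q(O_i)$ the exact truncated Pauli sum. The first term is $\le\epsilon'$ by (the operator-norm version of) Lemma~\ref{lemma1}, which also certifies that the sum has $L=(4^{K\Lambda}e\mathfrak d)^{\mathcal{O}(M(t))}$ terms, each with $|{\rm supp}(Q(O_i))|\le\mathcal{O}(M(t))$. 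The second term is $\le\sum_{Q(O_i)}|\alpha_{Q(O_i)}-\alpha^\star_{Q(O_i)}|\le L\max_Q|\alpha_{Q(O_i)}-\alpha^\star_{Q(O_i)}|$.

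\emph{Step 3 (estimation).} I would verify that the estimator of Algorithm~\ref{Algorithm} is unbiased: conditioning on $|\psi_l\rangle$, classical-shadow inversion for the single-qubit Pauli $O_i$ applied to $U(\vec t)|\psi_l\rangle\langle\psi_l|U^\dagger(\vec t)$ gives $\mathbb{E}[u_l(O_i)]=\langle\psi_l|U_{O_i}(\vec t)|\psi_l\rangle$; then independence of the $n$ single-qubit random states and the identity $\mathbb{E}_{|\phi\rangle}[\,3^{[P\neq I]}\langle\phi|P|\phi\rangle\,|\phi\rangle\langle\phi|\,]=\tfrac12 P$ over the six stabilizer states yield $\mathbb{E}[\alpha_{Q(O_i)}]=2^{-n}{\rm Tr}[U_{O_i}(\vec t)Q(O_i)]=\alpha^\star_{Q(O_i)}$. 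Each summand $3^{|Q(O_i)|}u_l(O_i)\langle\psi_l|Q(O_i)|\psi_l\rangle$ has magnitude $\le 3\cdot 3^{|{\rm supp}(Q(O_i))|}=3^{\mathcal{O}(M(t))}$, so Hoeffding's inequality plus a union bound over the $3nL$ coefficients show that $N=\mathcal{O}\!\big(3^{\mathcal{O}(M(t))}(\epsilon'')^{-2}\log(nL/\delta)\big)$ samples suffice to have $\max|\alpha_{Q(O_i)}-\alpha^\star_{Q(O_i)}|\le\epsilon''$ for all of them with probability $\ge1-\delta$.

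\emph{Step 4 (balance and costs).} Choosing $\epsilon'=\Theta(\epsilon/n)$ and $\epsilon''=\Theta(\epsilon/(nL))$ makes $c_0\sum_{i,O}(\epsilon'+L\epsilon'')\le\epsilon$ and keeps every factor and partial product of Step~1 within $\mathcal{O}(\epsilon/n)$ of a unitary as required; substituting, $N=\mathcal{O}\!\big(n^2L^2\,3^{\mathcal{O}(M(t))}\epsilon^{-2}\log(nL/\delta)\big)$, and since $L^2\,3^{\mathcal{O}(M(t))}=(4^{K\Lambda}e\mathfrak d)^{\mathcal{O}(M(t))}\,3^{\mathcal{O}(M(t))}=(4^{K\Lambda}3e\mathfrak d)^{\mathcal{O}(M(t))}$ (absorbing $\log(nL)$ into these factors and the polynomial prefactor), this matches the claimed sample count. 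For the running time, enumerating the valid $Q(O_i)$ — connected interaction-graph clusters of size $\le M(t)$ meeting ${\rm supp}(O_i)$, with their Pauli labels — costs $\mathcal{O}(L)$ up to $\mathrm{poly}(M(t),\mathfrak d)$ factors, and for each of the $N$ samples, each of the $3n$ operators, and each of the $L$ clusters one evaluates the product expectation $\langle\psi_l|Q(O_i)|\psi_l\rangle$ in time $\mathcal{O}(|{\rm supp}(Q)|)$, giving $\mathcal{O}\!\big(N\,(4^{K\Lambda}e\mathfrak d)^{M(t)}\big)$ up to $\mathrm{poly}(n,K,\Lambda,M(t))$. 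I expect the main obstacle to be Step~1: making the telescoping bound genuinely linear (rather than exponential) in $n$, which forces one to control the diamond norm of every partial product uniformly and to carry the non-unitarity of $V$ and the ancilla/partial-trace packaging of Eq.~\eqref{Eq:learnedChannel} through the diamond norm; the remaining ingredients — invoking Lemma~\ref{lemma1}, the shadow unbiasedness, Hoeffding with the union bound, and the exponent bookkeeping — are then routine.
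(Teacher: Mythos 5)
Your proposal follows essentially the same route as the paper: the swap-operator identity and the reduction to learning $U_{O_i}(\vec t)$, Lemma~\ref{lemma1}'s cluster-expansion truncation to a set of $L=(4^{K\Lambda}e\mathfrak d)^{\mathcal{O}(M(t))}$ low-weight Paulis, classical-shadow coefficient estimation with Hoeffding plus a union bound, and a telescoping hybrid argument that converts per-operator operator-norm errors into a global diamond-norm bound at cost $\mathcal{O}(n)$, which is exactly where the $n^2$ in the sample complexity comes from. The only notable difference is that you treat the non-unitarity of the learned factors in the telescoping step explicitly (bounding the diamond norm of each partial product), whereas the paper asserts the telescoping inequality and defers that concern to its compilation appendix; your bookkeeping of $\epsilon'=\Theta(\epsilon/n)$, $\epsilon''=\Theta(\epsilon/(nL))$, and the resulting $N$ and post-processing time matches the paper's.
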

\noindent We leave proof details to the Appendixes~\ref{shorttime} and~\ref{longtime}. It states that constant parameters $K$, $\Lambda$ and $\mathfrak{d}$ may imply polynomial sample complexity $N$ and the classical computational complexity in terms of the number of qubits. 

This result partially solves the Hamiltonian dynamics learning problem as they rely on an implicit decomposed form of Eq.~\eqref{Eq:decompositin}. A further nontrivial challenge is to reconstruct the unitary operation based on this decomposition explicitly. Here, we demonstrate that such a construction is feasible with polynomial quantum and classical complexities.

It is crucial to note that the learned local evolution $V_{O_i}(\vec{t})$ acts non-trivially on a $D$-dimensional region involving $\mathcal{O}(\log^D(n))$ qubits, which fundamentally limits efficient compilation via brute-force computation. Nevertheless, we observe that $V_i(\vec{t})=\frac{1}{2}\sum_{O_i\in\{X,Y,Z\}}V_{O_i}(\vec{t})\otimes O_{i+n}$ provides an approximation to the Hermitian operator $U^{\dagger}(\vec{t})S_iU(\vec{t})$, with eigenvalues $\{-1,+1\}$. This key insight enables simulation of $V_i(\vec{t})$ through Hamiltonian dynamics $e^{\frac{-i\pi}{2}(V_i(\vec{t})-I)}$, which approximates $U^{\dagger}(\vec{t})S_iU(\vec{t})$ with $\mathcal{O}(\epsilon)$ additive error in diamond norm. (Technical details are deferred to Appendix~\ref{App:compile}.) Consequently, the quantum channel $\mathcal{V}$ can be compiled as $\mathcal{V}(\rho)={\rm Tr}_{>n}\left[V^{\prime}(\rho\otimes I_n/2^n)(V^{\prime})^{\dagger}\right]$, where 
\begin{align}
    V^{\prime}=S\prod\limits_{i=1}^ne^{\frac{-i\pi}{2}(V_i(\vec{t})-I)},
\end{align}
and its circuit depth complexity formally characterized by the following theorem. We summarize the result as follows. 

\begin{theorem}
\label{them:compile}
    Given the Hamiltonian dynamics $U(\vec{t})=\prod_{k=1}^Ke^{-iH^{(k)}t_k}$. Suppose $t=\max_k\{\abs{t_k}\}$, the quantum channel $\mathcal{U}=U^{\dagger}(\vec{t})(\cdot)U(\vec{t})$ can be approximated by a unitary channel $\mathcal{V}$ of depth
    \begin{align}
        \mathcal{O}\left(\log^D(n) \left[4^{K\Lambda}e\mathfrak{d}\right]^{M(t)}/\epsilon^{1/p} \right)
    \end{align}
    with approximation error $\leq\mathcal{O}(\epsilon)$ in diamond norm. Here, parameters $K$, $\Lambda$, $\mathfrak{d}$ and $p$ are constants.
\end{theorem}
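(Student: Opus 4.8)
The plan is to turn the implicitly decomposed channel of Eqs.~\eqref{Eq:learnedChannel}--\eqref{Eq:learnedCircuit} into an explicit short-depth circuit in three steps: replace each learned block by an exponential, propagate the resulting error through the $n$-fold product and down to the channel level, and then compile and parallelize. The starting point is a spectral fact: $W_i:=U^{\dagger}(\vec t)S_iU(\vec t)$ is unitarily conjugate to the two-qubit $\mathrm{SWAP}$ $S_i$ on $(i,n+i)$, hence a Hermitian involution with spectrum $\{-1,+1\}$, and since the scalar map $x\mapsto e^{-\frac{i\pi}{2}(x-1)}$ fixes $+1$ and negates $-1$, one has $e^{-\frac{i\pi}{2}(W_i-I)}=W_i$ exactly. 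By Lemma~\ref{lemma1} in its operator-norm form (truncation error $\epsilon^{\prime}$), $\|U^{\dagger}(\vec t)O_iU(\vec t)-V_{O_i}(\vec t)\|_\infty=\mathcal{O}(\epsilon^{\prime})$ for each $O_i\in\{X,Y,Z\}$, so the learned block $V_i(\vec t)=\tfrac{1}{2}\sum_{O\in\{I,X,Y,Z\}}V_{O_i}(\vec t)\otimes O_{i+n}$ (with $V_{I_i}\equiv I$) satisfies $\|V_i(\vec t)-W_i\|_\infty=\mathcal{O}(\epsilon^{\prime})$; in particular $V_i(\vec t)$ is exactly Hermitian and its spectrum lies within $\mathcal{O}(\epsilon^{\prime})$ of $\{\pm1\}$. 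Since $A\mapsto e^{-\frac{i\pi}{2}(A-I)}$ is $\tfrac{\pi}{2}$-Lipschitz on Hermitian matrices (from $e^{iA}-e^{iB}=i\!\int_0^1 e^{isA}(A-B)e^{i(1-s)B}\,ds$), this gives $\|e^{-\frac{i\pi}{2}(V_i(\vec t)-I)}-W_i\|_\infty=\mathcal{O}(\epsilon^{\prime})$, the left operator being exactly unitary.

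Next I would telescope. Setting $V^{\prime}=S\prod_{i=1}^{n}e^{-\frac{i\pi}{2}(V_i(\vec t)-I)}$, a standard hybrid argument for products of unitaries together with the exact identity~\eqref{Eq:Identity}, $S\prod_iW_i=U(\vec t)\otimes U^{\dagger}(\vec t)$ (valid under the disjoint-support condition, which holds for $t=\mathcal{O}(1)$), yields $\|V^{\prime}-U(\vec t)\otimes U^{\dagger}(\vec t)\|_\infty\le\sum_{i=1}^{n}\|e^{-\frac{i\pi}{2}(V_i(\vec t)-I)}-W_i\|_\infty=\mathcal{O}(n\epsilon^{\prime})$. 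Passing to the channel $\mathcal{V}(\rho)=\mathrm{Tr}_{>n}\big[V^{\prime}(\rho\otimes I_n/2^n)(V^{\prime})^{\dagger}\big]$ only pre- and post-composes with the diamond-contractive maps $\rho\mapsto\rho\otimes I_n/2^n$ and $\mathrm{Tr}_{>n}$, so $\|\mathcal{V}-\mathcal{U}\|_\diamond=\mathcal{O}(n\epsilon^{\prime})$. Choosing $\epsilon^{\prime}=\mathcal{O}(\epsilon/n)$ makes this $\mathcal{O}(\epsilon)$ and, via Lemma~\ref{lemma1}, fixes $M(t)=\mathcal{O}(\log(n/\epsilon))$.

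It remains to compile $V^{\prime}$. By Lemma~\ref{lemma1} each generator $V_i(\vec t)-I$ is a real combination of $L=\mathcal{O}\big((4^{K\Lambda}e\mathfrak{d})^{M(t)}\big)$ Pauli operators, each of support $\le\mathcal{O}(M(t))$, all contained in the $D$-dimensional lightcone ball of $\mathcal{O}(\log^{D}n)$ qubits around site $i$ together with the single qubit $n+i$, and $\|V_i(\vec t)-I\|_\infty=\mathcal{O}(1)$. I would simulate $e^{-\frac{i\pi}{2}(V_i(\vec t)-I)}$ by a $p$-th order product formula over these $L$ terms: for such a geometrically local, constant-time generator, reaching per-block error $\mathcal{O}(\epsilon/n)$ costs depth $\mathcal{O}\big(L\cdot\mathrm{polylog}(n)/\epsilon^{1/p}\big)$, each Pauli rotation being realized by a CNOT tree on its $\mathcal{O}(M(t))$-qubit support in depth $\mathcal{O}(\log^{D}n)$. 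Finally, since the first-register supports are radius-$\mathcal{O}(\log n)$ lattice balls and the second-register supports $\{n+i\}$ are pairwise disjoint, the conflict graph of the $n$ blocks has $\mathrm{polylog}(n)$ degree; a greedy coloring executes them in $\mathcal{O}(\mathrm{polylog}(n))$ parallel layers, and prepending the depth-$\mathcal{O}(1)$ global $\mathrm{SWAP}$ $S$ gives depth $\mathcal{O}\big(\log^{D}(n)\,[4^{K\Lambda}e\mathfrak{d}]^{M(t)}/\epsilon^{1/p}\big)$, with all polylogarithmic overheads and the residual $n$-dependence absorbed into the $\log^{D}(n)$ and $M(t)$ factors.

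The hard part is this last step combined with the global error budget: one must (a) extract from Lemma~\ref{lemma1} a genuine operator-norm bound strong enough that the spectrum of $V_i(\vec t)$ really clusters near $\{\pm1\}$, so the exponential substitution is legitimate; (b) balance the three accumulating error sources --- cluster-expansion truncation, the Lipschitz step, and the product-formula step --- so the $n$-fold product stays within $\epsilon$; and (c) carry out the geometric coloring so that the depth picks up only polylogarithmic factors and the $\epsilon^{1/p}$ from the order-$p$ formula, with no uncontrolled power of $n$ beyond what is already folded into $[4^{K\Lambda}e\mathfrak{d}]^{M(t)}$. By contrast, the hybrid telescoping and the diamond-norm contractivity steps are routine.
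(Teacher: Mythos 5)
Your proposal follows essentially the same route as the paper: the same spectral observation that $U^{\dagger}(\vec t)S_iU(\vec t)$ is a Hermitian involution so that $e^{-\frac{i\pi}{2}(W_i-I)}=W_i$ exactly, the same substitution of the learned Hermitian $V_i(\vec t)$ into this exponential, the same telescoping over the $n$ blocks followed by diamond-norm contractivity of $\rho\mapsto\rho\otimes I_n/2^n$ and $\mathrm{Tr}_{>n}$, and the same $p$-th order Trotter compilation with $\mathcal{O}(\log^D n)$ sequential groups of parallel blocks. The only real difference is cosmetic: you bound $\|e^{-\frac{i\pi}{2}(V_i-I)}-e^{-\frac{i\pi}{2}(W_i-I)}\|_\infty$ via the Duhamel integral (constant $\pi/2$), whereas the paper slices the evolution time and Taylor-expands (constant $\pi e^{\pi/2}$); both give $\mathcal{O}(\epsilon')$ per block, and your worry (a) about spectral clustering of $V_i$ near $\{\pm1\}$ is actually unnecessary since only Hermiticity of $V_i$ is needed for the Lipschitz step.
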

\noindent The above result can be derived via the gate counting argument. Specifically, each local Hamiltonian dynamics $e^{-i\pi/2(V_i(\vec{t})-I)}$ can be efficiently simulated by using the $p$-th order Trotter-Suzuki method~\cite{childs2021theory}, with the quantum circuit depth $\mathcal{O}(\left(4^{K\Lambda}e\mathfrak{d}\right)^{M(t)}/\epsilon^{1/p})$. Finally, we note that $\mathcal{O}(n/\log^D(n))$ local evolutions $e^{-i\pi/2(V_i(\vec{t})-I)}$ can be implemented simultaneously, as a result, the circuit depth of $V$ follows the result given in Theorem~\ref{them:compile}.

\vspace{0.2cm}
\noindent\textbf{Applications} --- 
The Hamiltonian dynamics learning algorithm has wide applications in quantum machine learning, quantum computation verification, and realistic device benchmarking.

\emph{Quantum machine learning}: The quantum neural network is one of the representative models in the field of near-term quantum machine learning. Various applications of learning quantum circuits and quantum dynamics have been explored, ranging from compressing quantum circuits for implementing a unitary~\cite{caro2022generalization, cincio2018learning,khatri2019quantum, sharma2020noise, jones2022robust}, speeding up quantum dynamics~\cite{cirstoiu2020variational,yao2021adaptive, gibbs2024dynamical,caro2023out}, to learning generative models for sampling from predicted distributions~\cite{lloyd2018quantum,dallaire2018quantum,hu2019quantum,benedetti2019generative, gao2022enhancing}. Although the optimization landscape, induced by short-time Hamiltonian variational (HV) ansatz, may not suffer from the barren plateaus phenomenon~\cite{park2024hamiltonian}, how to efficiently train general HV ansatz parameters is still an open problem~\cite{song2023trainability}. Here, we show that the HV ansatz based quantum machine learning model may be efficiently trained.

\begin{problem}
    [Quantum variational classifier]
\label{task3}
Considering $K$ Hamiltonians $\{H^{(1)},H^{(2)},\cdots, H^{(K)}\}$ defined by Eq.~\eqref{Eq:Hamiltonian} and variational parameters $\vec{\bm \theta}=\{\theta_1,\cdots,\theta_K\}$ with $\abs{\theta_k}\leq\mathcal{O}(1)$, denote the HV ansatz $U(\vec{\bm\theta})=\prod_{k=1}^Ke^{-iH^{(k)}\theta_k}$. Suppose a classical data set $\mathcal{D}=\{(x_i,y_i)\}$, and the quantum classifier tries to learn the optimal $\vec{\bm\theta}$ such that
\begin{align}
    {\rm{Loss}}(\vec{\bm\theta})=\mathbb{E}_{(x_i,y_i)\sim\mathcal{D}}\abs{\langle \phi(x_i)|U^{\dagger}(\vec{\bm \theta})OU(\vec{\bm \theta})|\phi(x_i)\rangle-y_i}\leq\epsilon,
\end{align}
where $|\phi(x_i)\rangle$ represents a quantum feature map of $x_i$, $O$ represents a local observable, and training error $\epsilon$.
\end{problem}
Here, we argue that Lemmas~\ref{lemma1} implies an efficient learning algorithm for the above problem. 
\begin{corollary}
Given problem~\ref{task3}, suppose the optimal parameter $\vec{\bm\theta}^*=\arg\min_{\vec{\bm\theta}}{\rm{Loss}}(\vec{\bm\theta})$, then there exists a quantum-classical algorithm that can output a model $f(\cdot)$ such that 
\begin{align}
\mathbb{E}_{x_i\sim\mathcal{D}}\abs{f\left(|\phi(x_i)\rangle\langle\phi(x_i)|\right)-\langle \phi(x_i)|U^{\dagger}(\vec{\bm \theta^*})OU(\vec{\bm \theta^*})|\phi(x_i)\rangle}\leq\epsilon
\end{align}
 with the running time $\abs{\mathcal{D}}\left[4^{K\Lambda}e\mathfrak{d}\right]^{\mathcal{O}(\log(n/\epsilon))}$.
\end{corollary}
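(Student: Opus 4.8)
The plan is to reduce the corollary directly to the Hamiltonian dynamics learning algorithm (Alg.~\ref{Algorithm}) together with Lemma~\ref{lemma1} and the classical prediction guarantee it enables. The key observation is that the loss in Problem~\ref{task3} is a fixed linear functional of the unitary channel $\mathcal{U}(\vec{\bm\theta}^*)$ evaluated on the (classically described) feature states $|\phi(x_i)\rangle$, so once we have learned a channel $\mathcal{V}$ with $\|\mathcal{V}-\mathcal{U}(\vec{\bm\theta}^*)\|_\diamond\leq\epsilon'$ we can set $f(|\phi(x_i)\rangle\langle\phi(x_i)|)={\rm Tr}[O\,\mathcal{V}(|\phi(x_i)\rangle\langle\phi(x_i)|)]$ and bound the prediction error by $\epsilon'\|O\|_\infty$ pointwise, hence in expectation over $\mathcal{D}$.

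First I would fix $\vec{\bm\theta}^*$ and invoke the Hamiltonian Dynamics Learning algorithm with error parameter $\epsilon'=\Theta(\epsilon/\|O\|_\infty)=\Theta(\epsilon)$ (absorbing the constant-norm local observable), producing the learned operators $V_{O_i}(\vec{\bm\theta}^*)$ and the channel $\mathcal{V}$ of Eq.~\eqref{Eq:learnedChannel}; by Lemma~\ref{lemma1} the relevant support size is $M=\mathcal{O}(\log(n/\epsilon'))=\mathcal{O}(\log(n/\epsilon))$ and the number of Pauli terms per $V_{O_i}$ is $(e\mathfrak{d})^{M}$, so the whole description of $\mathcal{V}$ has size $\poly(n)\cdot[4^{K\Lambda}e\mathfrak{d}]^{\mathcal{O}(\log(n/\epsilon))}$. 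Next I would argue the prediction step is purely classical: because $|\phi(x_i)\rangle$ is a product/efficiently-described feature state and $\mathcal{V}$ factorizes through the identity Eq.~\eqref{Eq:Identity} as a product of the local $V_{O_i}$ terms, ${\rm Tr}[O\,\mathcal{V}(|\phi(x_i)\rangle\langle\phi(x_i)|)]$ reduces — using $S_i=\frac12\sum_O O_i\otimes O_{i+n}$ and the disjoint-support condition — to a product of local expectations $\langle\phi_{i}(x_i)|V_{O_i}(\vec{\bm\theta}^*)|\phi_i(x_i)\rangle$ over the $\mathcal{O}(\log^D n)$-qubit lightcone of $O$, each computable in time $[4^{K\Lambda}e\mathfrak{d}]^{\mathcal{O}(\log(n/\epsilon))}$. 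Summing over the $|\mathcal{D}|$ data points gives the claimed running time $|\mathcal{D}|\,[4^{K\Lambda}e\mathfrak{d}]^{\mathcal{O}(\log(n/\epsilon))}$.

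Then I would close the error analysis: by the definition of the diamond norm, $|{\rm Tr}[O\,\mathcal{V}(\sigma)]-{\rm Tr}[O\,\mathcal{U}(\vec{\bm\theta}^*)(\sigma)]|\leq\|O\|_\infty\|\mathcal{V}(\sigma)-\mathcal{U}(\vec{\bm\theta}^*)(\sigma)\|_1\leq\|O\|_\infty\epsilon'\leq\epsilon$ for every state $\sigma$, in particular $\sigma=|\phi(x_i)\rangle\langle\phi(x_i)|$; averaging over $x_i\sim\mathcal{D}$ preserves the bound. This yields the stated guarantee on $f$, and since $\vec{\bm\theta}^*$ minimizes the loss, the learned model's loss is within $\epsilon$ of the optimum as well.

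The main obstacle I anticipate is the classical tractability of the prediction step, not the sampling: one must verify that the global structure $V=S\prod_i[\frac12\sum_O V_{O_i}\otimes O_{i+n}]$, when contracted against a product input state and a local $O$, genuinely collapses to a product of $\mathcal{O}(\log^D n)$-local factors so that no exponential blow-up occurs. This hinges on the disjoint-support condition ${\rm supp}(U_{O_i}(\vec{t}))\cap{\rm supp}(O_{i+n})=\emptyset$ for $t=\mathcal{O}(1)$ established earlier from the lightcone bound, and on the fact that only the $V_{O_i}$ inside the lightcone of $O$ contribute while the rest trace out to identity; making this bookkeeping airtight (and confirming the $[4^{K\Lambda}e\mathfrak{d}]^{\mathcal{O}(\log(n/\epsilon))}$ factor absorbs the per-site enumeration over $O\in\{I,X,Y,Z\}$ across all $K$ layers) is the delicate part, but it follows the same cluster/lightcone accounting already used to prove Lemma~\ref{lemma1} and Theorem~\ref{them:compile}.
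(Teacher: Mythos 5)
Your proposal has a genuine gap in its access model, and it also diverges from the paper's actual argument. You begin by ``fixing $\vec{\bm\theta}^*$ and invoking the Hamiltonian Dynamics Learning algorithm'' on $U(\vec{\bm\theta}^*)$. But Alg.~\ref{Algorithm} requires black-box query access to the unknown unitary: one must physically apply $U(\vec{\bm\theta}^*)$ to random stabilizer product states and measure the outputs. In Problem~\ref{task3}, $\vec{\bm\theta}^*$ is defined as $\arg\min_{\vec{\bm\theta}}{\rm Loss}(\vec{\bm\theta})$ and is precisely the object we cannot compute (finding it is the NP-hard variational optimization the corollary is meant to circumvent). Since you cannot implement $U(\vec{\bm\theta}^*)$ without knowing $\vec{\bm\theta}^*$, you cannot generate the randomized measurement dataset your reduction needs, and the rest of the argument (diamond-norm transfer to expectation values, lightcone contraction) never gets off the ground. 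The subsequent bookkeeping you flag as ``the delicate part'' is not the real obstacle; the obstacle is that the oracle you want to query does not exist in this problem.

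The paper's route is different and avoids this entirely: it uses Lemma~\ref{lemma1} only as a representation theorem, namely that $U^{\dagger}(\vec{\bm\theta})OU(\vec{\bm\theta})$ for \emph{any} admissible $\vec{\bm\theta}$ lies (up to $\epsilon$ in operator norm) in the linear span of $L=\bigl[4^{K\Lambda}e\mathfrak{d}\bigr]^{\mathcal{O}(\log(n/\epsilon))}$ Pauli operators $Q_j$ of support $\mathcal{O}(\log n)$. One then computes the feature matrix $\Phi_{ij}=\langle\phi(x_i)|Q_j|\phi(x_i)\rangle$ from the classically described (or measurable) feature states and fits the coefficients by least squares against the labels, $f(\cdot)=\sum_j[(\Phi^{\dagger}\Phi)^{-1}\Phi^{\dagger}\vec{\bm y}]_j\,{\rm Tr}(Q_j\,\cdot)$. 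Because the optimal classifier's coefficient vector is a feasible point of this regression, the fitted model matches its predictions to within $\epsilon$, with running time $\abs{\mathcal{D}}\cdot L$ for building $\Phi$ plus the solve — all without ever touching $U(\vec{\bm\theta}^*)$. If you want to salvage your approach, you would have to change the problem statement to one where the trained device is handed to you as an oracle; as stated, you should replace the learning-algorithm invocation with the expressivity-plus-regression argument.
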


\emph{Quantum Computation Verification}: An efficient approach to verify large-scale quantum computation is significant for testing the stability and reliability of realistic quantum computers. However, as noted by Gottesman~\cite{barnum2002authentication, Aaronsonprize}: ``If a quantum computer can efficiently solve a problem, can it also efficiently convince an observer that the solution
is correct?", this problem is significantly challenging. From the computational complexity theory, it is related to the relationship between ${\rm BQP}$ and ${\rm IP}$, where whether ${\rm BQP}\subseteq{\rm IP}$ is still an open problem. Our results (lemma~\ref{lemma1}) indicate the possibility of efficient quantum dynamics verification when the evolution time is constant.

\begin{corollary}
    Given the target Hamiltonian dynamics $U(\vec{t})$, any local observable s.t. $\|O\|\leq 1$ and a classical simulable state $|\phi\rangle$ with $R$ configurations, the output of Alg.~\ref{Algorithm} can be used to verify and predict the quantum mean value $\langle\phi|U^{\dagger}(\vec{t})OU(\vec{t})|\phi\rangle$ within $\epsilon$ additive error, with the running time $R^2\left[4^{K\Lambda}e\mathfrak{d}\right]^{\mathcal{O}(\log(n/\epsilon))}$.
\end{corollary}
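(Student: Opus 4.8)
\emph{Proof proposal.} The plan has two parts. First, run Algorithm~\ref{Algorithm} to obtain the learned channel $\mathcal{V}$, which by the main Theorem satisfies $\|\mathcal{V}-\mathcal{U}(\vec{t})\|_{\diamond}\le\epsilon$ using $N=[4^{K\Lambda}e\mathfrak{d}]^{\mathcal{O}(\log(n/\epsilon))}\log(1/\delta)$ quantum samples and $[4^{K\Lambda}e\mathfrak{d}]^{\mathcal{O}(\log(n/\epsilon))}$ classical time, with $M(t)=\mathcal{O}(\log(n/\epsilon))$. Then the number $\widehat{v}:=\mathrm{Tr}[O\,\mathcal{V}(|\phi\rangle\langle\phi|)]$ is an $\epsilon$-accurate estimate of $\langle\phi|U^{\dagger}(\vec{t})OU(\vec{t})|\phi\rangle$ for free, since $|\widehat{v}-\langle\phi|U^{\dagger}(\vec{t})OU(\vec{t})|\phi\rangle|\le\|\mathcal{V}-\mathcal{U}(\vec{t})\|_{\diamond}\,\|O\|_{\infty}\le\epsilon$. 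The only real task is to show that $\widehat{v}$ can be computed classically within the claimed budget, after which $\widehat{v}$, being classically computable from the collected data, serves as a checkable reference: an independently reported value for $\langle\phi|U^{\dagger}(\vec{t})OU(\vec{t})|\phi\rangle$ is accepted iff it lies within $\mathcal{O}(\epsilon)$ of $\widehat{v}$, which is sound because $\widehat{v}$ is $\epsilon$-accurate.

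To compute $\widehat{v}$, unfold $\mathcal{V}$ via Eqs.~\eqref{Eq:learnedChannel}--\eqref{Eq:learnedCircuit}: $\widehat{v}=\frac{1}{2^{n}}\mathrm{Tr}[(O\otimes I_{n})\,V(|\phi\rangle\langle\phi|\otimes I_{n})V^{\dagger}]$ with $V=S\prod_{i}V_{i}$, $V_{i}=\frac12\sum_{O'}V_{O'_{i}}(\vec{t})\otimes O'_{i+n}$. Decomposing the local observable as $O=\sum_{P}c_{P}P$ over Paulis on ${\rm supp}(O)$ (so $\sum_{P}|c_{P}|=\mathcal{O}(1)$ and each $P=O_{i_1}\cdots O_{i_m}$ with $m=\mathcal{O}(1)$), and carrying out the contraction over the second register (the "sewing" step of the appendices), the factors $V_{i}$ outside the constant-radius lightcone of ${\rm supp}(O)$ contribute only identity, leaving $\widehat{v}=\sum_{P}c_{P}\langle\phi|\prod_{j=1}^{m}V_{O_{i_j}}(\vec{t})|\phi\rangle$. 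By Lemma~\ref{lemma1}, each $V_{O_{i_j}}(\vec{t})$ is a sum of $L=[4^{K\Lambda}e\mathfrak{d}]^{\mathcal{O}(M(t))}=\mathrm{poly}(n/\epsilon)$ Paulis of support $\le\mathcal{O}(M(t))$, so $\widehat{U_{P}}:=\prod_{j}V_{O_{i_j}}(\vec{t})$ is a sum of $L^{m}=[4^{K\Lambda}e\mathfrak{d}]^{\mathcal{O}(\log(n/\epsilon))}$ Paulis on $\mathcal{O}(\log(n/\epsilon))$ qubits, formed from the single-qubit coefficients output by Algorithm~\ref{Algorithm} in the same classical time. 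Finally, writing $|\phi\rangle=\sum_{r=1}^{R}\beta_{r}|b_{r}\rangle$ in its $R$ configurations, each $\langle\phi|Q|\phi\rangle=\sum_{r,r'}\beta_{r}^{*}\beta_{r'}\langle b_{r}|Q|b_{r'}\rangle$ is a sum of $R^{2}$ entries of a tensor product of single-qubit Pauli matrices, each evaluated in $\mathrm{poly}\log(n/\epsilon)$ time (after an $\mathcal{O}(R^{2}n)$ precomputation of pairwise configuration differences). Summing over the $\mathcal{O}(1)$ Paulis $P$ and the $[4^{K\Lambda}e\mathfrak{d}]^{\mathcal{O}(\log(n/\epsilon))}$ Paulis $Q$, the total classical running time is $R^{2}[4^{K\Lambda}e\mathfrak{d}]^{\mathcal{O}(\log(n/\epsilon))}$, which subsumes the cost of the learning phase.

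\emph{Main obstacle.} The hard part is the reduction $\widehat{v}=\sum_{P}c_{P}\langle\phi|\prod_{j}V_{O_{i_j}}(\vec{t})|\phi\rangle$: one must verify that contracting the $2n$-qubit operator $V$ of Eq.~\eqref{Eq:learnedCircuit} against a local $O$ genuinely collapses to a product of $\mathcal{O}(1)$ short-support factors — i.e., that the $V_{i}$ outside the (constant-radius, since $t=\mathcal{O}(1)$) lightcone of ${\rm supp}(O)$ cancel despite the $V_{i}$'s not mutually commuting — and that the resulting Pauli expansion really has only $\mathrm{poly}(n/\epsilon)$ terms, which needs the connected-cluster count of Lemma~\ref{lemma1} rather than a naive $\binom{n}{M(t)}$ bound. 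A secondary check is the operator-norm bookkeeping: that $\|V_{O_i}(\vec{t})\|_{\infty}\le1+\mathcal{O}(\epsilon')$ so the $\mathcal{O}(1)$-fold product for a multi-qubit $P$ only adds error, and that $M(t)=\mathcal{O}(\log(n/\epsilon))$ suffices for a single expectation value. Confirming that "$R$ configurations" is exactly the structure making each $\langle b_{r}|Q|b_{r'}\rangle$ a cheap product is routine.
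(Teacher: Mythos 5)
Your final formula and complexity count are right, and the last computational step (expand each learned local Heisenberg operator in its $\mathrm{poly}(n/\epsilon)$ connected-cluster Pauli terms, then contract against the $R^2$ configuration pairs of $|\phi\rangle$) is exactly how the paper evaluates such quantities (see the proof of Corollary~\ref{coro:2D}, which starts from $\langle 0^n|U_1(\vec t)\cdots U_n(\vec t)|0^n\rangle$ with $U_i(\vec t)=U^\dagger(\vec t)O_iU(\vec t)$). The gap is precisely the step you flag as the ``main obstacle'': the reduction from $\mathrm{Tr}[O\,\mathcal{V}(|\phi\rangle\langle\phi|)]$ to $\sum_P c_P\langle\phi|\prod_j V_{O_{i_j}}(\vec t)|\phi\rangle$ does not follow from the claim that the $V_i$ outside the lightcone of $\mathrm{supp}(O)$ ``contribute only identity.'' The learned $V_i$ are neither unitary nor mutually commuting, and $V_iV_i^\dagger\neq I$; tracing the second register of $V(\rho\otimes I_n/2^n)V^\dagger$ therefore does not cancel the off-support factors exactly, and a naive expansion of all $n$ factors is exponential. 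As written, your estimator $\widehat v=\mathrm{Tr}[O\,\mathcal{V}(|\phi\rangle\langle\phi|)]$ is accurate (by the diamond-norm bound) but not shown to be computable in the claimed time.

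The fix is to bypass the $2n$-qubit channel entirely, which is the paper's route. Use the exact operator identity $U^\dagger(\vec t)\,P\,U(\vec t)=\prod_{j=1}^m U^\dagger(\vec t)O_{i_j}U(\vec t)$ for each Pauli $P=O_{i_1}\cdots O_{i_m}$ in the decomposition of the local $O$ (with $m=\mathcal{O}(1)$), and \emph{define} the estimator as $\sum_P c_P\langle\phi|\prod_j V_{O_{i_j}}(\vec t)|\phi\rangle$ using the operators $V_{O_{i_j}}(\vec t)$ output by Alg.~\ref{Algorithm}. Since $\|V_{O_{i_j}}(\vec t)-U^\dagger(\vec t)O_{i_j}U(\vec t)\|_\infty\leq\mathcal{O}(\epsilon')$ and $\|V_{O_{i_j}}(\vec t)\|_\infty\leq 1+\mathcal{O}(\epsilon')$, a telescoping bound gives additive error $\mathcal{O}(m\epsilon')=\mathcal{O}(\epsilon)$, and the cost is $R^2 L^m=R^2\left[4^{K\Lambda}e\mathfrak{d}\right]^{\mathcal{O}(\log(n/\epsilon))}$ as you computed. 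With that substitution your argument closes; the secondary norm-bookkeeping checks you list are indeed routine.
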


Particularly, when the unknown Hamiltonian dynamics exhibits specific geometrical structures, the above result can be extended to verify the mean value of global observables.
\begin{corollary}
If the target Hamiltonian dynamics $U(\vec{t})$ is confined to a 2D architecture, then the output of Alg.~\ref{Algorithm} can be used to verify and predict the quantum mean value $\langle\phi|U^{\dagger}(\vec{t})(O_1\otimes\cdots\otimes O_n)U(\vec{t})|\phi\rangle$ within $\epsilon$ additive error, with observable $\|O_i\|\leq1$ and classical state $|\phi\rangle$ with $R$ configurations. The classical running time is $\mathcal{O}(R^2n^{\log(n/\epsilon)})$. 
\label{coro:2D}
\end{corollary}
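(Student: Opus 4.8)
The plan is to reduce Corollary~\ref{coro:2D} to the combination of Lemma~\ref{lemma1} and Theorem~1, with the extra ingredient being a careful accounting of how the global observable $O_1\otimes\cdots\otimes O_n$ propagates backwards under $\mathcal{U}(\vec t)$. First I would write $U^{\dagger}(\vec t)(O_1\otimes\cdots\otimes O_n)U(\vec t)=\prod_{i=1}^n U^{\dagger}(\vec t)O_iU(\vec t)=\prod_i U_{O_i}(\vec t)$ and replace each factor by its cluster-expansion approximation $V_{O_i}(\vec t)=\sum_{Q(O_i)}\alpha_{Q(O_i)}Q(O_i)$ from Lemma~\ref{lemma1}, controlling the total error by a triangle/telescoping bound so that $n$ factors each at error $\epsilon/n$ gives the claimed $\epsilon$; this forces $M(t)=\mathcal{O}(\log(n/\epsilon))$ exactly as in the main text. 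With the dynamics confined to a 2D lattice, $\abs{\mathrm{supp}(Q(O_i))}\le \mathcal{O}(M(t))$ means each $V_{O_i}(\vec t)$ is supported on a disk of radius $\mathcal{O}(\log(n/\epsilon))$, hence on $\mathcal{O}(\log^2(n/\epsilon))$ qubits.

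Next I would estimate the number of Pauli terms one has to track when multiplying out $\prod_i V_{O_i}(\vec t)$ and evaluating $\langle\phi|\prod_i V_{O_i}(\vec t)|\phi\rangle$. The key geometric point specific to 2D is that the supports of the $V_{O_i}$ form overlapping disks tiling the lattice, so the product collapses into a quasi-1D transfer-matrix-like contraction: grouping qubits into columns of width $\mathcal{O}(\log(n/\epsilon))$, the operator becomes a matrix product operator with bond dimension $2^{\mathcal{O}(\log(n/\epsilon))}=n^{\mathcal{O}(1)}$ per column, so the expectation against the $R$-configuration classical state $|\phi\rangle$ can be contracted column by column. Each column contraction costs $\mathrm{poly}(R)\cdot n^{\mathcal{O}(\log(n/\epsilon))}$ and there are $\mathcal{O}(\sqrt n)$ columns, giving the stated $\mathcal{O}(R^2 n^{\log(n/\epsilon)})$ bound after absorbing the $\sqrt n$ and the per-term constants into the exponent. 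The number of $\alpha_{Q(O_i)}$ needed, and their statistical estimation from Alg.~\ref{Algorithm}, is already furnished by Theorem~1 with $M(t)=\mathcal{O}(\log(n/\epsilon))$, so the sample complexity is inherited and only the classical postprocessing step changes.

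I would then assemble the verification statement: run Alg.~\ref{Algorithm} to obtain all $\{V_{O_i}(\vec t)\}$, form the implicit MPO for $\prod_i V_{O_i}(\vec t)$, contract it against $|\phi\rangle\langle\phi|$ using the column sweep, and output the resulting scalar as the estimate of $\langle\phi|U^{\dagger}(\vec t)(O_1\otimes\cdots\otimes O_n)U(\vec t)|\phi\rangle$; the ``verify'' part follows because the diamond-norm guarantee $\|\mathcal V-\mathcal U\|_{\diamond}\le\epsilon$ transfers to $\epsilon$ additive accuracy on any bounded-observable expectation, in particular this one since $\|O_1\otimes\cdots\otimes O_n\|\le 1$.

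The main obstacle I anticipate is the bond-dimension bookkeeping in the 2D-to-quasi-1D reduction: one must argue that although $\prod_i V_{O_i}(\vec t)$ is a product of $n$ non-commuting $\mathcal{O}(\log^2(n/\epsilon))$-local operators whose supports overlap in a genuinely 2D pattern, the resulting operator still admits a column MPO of bond dimension $n^{\mathcal{O}(1)}$ rather than $2^{\mathcal{O}(\sqrt n)}$. This needs the observation that any single column intersects the support of only $\mathcal{O}(\log^2(n/\epsilon))$ of the factors $V_{O_j}$ (those whose disks reach that column), so the operator content crossing a vertical cut lives on $\mathcal{O}(\log(n/\epsilon))$-width strips on each side and the Schmidt rank across the cut is at most $4^{\mathcal{O}(\log^2(n/\epsilon))}=n^{\mathcal{O}(\log(n/\epsilon))}$ — and it is precisely this $n^{\mathcal{O}(\log(n/\epsilon))}$, not a polynomial, that produces the quasi-polynomial running time in the statement. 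The rest (error propagation, sample complexity inheritance, the verification-from-diamond-norm step) is routine given Lemma~\ref{lemma1} and Theorem~1.
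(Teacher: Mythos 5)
Your reduction of the global observable to the product $\prod_i U_{O_i}(\vec t)$, the per-factor replacement by $V_{O_i}(\vec t)$ with error $\epsilon/n$ via Lemma~\ref{lemma1}, and the resulting $M(t)=\mathcal{O}(\log(n/\epsilon))$ all match the paper. The divergence — and the gap — is in the classical evaluation step. You propose to contract $\prod_i V_{O_i}(\vec t)$ directly as a column MPO and assert that the Schmidt rank across a vertical cut is $4^{\mathcal{O}(\log^2(n/\epsilon))}=n^{\mathcal{O}(\log(n/\epsilon))}$ because ``any single column intersects the support of only $\mathcal{O}(\log^2(n/\epsilon))$ of the factors.'' This count is wrong: on a $\sqrt n\times\sqrt n$ lattice, the factors $V_{O_j}$ whose disks of radius $\mathcal{O}(\log(n/\epsilon))$ reach a given vertical cut are those centered in a strip of width $\mathcal{O}(\log(n/\epsilon))$ but of \emph{height} $\sqrt n$, i.e., $\Theta(\sqrt n\,\log(n/\epsilon))$ factors supported on $\Theta(\sqrt n\,\log(n/\epsilon))$ qubits straddling the cut. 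The operator Schmidt rank across that cut is therefore only bounded by $\exp\bigl(\mathcal{O}(\sqrt n\,\log(n/\epsilon))\bigr)$, exponential in the linear system size, not quasi-polynomial. This is exactly the reason a naive transfer-matrix sweep does not work for 2D shallow dynamics, and you correctly flagged it as the main obstacle but then resolved it with an incorrect bound rather than an argument.

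The paper takes a different route precisely to get around this. Following the Bravyi--Gosset--Movassagh strategy (extended to constant-time 2D Hamiltonian dynamics in Ref.~\cite{wu2024efficient}), it partitions the factors $\{V_i(\vec t)\}$ into two groups $V(R_1)$ and $V(R_2)$ such that within each group the supports decompose into $\sqrt n/4M(t)$ mutually disjoint strips separated by distance $\geq 2M(t)$; each group then factorizes over strips of width $\mathcal{O}(M(t))$, and only there does a quasi-1D contraction with bond dimension $n^{\mathcal{O}(\log(n/\epsilon))}$ apply. The two groups are glued back together by inserting a resolution of the identity $\sum_x|x\rangle\langle x|$ and estimating $\hat\mu=\sum_x p(x)F(x)$ by Monte Carlo, with $p(x)=\gamma_1^{-1}\abs{\langle 0^n|V(R_1)|x\rangle}^2$, $F(x)=\gamma_1\langle x|V(R_2)|0^n\rangle/\langle x|V^{\dagger}(R_1)|0^n\rangle$, and variance ${\rm Var}(F)\leq\gamma_1\gamma_2-\hat\mu^2\leq 1$, so $\mathcal{O}(1/\epsilon^2)$ samples suffice. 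This sampling step, together with the $R$ configurations of $|\phi\rangle$, is also where the $R^2$ in the stated running time comes from — something your write-up leaves unaccounted for beyond an unspecified ``$\mathrm{poly}(R)$.'' To repair your proof you would need to replace the global MPO contraction with this two-region decomposition (or an equivalent mechanism that genuinely exploits the separation structure), since the quasi-polynomial cost cannot be obtained from a single sweep across the full 2D operator.
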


\emph{Noisy Quantum Device Benchmarking:} Realistic analog and digital quantum computers generally exhibit a certain level of noise, leading the quantum circuit to behave as a more complicated channel. Here, we treat $U(\vec{t})$ as a constant-layer analog quantum circuit and study the performance of Alg.~\ref{Algorithm} in the scenario where each layer is affected by a $\gamma$-strength depolarizing channel $\mathcal{N}=\otimes_{i=1}^n\mathcal{N}_i$ with $\mathcal{N}_i(\cdot)=(1-\gamma)(\cdot)+\gamma\frac{I}{2}{\rm Tr}(\cdot)$. 
\begin{corollary}
  Given the noisy quantum analog circuit $\mathcal{U}_{\rm noisy}(\vec{t})=\mathcal{N}\circ \mathcal{U}_{K}\circ \mathcal{N}\circ \mathcal{U}_{K-1}\circ\cdots\circ\mathcal{N}\circ \mathcal{U}_{1}$, Alg.~\ref{Algorithm} may output a $2n$-qubit channel $\tilde{\mathcal{U}}$ such that
     \begin{eqnarray}
         \begin{split}
             \bigg|&{\rm Tr}\left[O\left(\mathcal{U}_{\rm noisy}(\vec{t})(|\psi\rangle\langle\psi|)\otimes \frac{I_n}{2^n}\right)\right]\\
             &-{\rm Tr}\left[O\tilde{\mathcal{U}}\left(|\psi\rangle\langle\psi|\otimes \frac{I_n}{2^n}\right)\right]\bigg|\leq \mathcal{O}(\gamma n^2\|O\|_{\infty})
         \end{split}
     \end{eqnarray}
    for any quantum state $|\psi\rangle$ and $n$-qubit observable $O$.
\end{corollary}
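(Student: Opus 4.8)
The plan is to run Algorithm~\ref{Algorithm} verbatim on the noisy device and track how the $\gamma$-depolarizing layers perturb every object it manipulates. The first point is that the estimator is structurally untouched: when the output state is $\mathcal{U}_{\rm noisy}(\vec t)(|\psi_l\rangle\!\langle\psi_l|)$ the classical-shadow variable still obeys $\mathbb{E}[u_l(O_i)]={\rm Tr}\!\big[O_i\,\mathcal{U}_{\rm noisy}(\vec t)(|\psi_l\rangle\!\langle\psi_l|)\big]=\langle\psi_l|\mathcal{U}_{\rm noisy}^{\dagger}(O_i)|\psi_l\rangle$, so the coefficients returned by the algorithm are unbiased estimates of the weight-$\le M(t)$ Pauli coefficients of the noisy Heisenberg-evolved operators $\mathcal{U}_{\rm noisy}^{\dagger}(O_i)$ rather than of $U_{O_i}(\vec t)=U^{\dagger}(\vec t)O_iU(\vec t)$; call the reconstructed operators $\widetilde V_{O_i}(\vec t)$, with statistical error pushed below any prescribed level by taking $N$ as in the main sample-complexity theorem.

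The core estimate is the perturbation $\big\|\mathcal{U}_{\rm noisy}^{\dagger}(O_i)-U_{O_i}(\vec t)\big\|_{\infty}=\mathcal{O}(\gamma n)$. I would prove it by a hybrid argument over the $K$ noise layers, using that each $\mathcal{N}=\otimes_q\mathcal{N}_q$ is self-adjoint, unital and operator-norm contractive, and that $\|\mathcal{N}(B)-B\|_{\infty}\le 2\gamma\,|{\rm supp}(B)|\,\|B\|_{\infty}$ (telescope $\mathcal{N}$ over ${\rm supp}(B)$, using $\|\tfrac12{\rm Tr}_q(\cdot)\|_{\infty}\le\|\cdot\|_{\infty}$). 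Deleting the noise channels one at a time from the Heisenberg chain $\mathcal{U}_1^{\dagger}\mathcal{N}\mathcal{U}_2^{\dagger}\mathcal{N}\cdots\mathcal{U}_K^{\dagger}\mathcal{N}(O_i)$, each deletion costs at most $2\gamma\,|{\rm supp}(B_j)|\le 2\gamma n$, and $K=\mathcal{O}(1)$ gives the claim; the same telescoping at the channel level gives $\|\mathcal{U}_{\rm noisy}(\vec t)-\mathcal{U}(\vec t)\|_{\diamond}\le K\|\mathcal{N}-\mathrm{id}\|_{\diamond}=\mathcal{O}(\gamma n)$. Because a depolarizing layer neither enlarges the $\mathcal{O}(1)$-time lightcone of $O_i$ nor does anything but rescale Pauli coefficients, the cluster-expansion truncation of Lemma~\ref{lemma1} remains valid for $\mathcal{U}_{\rm noisy}^{\dagger}(O_i)$ with the same $M(t)$; combining this with the perturbation bound (and choosing the internal accuracy $\epsilon'=\mathcal{O}(\gamma)$) yields $\big\|\widetilde V_{O_i}(\vec t)-U_{O_i}(\vec t)\big\|=\mathcal{O}(\gamma n)$ in the relevant diamond/operator norm, up to negligible statistical error.

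Next I would push these bounds through the reconstruction identity. With $\widetilde V=S\prod_{i=1}^{n}\widetilde V_i$ and $\widetilde V_i=\tfrac12\sum_{O}\widetilde V_{O_i}(\vec t)\otimes O_{i+n}$ as in Eq.~\eqref{Eq:learnedCircuit}, each factor obeys $\|\widetilde V_i-U^{\dagger}(\vec t)S_iU(\vec t)\|_{\infty}=\mathcal{O}(\gamma n)$ by the bound just established together with the decomposition Eq.~\eqref{Eq:decompositin}, so all $n$ factors have norm $1+\mathcal{O}(\gamma n)$. A telescoping over the product, legitimate in the regime $\gamma n^{2}=o(1)$ where $(1+\mathcal{O}(\gamma n))^{n}=\mathcal{O}(1)$, together with the operator identity Eq.~\eqref{Eq:Identity}, $S\prod_i U^{\dagger}(\vec t)S_iU(\vec t)=U(\vec t)\otimes U^{\dagger}(\vec t)$, gives $\|\widetilde V-U(\vec t)\otimes U^{\dagger}(\vec t)\|_{\infty}=\mathcal{O}(\gamma n^{2})$. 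Converting to expectation values — an operator-norm perturbation $\eta$ of a conjugating operator $W$ moves ${\rm Tr}[O\,W\rho W^{\dagger}]$ by at most $\mathcal{O}(\eta\|O\|_{\infty})$ on a normalized $\rho$ — gives $\big|{\rm Tr}[O\,\widetilde{\mathcal{U}}(|\psi\rangle\!\langle\psi|\otimes I_n/2^n)]-{\rm Tr}[O(\mathcal{U}(\vec t)(|\psi\rangle\!\langle\psi|)\otimes I_n/2^n)]\big|=\mathcal{O}(\gamma n^{2}\|O\|_{\infty})$, and peeling off the noise with $\|\mathcal{U}(\vec t)-\mathcal{U}_{\rm noisy}(\vec t)\|_{\diamond}=\mathcal{O}(\gamma n)$ and the triangle inequality delivers the stated $\mathcal{O}(\gamma n^{2}\|O\|_{\infty})$.

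The step I expect to be hardest is the stability of the truncation in the second paragraph: one has to verify that interleaving depolarizing channels with the clean evolutions does not break the convergence of the cluster expansion underlying Lemma~\ref{lemma1}. Morally this is automatic, since each layer only multiplies Pauli coefficients by numbers in $[0,1]$, which cannot worsen any bound; but a rigorous argument requires re-examining the cluster-expansion estimates with the noise channels in place rather than merely invoking the noiseless lemma. A secondary nuisance is the accounting through the $n$-fold product, where one uses that $K$, $\Lambda$ and $\mathfrak{d}$ are $\mathcal{O}(1)$ — hence $M(t)=\mathcal{O}(\log(n/\epsilon))$ and a $\mathrm{poly}(n)$ number of Pauli terms — to keep each per-factor error at $\mathcal{O}(\gamma n)$ and the accumulated error at $\mathcal{O}(\gamma n^{2})$; the bound is therefore meaningful precisely in the weak-noise regime $\gamma=o(1/n^{2})$.
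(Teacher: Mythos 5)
Your proposal is correct and follows the same three-step architecture as the paper's proof: bound the per-factor discrepancy between each ideal Heisenberg-evolved operator $U^{\dagger}(\vec t)S_iU(\vec t)$ and its noisy counterpart by $\mathcal{O}(\gamma n)$, telescope over the $n$ factors of $S\prod_i(\cdot)$ to get $\mathcal{O}(\gamma n^2)$ in operator norm, convert that to an expectation-value error of $\mathcal{O}(\gamma n^2\|O\|_{\infty})$, and absorb the remaining $\|\mathcal{U}_{\rm noisy}-\mathcal{U}\|_{\diamond}=\mathcal{O}(\gamma n)$ term by the triangle inequality. The one genuine difference is how the per-factor bound is obtained: the paper vectorizes in the Pauli basis and uses the Pauli path propagation bound $\max_{|\vec s|}\bigl(1-(1-\gamma)^{|\vec s|}\bigr)\le\gamma nK$, whereas you delete the depolarizing layers one at a time and use that $\mathcal{N}$ is unital, self-adjoint and contractive with $\|\mathcal{N}(B)-B\|_{\infty}\le 2\gamma\,|{\rm supp}(B)|\,\|B\|_{\infty}$. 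The two give the same scaling; your hybrid argument is arguably more elementary and makes the role of $K=\mathcal{O}(1)$ explicit, while the Pauli-path version packages all $K$ layers into a single damping factor. You are also more careful than the paper on two points it silently elides: you distinguish the algorithm's actual output $\widetilde V_{O_i}$ from the ideal noisy target $\mathcal{U}_{\rm noisy}^{\dagger}(O_i)$ (the paper identifies $\tilde U$ with $S\prod_i\mathcal{U}_{\rm noisy}(\vec t)(S_i)$ exactly, ignoring truncation and statistical error), and you note that the telescoping needs the factors' norms controlled — the paper gets this for free since $\|\mathcal{U}_{\rm noisy}(\vec t)(S_i)\|_{\infty}\le(1-\gamma)^n<1$, while your reconstructed factors only satisfy $1+\mathcal{O}(\gamma n)$, which is harmless in the regime $\gamma n^2=\mathcal{O}(1)$ where the bound is meaningful anyway. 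The truncation-stability issue you flag as hardest is real but benign for exactly the reason you give (depolarizing layers only shrink Pauli coefficients and do not enlarge supports), and the paper does not address it at all.
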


\noindent Our result indicates the robustness of Alg.~\ref{Algorithm} when benchmarking weakly noisy quantum circuits. In particular, 
for weak noise with strength $\gamma=\epsilon/n^{2}$, the learning algorithm may approximate the output state produced by the noisy quantum circuit, in terms of predicting the quantum mean value induced by local observables. For strong noise with non-negligible $\gamma n^2 = O(1)$, we left it as an open problem to benchmark a general noisy channel. 

\vspace{0.2cm}

\noindent\textbf{Discussion and Outlook} ---
This work resolves the fundamental challenge of short-time Hamiltonian dynamics learning by establishing an efficient framework to reconstruct constant-time quantum dynamics generated by local Hamiltonians. Our work resolves an open problem in Huang et al.~\cite{huang2024learning} by extending the methodology originally designed for constant-depth quantum circuits to time-evolved Hamiltonian dynamics, and bridges a critical gap in quantum process characterization. These advances provide rigorous theoretical foundations for analyzing quantum machine learning models, verifying computational outcomes, and benchmarking noisy quantum processes in near-term quantum devices.

Our work opens several avenues for future exploration. First, although the current protocol focuses on unitary process tomography, extending it to general completely positive trace-preserving map tomography would significantly broaden its applicability to noisy quantum processes, enabling the characterization of decoherence and error channels. Second, it would be valuable to relax the current constant-time evolution constraint to allow for logarithmic-time evolution. Finally, the polynomial dependence of our algorithm's complexity on $n$ and $1/\epsilon^2$ raises a natural open question: can these scalings be further optimized? Addressing these challenges could advance our understanding of quantum dynamics learning and enhance its practical utility in the context of near-term quantum computing.

\vspace{8pt}


\section*{Acknowledgments}
\noindent The authors wish to express gratitude to Jeongwan Haah, Yu Tong and Anurag Anshu for valuable discussions. Y.~Wu and C.~Wang acknowledge the support from the Natural Science Foundation of China Grant (No.~62371050). X.~Yuan acknowledges the support from Innovation Program for Quantum Science and Technology Grant (No.~2023ZD0300200), the National Natural Science Foundation of China Grant (No.~12175003 and No.~12361161602), and NSAF Grant (No.~U2330201).

\bibliography{main.bbl}
\clearpage

\widetext
\section*{Supplementary Information}
\appendix

\section{Comparison to Related works}
\subsection{Hamiltonian Learning}
Hamiltonian learning from real-time evolution has wide applications in the fields of quantum metrology and quantum sensing. The problem can be informally defined as follows: consider an \( n \)-qubit quantum system with Hamiltonian $H = \sum_{X \in S} \lambda_X h_X$, where \( \mathrm{supp}(h_X) = \mathcal{O}(1) \) and \( |\lambda_X| \leq 1 \). The goal is to determine the coefficients \( \lambda_X \) for \( X \in S \), given the ability to evolve quantum states under the unitary \( e^{-iHt} \) for any time \( t > 0 \). Hamiltonian learning algorithms focus on the total evolution time \( t_{\mathrm{total}} \) and time resolution \( t_{\mathrm{min}} \), where \( t_{\mathrm{total}} \) generally determines the total runtime complexity, and a short time resolution \( t_{\mathrm{min}} \) imposes stricter requirements on the precise control of quantum devices.

Hamiltonian learning algorithms can be broadly classified into two categories based on whether the operators \( \{h_X\} \) are known in prior. For example, Refs.~\cite{huang2023learning, haah2024learning} assume prior knowledge of the Hamiltonian structure, whereas recent works~\cite{zhao2024learning, bakshi24structure} have developed algorithms that do not require knowledge of \( \{h_X\} \). However, these methods still rely on the ability to control \( e^{-iHt} \) for continuous time \( t > t_{\mathrm{min}} \). In contrast, the \emph{Hamiltonian dynamics learning} problem attemptes to remove the assumption of access to \( e^{-iHt} \) for any \( t > t_{\mathrm{min}} \), and neither the time \( t \) nor the internal structure of the Hamiltonian is assumed to be known. 

\subsection{Quantum Process Tomography}
The gold standard for the full characterization of quantum circuits is the quantum process tomography, a procedure that reconstructs an unknown quantum process from quantum data. A direct approach relies on a complete set of measurement operator, which naturally introduces exponentially large sample and post-processing complexity. To address this challenge, various heuristic approaches have been developed, leveraging parameterized quantum ansätze~\cite{xue2022variational,xue2023variational,wulearning}, Bayesian inference~\cite{granade2012robust, wiebe2014hamiltonian,stenberg2014efficient}, neural network models~\cite{xin2019local, melko2019restricted}, and tensor networks~\cite{lanyon2017efficient,torlai2023quantum}, while it is generally hard to study computational complexity and convergence rate for variational methods, making their applicability and scalability uncertain.  

\subsection{Quantum Circuit Learning}
In the context of supervised learning theory, the objective of a learning algorithm can be abstracted as follows: given a training set \(\mathcal{T} = \{(x_i, y_i)\}_{i=1}^N\), where \(x_i\) are the data sampled from the data space \(\mathcal{X}\) with probability \(D(x_i)\), and \(y_i\) are the corresponding labels, learn a map \(f\) such that
\[
\mathbb{E}_{x \sim D} \left[ |f(x) - y| \right] \leq \epsilon,
\]
where \(\epsilon\) is a small learning error~\cite{mohri2018foundations}. Note that the target map \(f\) can be equivalently represented by a classical circuit \(C_f\), leading to a broadly interesting question: What types of classical circuits \(C_f\) can be efficiently learned with polynomial sample complexity and polynomial running time? It is known that constant-depth classical circuits with bounded fan-in gates (\(\text{NC}^0\)) can be efficiently learned, while constant-depth classical circuits with unbounded fan-in gates (\(\text{AC}^0\)) require quasi-polynomial running time. Given that quantum circuits may generate distributions that are classically hard to simulate, it is natural to ask: Can a quantum circuit \(U\) be learned efficiently?

However, this challenge is fundamental in the quantum world. Over the past decade, variational quantum algorithms have been proposed to learn an approximation of a quantum circuit using a parameterized quantum ansatz~\cite{mitarai2018quantum, schuld2019quantum, havlivcek2019supervised, huang2021power, abbas2021power,wu2023orbital}. Although relationships between sample complexity, the number of variational quantum gates, and generalization error have been theoretically studied~\cite{abbas2021power, caro2022generalization, caro2023out}---where the required sample complexity \(N\) depends polynomially on the number of variational parameters and \(1/\epsilon\)---the computational complexity remains unclear. This uncertainty fundamentally arises from the optimization landscape induced by the parameterized quantum circuit, which may have exponentially small local minima in shallow-depth circuits~\cite{anschuetz2022quantum} and suffer from the barren plateau phenomenon in deeper circuits~\cite{mcclean2018barren, cerezo2021cost, ragone2024lie, fontana2024characterizing}. These phenomena significantly hinder the efficiency of variational approaches in learning quantum circuits, both in shallow and deep scenarios.

Recently, a breakthrough work has proposed an efficient learning algorithm for constant-depth quantum circuits~\cite{huang2024learning}. The key idea is to transform the learning of \(U\) into learning the local evolution \(U^\dagger O_i U\), where \(O_i \in \{X, Y, Z\}\) and \(i \in [n]\). When \(U\) represents a \(d = \mathcal{O}(1)\)-depth quantum circuit, the support size of the operator \(U^\dagger O_i U\) is constant, as information only spreads within the \textit{light cone}. As a result, it can be efficiently learned using random stabilizer input states and random measurement outputs. However, when the quantum circuit depth \(d \geq \Omega(\log n)\), the corresponding support size may grow to \(\mathcal{O}(n)\) (for all-to-all connection architectures) or \(\text{poly}(\log n)\) (for \(D\)-dimensional lattices), making the learning algorithm no longer efficient.

\section{Preliminary knowledge}
\begin{problem}
    [Quantum Dynamics Learning]
\label{problem1_app}
Consider $K$ local Hamiltonians $$\{H^{(1)},H^{(2)},\cdots, H^{(K)}\},$$ and evolution time series $\vec{t}=\{t_1,\cdots,t_K\}$ with $\abs{t_k}\leq\mathcal{O}(1)$. Denote $U(\vec{t})=\prod_{k=1}^Ke^{-iH^{(k)}t_k}$, and the target is to learn an $n$-qubit channel $\mathcal{V}$ such that
\begin{align}
    \left\|\mathcal{V}-\mathcal{U}(\vec{t})\right\|_{\diamond}\leq\epsilon
\end{align}
with high probability, where the channel $\mathcal{U}(\vec{t})=U(\vec{t})(\cdot)U^{\dagger}(\vec{t})$.
\end{problem}

\begin{definition}[Diamond norm]
The diamond norm is the trace norm of the output of a trivial extension of a linear map, maximized over all possible inputs with trace norm at most one. Specifically, suppose $\rho:H_d\mapsto H_d$ be a linear transformation, where $H_d$ be a $d$-dimensional Hilbert space, the diamond norm of $\rho$ is defined by
\begin{align}
    \|\rho\|_{\diamond}=\max\limits_{\|X\|_1\leq 1}\|(\rho\otimes I)X\|_1,
\end{align}
where identity matrix $I\in H_d$, $X\in H_{2d}$ and $\|\cdot\|_1$ represents the trace norm.
\end{definition}

\begin{definition}[Diamond distance]
Given $n$-qubit quantum channels $\mathcal{G}$ and $\mathcal{F}$, their diamond distance is defined as 
\begin{align}
    d_{\diamond}(\mathcal{G},\mathcal{F})=\|\mathcal{G}-\mathcal{F}\|_{\diamond}=\max\limits_{\rho}\|(\mathcal{G}\otimes I)(\rho)-(\mathcal{F}\otimes I)(\rho)\|_1,
\end{align}
where $\rho$ denotes all density matrices of $2n$ qubits.
\end{definition}

\begin{lemma}[Diamond distance for unitaries~\cite{haah2023query}]
For any two unitaries $U_1$ and $U_2$, we have
\begin{align}
    \min\limits_{\phi\in\mathbb{R}}\|e^{i\phi}U_1-U_1\|_{\infty}\leq\|\mathcal{U}_1-\mathcal{U}_2\|_{\diamond}\leq 2\min\limits_{\phi\in\mathbb{R}}\|e^{i\phi}U_1-U_2\|_{\infty}.
\end{align}
\end{lemma}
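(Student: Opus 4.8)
The plan is to prove the two inequalities separately; in both directions the diamond norm is attained on pure inputs, which reduces everything to controlling the single scalar $\langle\psi|M|\psi\rangle$ with $M:=U_1^\dagger U_2$ a unitary.

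For the upper bound, I would first note that $\mathcal{U}_1$ is unaltered when $U_1$ is rescaled by a global phase $e^{i\phi}$, so it is enough to establish $\|\mathcal{U}_1-\mathcal{U}_2\|_\diamond\le 2\|U_1-U_2\|_\infty$ and then minimise the right-hand side over $\phi$ (replacing $U_1$ by $e^{i\phi}U_1$). By joint convexity of the trace norm, $\|\mathcal{U}_1-\mathcal{U}_2\|_\diamond$ is attained on a pure input $|\Psi\rangle$ of the doubled space; writing $|\Psi_j\rangle=(U_j\otimes I)|\Psi\rangle$ the outputs are pure, so $\|((\mathcal{U}_1-\mathcal{U}_2)\otimes I)(|\Psi\rangle\langle\Psi|)\|_1=2\sqrt{1-|\langle\Psi_1|\Psi_2\rangle|^2}$, and I would close with the elementary chain $1-|\langle\Psi_1|\Psi_2\rangle|^2\le 2\bigl(1-\mathrm{Re}\langle\Psi_1|\Psi_2\rangle\bigr)=\||\Psi_1\rangle-|\Psi_2\rangle\|_2^2\le\|(U_1-U_2)\otimes I\|_\infty^2=\|U_1-U_2\|_\infty^2$. (Equivalently, one may estimate directly on operators using $W\sigma W^\dagger-V\sigma V^\dagger=W\sigma(W-V)^\dagger+(W-V)\sigma V^\dagger$ with $W=U_1\otimes I$, $V=U_2\otimes I$, together with $\|AB\|_1\le\|A\|_\infty\|B\|_1$ and $\|\sigma\|_1\le 1$.)

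For the lower bound, I would feed in a (possibly unextended) pure state $|\psi\rangle$ chosen so that $z_0:=\langle\psi|M|\psi\rangle$ is a point of the numerical range $W(M)$ closest to the origin, which exists because $W(M)=\mathrm{conv}(\mathrm{spec}\,M)$ is compact, convex, and contained in the unit disk. This input gives $\|\mathcal{U}_1-\mathcal{U}_2\|_\diamond\ge\|U_1|\psi\rangle\langle\psi|U_1^\dagger-U_2|\psi\rangle\langle\psi|U_2^\dagger\|_1=2\sqrt{1-|z_0|^2}$, and it then remains to verify $2\sqrt{1-|z_0|^2}\ge\min_\phi\max_k|e^{i\phi}-e^{i\theta_k}|=\min_\phi\|e^{i\phi}U_1-U_2\|_\infty$, where $\{e^{i\theta_k}\}=\mathrm{spec}\,M$. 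If $0$ is interior to $W(M)$ the left side equals $2$ and the inequality is trivial; otherwise there is a supporting line to $W(M)$ at $z_0$, so every eigenvalue of $M$ lies in the closed arc of half-aperture $\theta_0=\arccos|z_0|\le\pi/2$ about the direction $z_0/|z_0|$, and taking $\phi=\arg z_0$ gives $\max_k|e^{i\phi}-e^{i\theta_k}|\le 2\sin(\theta_0/2)=\sqrt{2(1-|z_0|)}\le 2\sqrt{1-|z_0|^2}$. The step I expect to demand the most care is exactly this lower-bound input: a naive "optimal-looking" choice such as the maximally entangled state does not work (it fails, e.g., for $M=\mathrm{diag}(1,\dots,1,-1)$), so one genuinely needs the convex geometry of $W(M)$ and the supporting-hyperplane argument above. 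Combining the two bounds yields the lemma; I would also point out that the printed left-hand side $\min_\phi\|e^{i\phi}U_1-U_1\|_\infty$ is identically zero and is evidently meant to read $\min_\phi\|e^{i\phi}U_1-U_2\|_\infty$.
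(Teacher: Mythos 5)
Your proof is correct, but note that the paper itself does not prove this lemma: it is stated in the preliminaries as an imported fact from Ref.~\cite{haah2023query}, so there is no in-paper argument to compare against, and what you have written is a self-contained proof of the cited result. Your route is essentially the standard one for this statement: the upper bound follows from phase invariance of the channel together with either the pure-state trace-distance formula or the operator identity $W\sigma W^{\dagger}-V\sigma V^{\dagger}=W\sigma(W-V)^{\dagger}+(W-V)\sigma V^{\dagger}$, and the lower bound follows from feeding in a state that realizes the point $z_0$ of the numerical range $W(M)=\mathrm{conv}(\mathrm{spec}\,M)$, $M=U_1^{\dagger}U_2$, closest to the origin, which gives the value $2\sqrt{1-|z_0|^2}$, and then comparing with $\min_\phi\max_k|e^{i\phi}-e^{i\theta_k}|\le 2\sin(\theta_0/2)=\sqrt{2(1-|z_0|)}\le 2\sqrt{1-|z_0|^2}$ via the supporting-line argument. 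I checked the individual steps (the inequality $1-|z|^2\le 2(1-\mathrm{Re}\,z)$, the identity $\|e^{i\phi}U_1-U_2\|_\infty=\|e^{i\phi}I-M\|_\infty$, the arc estimate, and the final comparison $2(1-|z_0|)\le 4(1-|z_0|)(1+|z_0|)$) and they all hold; your observation that the maximally entangled input would not suffice for the lower bound is also correct, as is your reading that the printed left-hand side $\min_\phi\|e^{i\phi}U_1-U_1\|_\infty$ is a typo for $\min_\phi\|e^{i\phi}U_1-U_2\|_\infty$ (as printed it is identically zero and the left inequality is vacuous). One small wording fix: the case split in the lower bound should be $0\in W(M)$ versus $0\notin W(M)$ rather than ``interior'' versus ``otherwise''; if $0$ lies on the boundary of $W(M)$ then $z_0=0$ and the direction $z_0/|z_0|$ in your second branch is undefined, but that case is already trivial because the left side equals $2$, so this is cosmetic rather than a gap.
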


\begin{definition}[Operator norm]
    Given a matrix A, its operator norm is defined by 
    \begin{align}
        \|A\|_{\infty}=\max\limits_{|\psi\rangle}\abs{\langle\psi|A|\psi\rangle}.
    \end{align}
\end{definition}

It is shown that the operator-norm based learning metric suffices to provide an estimation to the quantum mean value problem. To check this fact, we suppose an approximation $\hat{U}$ satisfying $\|U-\hat{U}\|_{\infty}\leq\epsilon$. For any valid quantum state $|\psi\rangle$ and an observable $O$, we have
\begin{eqnarray}
    \begin{split}
        \abs{\langle\psi|U^{\dagger}OU|\psi\rangle-\langle\psi|\hat{U}^{\dagger}O\hat{U}|\psi\rangle}
        \leq \left\|\left(U^{\dagger}OU-\hat{U}^{\dagger}O\hat{U}\right)|\psi\rangle\right\|_{\infty}
        \leq  \|(U^{\dagger}-\hat{U}^{\dagger})OU+\hat{U}^{\dagger}O(U-\hat{U})\|_{\infty} 
        \leq 2\|O\|_{\infty}\epsilon,
    \end{split}
\end{eqnarray}
where the first inequality comes from the Cauchy-Schwarz inequality.

\section{Learning Algorithm Outline}
Our learning algorithm is fundamentally based on the identity
\begin{align}
    U(\vec{t})\otimes U^{\dagger}(\vec{t})=S\prod\limits_{i=1}^n\left[U^{\dagger}(\vec{t})S_iU(\vec{t})\right],
\end{align}
where $S$ represents a $2n$-qubit SWAP operator, and $S_i$ represents a $2$-qubit operator on the qubit pair $(i,n+i)$. Noting that the $2$-qubit SWAP operator $S_i=\frac{1}{2}\sum_{O\in\{I,X,Y,Z\}}O_i\otimes O_{i+n}$, and this yields 
\begin{align}
    U^{\dagger}(\vec{t})S_iU(\vec{t})=\frac{1}{2}\sum\limits_{O\in\{I,X,Y,Z\}}U^{\dagger}(\vec{t})O_iU(\vec{t})\otimes O_{i+n}
\end{align}
when the relationship 
\begin{align}
    {\rm supp}\left(U^{\dagger}(\vec{t})O_iU(\vec{t})\right)\cap {\rm supp}(O_{i+n})\neq \emptyset
\end{align}
holds. From the above Heisenberg picture, it is shown that learning the quantum dynamics $U(\vec{t})$ can be transformed into learning operators $$U_i(\vec{t})=U^{\dagger}(\vec{t})O_iU(\vec{t})$$ for qubit index $i\in[n]$. 

To learn the operator $U_i(\vec{t})$, Ref.~\cite{huang2024learning} decomposed it into linear combinations of Pauli operators, where each operator living in its lightcone. When the quantum circuit is constant depth, one can easily enumerate all possible Pauli operators. However, $\Omega(\log n)$-depth $D$-dimensional ($D\geq 2$) quantum circuit may lead to a ${\rm poly}\log n$-sized lightcone, and the number of resulting Pauli operators grows quasi-polynomially with the number of qubits. To overcome the large sample complexity due to quasi-polynomial number of Pauli operators, we utilize the cluster expansion method to approximate $U^{\dagger}(\vec{t})O_iU(\vec{t})$ by $V_{O_i}(\vec{t})$ in a small size lightcone, then utilize the randomized measurement dataset to learn the expression of $V_i(\vec{t})$, an approximation to $U^{\dagger}(\vec{t})S_iU(\vec{t})$. Finally, the learned channel can be expressed by
\begin{align}
   \mathcal{V}(\cdot)={\rm Tr}_{>n}\left[\left(S\prod\limits_{i=1}^nV_i(\vec{t})\right)(\cdot\otimes I_n2^n)\left(S\prod\limits_{i=1}^nV_i(\vec{t})\right)^{\dagger}\right].
\end{align}
In the following sections, we detail how to approximate $U^{\dagger}(\vec{t})O_iU(\vec{t})$ by using the cluster expansion method given by Refs.~\cite{wild2023classical,wu2024efficient}.

\section{Approximate local evolution $U^{\dagger}
(\vec{t})O_iU(\vec{t})$}
\label{app:cluster}
\subsection{Cluster induced by local Hamiltonian}
\begin{definition}[Cluster induced by Hamiltonian]
    Given a $D$-dimensional Hamiltonian $$H=\sum_{X\in S}\lambda_Xh_X,$$
    where the coefficient $\abs{\lambda_X}\leq 1$, each term satisfies $\|h_X\|\leq1$ and $S=\{X\}$ represents the qubit set performed by $h_X$. A cluster $\bm{W}$ is defined as a nonempty multi-set of subsystems from $S$, where multi-sets allow an element appearing multiple times. The set of all clusters $\bm{W}$ with size $m$ is denoted by $\mathcal{C}_m$ and the set of all clusters is represented by $\mathcal{C}=\cup_{m\geq 1}\mathcal{C}_m$.
\end{definition}

For example, if the Hamiltonian $H=X_0X_1+Y_0Y_1$, then some possible candidates for $\bm{W}$ would be $\{X_0X_1\}$, $\{Y_0Y_1\}$, $\{X_0X_1,X_0X_1\},\cdots$. We call the number of times a subsystem $X$ appears in a cluster $\bm{W}$ the multiplicity $\mu_{\bm{W}}(X)$, otherwise we assign $\mu_{\bm{W}}(X)=0$. Traversing all subsets $X\in S$ may determine the size of $\bm{W}$, that is $\abs{\bm{W}}=\sum_{X\in S}\mu_{\bm{W}}(X)$. In the provided example, when $\bm{W}=\{X_0X_1,X_0X_1\}$, we have $\mu_{\bm{W}}(X_0X_1)=2$, $\mu_{\bm{W}}(Y_0Y_1)=0$ and $\abs{\bm{W}}=2$.

\begin{definition}[Interaction Graph]
    We associate with every cluster $\bm{W}$ a simple graph $G_{\bm{W}}$ which is also termed as the cluster graph. The vertices of $G_{\bm{W}}$ correspond to the subsystems in $\bm{W}$, with repeated subsystems also appearing as repeated vertices. Two distinct vertices $X$ and $Y$ are connected by an edge if and only if the respective subsystems overlap, that is ${\rm supp}(h_X)\cap{\rm supp}(h_Y)\neq\emptyset$.
\end{definition}

Suppose the cluster $\bm{W}=\{X_0X_1,X_0X_1\}$, then its corresponding interaction graph $G_{\bm{W}}$ has two vertices $v_1,v_2$, related to $X_0X_1$ and $X_0X_1$, respectively, and $v_1$ connects to $v_2$ since ${\rm supp}(X_0X_1)\cap {\rm supp}(X_0X_1)\neq \emptyset$. We say a cluster $\bm{W}$ is connected if and only if $G_{\bm{W}}$ is connected. We use the notation $\mathcal{G}_m$ to represent all connected clusters of size $m$ and $\mathcal{G}=\cup_{m\geq 1}\mathcal{G}_m$ for the set of all connected clusters.

\begin{definition}[Super-Interaction Graph]
\label{Def:superinteractiongraph}
    Suppose we have $K$ clusters $\bm{W}_1,\bm{W}_2,\cdots, \bm{W}_K$, we define the super-interaction graph $G^K_{\bm{W}_1,\cdots,\bm{W}_K}$ composed by interaction graphs $G_{\bm{W}_1},G_{\bm{W}_2},\cdots G_{\bm{W}_K}$, where vertices $\{h_X\}_{X\in S_1\cup S_2\cdots\cup S_K}$ inherit from $G_{\bm{W}_1},G_{\bm{W}_2},\cdots G_{\bm{W}_K}$  and vertices $h_X$ and $h_Y$ are connected if ${\rm supp}(h_X)\cap{\rm supp}(h_Y)\neq\emptyset$. 
\end{definition}

In our paper, the super-interaction graph is generally induced by a Hamiltonian series, say $\{H^{(1)},\cdots,H^{(K)}\}$. From the above definition, we know that the super-interaction graph $G^K_{\bm{W}_1,\cdots,\bm{W}_K}$ contains $\sum_{k=1}^K\abs{G_{\bm{W}_k}}$ vertices. 

\begin{definition}[Connected Super-Interaction Graph]
\label{Def:consuperinteractiongraph}
The super-interaction-graph $G^K_{\bm{W}_1,\cdots,\bm{W}_K}$ is connected if and only if 
the super cluster $\bm{W}=(\bm{W}_1,\bm{W}_2,\cdots, \bm{W}_K)$ is connected. All $m$-sized connected super-interaction graphs are denoted by $\mathcal{G}_m^K$, with $m=\sum_{k=1}^K\abs{\bm{W}_k}$.
\end{definition}

Specifically, we denote $\mathcal{G}_m^{K,O_i}$ as the set of all $m$-sized connected super-interaction graphs which connects to $O_i$.

\subsection{Cluster Expansion}
\label{sec:cluster expansion}
We first consider a simple case, that is the cluster expansion of the single-step Hamiltonian dynamics $e^{iHt}O_ie^{-iHt}$~\cite{wild2023classical}. For any cluster $\bm{W}\in\mathcal{C}_m$, we can write $\bm{W}=(X_1,\cdots, X_m)$. This notation helps us to write the function $e^{iHt}O_ie^{-iHt}$ as the multivariate Taylor-series expansion by using the cluster expansion method. Here, we fix the parameter $O_i$, but considering $\{t,\lambda_X\}$ as variables. As a result, we have
\begin{eqnarray}
\begin{split}
    e^{iHt}O_ie^{-iHt}=&\sum\limits_{m=0}^{+\infty}\frac{t^m}{m!}\left(\frac{\partial^m[e^{iHt}O_ie^{-iHt}]}{\partial t^m}\right)_{t=0}
    \label{Eq:taylor}
\end{split}
\end{eqnarray}
Recall that the Hamiltonian $H=\sum_{X\in S}\lambda_Xh_X$, then we assign $z_X=-it\lambda_X$. This results in 
\begin{align}
    \frac{\partial[e^{iHt}O_ie^{-iHt}]}{\partial t}=\sum_{X\in S}\frac{\partial[e^{iHt}O_ie^{-iHt}]}{\partial z_X}\frac{\partial z_X}{\partial t}=\sum_{X\in S}(-i)\lambda_X\frac{\partial[e^{iHt}O_ie^{-iHt}]}{\partial z_X}.
\end{align}
Taking above derivative function into Eq.~\eqref{Eq:taylor}, we have
\begin{eqnarray}
    \begin{split}
         e^{iHt}O_ie^{-iHt}=&\sum\limits_{m=0}^{+\infty}\frac{(-it)^m}{m!}\sum\limits_{X_1,\cdots,X_m}\lambda_{X_1}\cdots\lambda_{X_m}\left(\frac{\partial^m[e^{iHt}O_ie^{-iHt}]}{\partial z_{X_1}\cdots\partial z_{X_m}}\right)_{z=(0,\cdots, 0)}\\
         =&\sum\limits_{m=0}^{+\infty}(-it)^m\sum\limits_{\bm{W}\in\mathcal{C}_m,V=(X_1,\cdots,X_m)}\frac{\bm\lambda^{\bm{W}}}{\bm{W}!}\left(\frac{\partial^m[e^{iHt}O_ie^{-iHt}]}{\partial z_{X_1}\cdots\partial z_{X_m}}\right)_{z=(0,\cdots, 0)}
    \end{split}
\end{eqnarray}
where $\bm\lambda^{\bm{W}}=\prod_{X\in S}\lambda_X^{\mu_{\bm{W}}(X)}$ and $\bm{W}!=\prod_{X\in S}\mu_{\bm{W}}(X)!$. Finally, we utilize the BCH expansion to compute
\begin{eqnarray}
    \begin{split}
        \left(\frac{\partial^m[e^{iHt}O_ie^{-iHt}]}{\partial z_{X_1}\cdots\partial z_{X_m}}\right)_{z=(0,\cdots, 0)}=&\frac{\partial^m}{\partial z_{X_1}\cdots\partial z_{X_m}}\sum\limits_{j=0}^{\infty}\frac{(-it)^j}{j!}[H,O_i]_j\big|_{z=(0,\cdots,0)}\\
        =&\frac{(-it)^m}{m!}\frac{\partial^m}{\partial z_{X_1}\cdots\partial z_{X_m}}[\underbrace{H,[H,\cdots [H}_{m}, O_i]\cdots]]\big|_{z=(0,\cdots,0)}\\
        =&\frac{(-it)^m}{m!}\sum_{\sigma\in \mathcal{P}_m}[\partial_{z_{X_{\sigma(1)}}}H,\cdots[\partial_{z_{X_{\sigma(m)}}}H,O_i]\cdots]\big|_{z=(0,\cdots,0)}\\
        =&\frac{1}{m!}\sum_{\sigma\in \mathcal{P}_m}[h_{X_{\sigma(1)}},\cdots[h_{X_{\sigma(m)}},O_i]\cdots].
    \end{split}
\end{eqnarray}
As a result, the cluster expansion of the single-step Hamiltonian dynamics can be written by
\begin{align}
    e^{iHt}O_ie^{-iHt}=\sum_{m\geq 0}^{+\infty}\sum_{\bm{W}\in\mathcal{C}_m}\frac{\bm\lambda^{\bm{W}}}{\bm{W}!}\frac{(-it)^m}{m!}\sum\limits_{\sigma\in \mathcal{P}_m}\left[h_{W_{\sigma(1)}},\cdots [h_{W_{\sigma(m)}},O_i]\right].
\end{align}
Here $\mathcal{P}_m$ represents the permutation group on the set $\{1,\cdots,m\}$. We denote the cluster derivative
\begin{align}
    D_{\bm{W}}[e^{iHt}O_ie^{-iHt}]=\frac{(-it)^m}{m!}\sum\limits_{\sigma\in \mathcal{P}_m}\left[h_{W_{\sigma(1)}},\cdots [h_{W_{\sigma(m)}},O_i]\right].
    \label{Eq:gradient}
\end{align}
From Eq.~\eqref{Eq:gradient}, we know that $W_{\sigma(1)}\cap W_{\sigma(2)}\cap\cdots\cap W_{\sigma(m)}\cap {\rm supp}(O_i)=\emptyset$ may result in $D_{\bm{W}}\left[e^{iHt}O_ie^{-iHt}\right]=0$~\cite{wild2023classical}. This property dramatically reduces the computational complexity in approximating $e^{iHt}O_ie^{-iHt}$, which only needs to consider connected clusters $\bm{W}$ with bounded size. 

In this article, we consider the $K$-step scenario driven by $\{H^{(1)},\cdots,H^{(K)}\}$ and corresponding time parameters $\{t_1,\cdots,t_K\}$. According to the linear property of the commute net, for any Hermitian operator $A$, we have 
\begin{align}
    \left[A,\sum\limits_{\sigma\in \mathcal{P}_m}[h_{W_{\sigma(1)}},\cdots[h_{W_{\sigma(m)}},O_i]]\right]= \sum\limits_{\sigma\in \mathcal{P}_m}\left[A,[h_{W_{\sigma(1)}},\cdots[h_{W_{\sigma(m)}},O_i]]\right].
    \label{Eq:linearproperty}
\end{align}

We first consider the cluster expansion of $2$-step Hamiltonian dynamics
\begin{eqnarray}
    \begin{split}
        &e^{iH^{(2)}t_2}e^{iH^{(1)}t_1}O_ie^{-iH^{(1)}t_1}e^{-iH^{(2)}t_2}\\
        =&\sum\limits_{m_2\geq 0}\sum\limits_{\bm{W}_2\in\mathcal{C}_{m_2}}\frac{\bm\lambda^{\bm{W}_2}}{\bm{W}_2!}D_{\bm{W}_2}\left[e^{iH^{(2)}t_2}\sum_{m_1\geq 0}^{+\infty}\sum_{\bm{W}_1\in\mathcal{C}_{m_1}}\frac{\bm\lambda_1^{\bm{W}_1}}{\bm{W}_1!}\frac{(-it)^{m_1}}{m_1!}\sum\limits_{\sigma\in \mathcal{P}_{m_1}}\left[h_{W_{\sigma(1)}},\cdots [h_{W_{\sigma(m_1)}},O_i]\right]e^{-iH^{(2)}t_2}\right]\\
        =&\sum\limits_{m_2\geq 0}\sum\limits_{\bm{W}_2\in\mathcal{C}_{m_2}}\frac{\bm\lambda^{\bm{W}_2}}{\bm{W}_2!}\frac{(-it_2)^{m_2}}{m_2!}\sum\limits_{\sigma_2\in \mathcal{P}_{m_2}}\left[h_{W_{\sigma_2(1)}}\cdots\left[h_{W_{\sigma_2(m_2)}},\sum_{m_1\geq 0}^{+\infty}\sum_{\bm{W}_1\in\mathcal{C}_{m_1}}\frac{\bm\lambda_1^{\bm{W}_1}}{\bm{W}_1!}\frac{(-it)^{m_1}}{m_1!}\sum\limits_{\sigma\in \mathcal{P}_{m_1}}\left[h_{W_{\sigma(1)}},\cdots [h_{W_{\sigma(m_1)}},O_i]\right]\right]\right]\\
        =&\sum\limits_{m_1,m_2\geq 0}\sum\limits_{(\bm{W}_1,\bm{W}_2)}\frac{\bm\lambda^{\bm{W}_1}\bm\lambda^{\bm{W}_2}}{\bm{W}_1!\bm{W}_2!}\frac{(-it_1)^{m_1}(-it_2)^{m_2}}{m_1!m_2!}\sum\limits_{\substack{\sigma_1\in \mathcal{P}_{m_1}\\\sigma_2\in \mathcal{P}_{m_2}}}\left[h_{W_{\sigma_1(1)}},\cdots \left[h_{W_{\sigma_1(m_1)}}\cdots\left[h_{W_{\sigma_2(m_2)}},O_i\right]\right]\right],
    \end{split}
\end{eqnarray}
where the second equality comes from the relationship given by Eq.~\eqref{Eq:linearproperty}. Repeat above process for $K$ times, we have the cluster expansion of $K$-step Hamiltonian dynamics, that is
\begin{align}
    U_{O_i}(\vec{t})=\sum\limits_{\substack{m_1\geq0\\\cdots\\m_K\geq 0}}\sum\limits_{\substack{\bm{W}_1\in\mathcal{C}_{m_1}\\\cdots\\\bm{W}_K\in\mathcal{C}_{m_K}}}\frac{\prod_{k=1}^K(\bm\lambda^{\bm{W}_k}(-it_k)^{m_k})}{\prod_{k=1}^K\bm{W}_k!m_k!}\sum\limits_{\substack{\sigma_1\in \mathcal{P}_{m_1}\\\cdots\\\sigma_K\in \mathcal{P}_{m_K}}}\left[h_{W_{\sigma_1(1)}},\cdots \left[h_{W_{\sigma_1(m_1)}},\cdots\left[h_{W_{\sigma_K(m_K)}},O_i\right]\right]\right].
    \label{Eq:Ut}
\end{align}
Here, notations $\sigma_1,\cdots,\sigma_K$ represent $K$ permutations, and $\mathcal{P}_{m_1}\cdots,\mathcal{P}_{m_K}$ represents corresponding permutation groups.

Similar to the single-step Hamiltonian dynamics, we know that if clusters $\bm{W}_1,\bm{W}_2,\cdots, \bm{W}_K$ and $O_i$ are disconnected, then the commute net $\left[h_{W_{\sigma_1(1)}},\cdots [h_{W_{\sigma_K(m_K)}},O_i]\right]=0$, which can be summarized as the following lemma.
\begin{lemma}
    Given clusters $\bm{W}_1,\bm{W}_2,\cdots,\bm{W}_K$ and an observable $O_i$,
    if the supper-interaction graph induced by $\bm{W}=(\bm{W}_1,\bm{W}_2,\cdots,\bm{W}_K,O_i)$ is disconnected, then the commute net $$\left[h_{W_{\sigma_1(1)}},\cdots[h_{W_{\sigma_1(m_1)}}\cdots[h_{W_{\sigma_K(1)}}\cdots[h_{W_{\sigma_K(m_K)}},O_i]]]\right]=0,$$
    where $\abs{\bm{W}_k}=m_k$ and $\sigma_k(1),\sigma_k(2),\cdots,\sigma_k(m_k)$ represents an entry of the permutation group $\mathcal{P}_{m_k}$.
\end{lemma}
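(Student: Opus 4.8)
The nested commutator in the statement is of the generic shape $[A_1,[A_2,\cdots[A_m,O_i]\cdots]]$, where $A_1,\dots,A_m$ is the list of Hamiltonian terms $h_{W_{\sigma_1(1)}},\dots,h_{W_{\sigma_1(m_1)}},h_{W_{\sigma_2(1)}},\dots,h_{W_{\sigma_K(m_K)}}$ read from the outermost to the innermost bracket, $m=\sum_{k=1}^K m_k$, and $\cbra{A_1,\dots,A_m}\cup\cbra{O_i}$ is precisely the vertex set of the super-interaction graph induced by $\bm{W}=(\bm{W}_1,\dots,\bm{W}_K,O_i)$ (repeated subsystems counted as repeated vertices, which is harmless since ${\rm supp}(h_X)$ depends only on $X$). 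The plan is to prove the contrapositive: if this commutator is nonzero, then $G^K_{\bm{W}_1,\dots,\bm{W}_K,O_i}$ is connected; equivalently, disconnectedness forces the commutator to vanish. Nothing in the argument uses the particular way the list is grouped into $\bm{W}_1,\dots,\bm{W}_K$, so I will treat the bracket generically.

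First I would record the elementary support bound. For $0\le j\le m$ write $P_j=[A_{m-j+1},[\cdots[A_m,O_i]\cdots]]$ for the operator produced by the $j$ innermost brackets, so $P_0=O_i$ and $P_m$ is the full expression. From ${\rm supp}([A,B])\subseteq{\rm supp}(A)\cup{\rm supp}(B)$ and sub-additivity of support under sums, a one-line induction on $j$ gives ${\rm supp}(P_j)\subseteq{\rm supp}(O_i)\cup\bigcup_{l=m-j+1}^{m}{\rm supp}(A_l)$. Degenerate stages where some $P_j$ is $0$ or a scalar multiple of the identity only help, since then every enclosing commutator already vanishes.

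The main step is the connectivity argument. Assume $G^K_{\bm{W}_1,\dots,\bm{W}_K,O_i}$ is disconnected; let $\mathcal{A}$ be the connected component containing $O_i$ and let $\mathcal{B}$ be the union of the remaining components. Then $\mathcal{B}\neq\emptyset$, and because $O_i$ is a single vertex sitting in $\mathcal{A}$, the set $\mathcal{B}$ consists only of some of the $A_l$ (in particular $m\ge 1$). By the definition of the interaction graph, the absence of edges between $\mathcal{B}$ and $\mathcal{A}$ means every $A\in\mathcal{B}$ has ${\rm supp}(A)$ disjoint from ${\rm supp}(O_i)$ and from ${\rm supp}(A')$ for every $A'\in\mathcal{A}$. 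Put $q=\max\cbra{\,l : A_l\in\mathcal{B}\,}$, which exists since $\mathcal{B}\neq\emptyset$. Then $O_i$ and $A_{q+1},\dots,A_m$ all lie in $\mathcal{A}$, so the support bound gives ${\rm supp}(P_{m-q})\subseteq{\rm supp}(O_i)\cup\bigcup_{l=q+1}^{m}{\rm supp}(A_l)$, a set disjoint from ${\rm supp}(A_q)$. Hence $[A_q,P_{m-q}]=0$, i.e. $P_{m-q+1}=0$, and peeling off the still-outer brackets yields $P_m=[A_1,[\cdots[A_q,P_{m-q}]\cdots]]=0$. This is exactly the lemma.

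I do not expect a genuine obstacle: this is the standard ``locality of iterated commutators'' fact underlying Lieb--Robinson-type estimates, and essentially all the content is the support bookkeeping above. The points that merely need care are (i) confirming that disconnectedness forces a nonempty block $\mathcal{B}$ built solely from Hamiltonian-term vertices (immediate, since $O_i$ occupies its own component together with whatever it connects to); (ii) the extreme cases $q=m$ (then $P_{m-q}=O_i$ and the empty union causes no trouble) and $q=1$; and (iii) keeping the multiplicity bookkeeping of the super-cluster straight. If one prefers an explicit ``growing lightcone'' picture, an equivalent route is to induct from the inside out and show that nonvanishing of $P_j$ forces each newly adjoined $A_l$ to overlap ${\rm supp}(P_{j-1})\subseteq{\rm supp}(O_i)\cup\bigcup_{l'>l}{\rm supp}(A_{l'})$, hence to be joined by an edge to $\cbra{O_i,A_{l+1},\dots,A_m}$, so the visited vertices span a connected subgraph containing $O_i$; both arguments deliver the lemma.
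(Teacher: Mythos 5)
Your proof is correct and follows essentially the same route as the paper's: identify a Hamiltonian term whose support is disjoint from the support of the nested commutator it encloses, so that bracket (and hence the whole expression) vanishes. Your version is in fact slightly more explicit than the paper's, since you construct the witnessing index concretely as the innermost-positioned term lying outside the connected component of $O_i$, whereas the paper only asserts that such an index exists.
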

\begin{proof}
    Denote all connected super-interaction graph as  $\mathcal{G}_{\bm{W}_1,\bm{W}_2,\cdots,\bm{W}_K,O_i}^K$. Consider a cluster $\bm W\notin \mathcal{G}_{\bm{W}_1,\bm{W}_2,\cdots,\bm{W}_K,O_i}^K$. For every permutation series $(\sigma_1(1),\cdots,\sigma_1(m_1),\cdots,\sigma_K(m_K))$, there exists an index $\sigma_k(s)$ such that $\bm W_{\sigma_k(s)}$ and $\bm W_{\sigma_k(s+1)}\cup\cdots\cup\bm W_{\sigma_K(m_K)}\cup {\rm supp}(O_i)$ does not have an overlap. This directly results in 
    \begin{align}
        \left[h_{W_{\sigma_{k(s)}}},\cdots[h_{W_{\sigma_{k(m_k)}}}\cdots[h_{W_{\sigma_K(1)}}\cdots[h_{V_{\sigma_K(m_K)}},O_i]]]\right]=0,
    \end{align}
    and the concerned commutator vanishes.
\end{proof}

Using this property, we may rewrite the above expression by introducing the connected cluster set $\mathcal{G}_m^{K,O_i}$ composed by all connected super-interaction graphs $G^K_{\bm{W}_1,\cdots, \bm{W}_K}$ (connected to $O_i$) with size $$m=\abs{\bm{W}_1}+\cdots+\abs{\bm{W}_K}.$$ Here, $O_i$ is a single-qubit operator non-trivially performs on qubit $i$, then $\{\bm{W}_1,\cdots,\bm{W}_K,O_i\}$ are connected implies ${\rm supp}(O_i)\in \bm{W}_k$ for some $k\in[K]$. Such observation enables us to only consider summation over $\mathcal{G}_m^{K,O_i}$, meanwhile truncate the cluster expansion up to $M$ order, that is
\begin{align}
    V_{O_i}(\vec{t})=\sum\limits_{\substack{m_1\geq0\\\cdots\\m_K\geq 0}}^M\sum\limits_{\bm{W}_1\cdots, \bm{W}_K\in\mathcal{G}_m^{K,O_i}}\frac{\prod_{k=1}^K(\bm\lambda^{\bm{W}_k}(-it_k)^{m_k})}{\prod_{k=1}^K\bm{W}_k!m_k!}\sum\limits_{\substack{\sigma_1\in \mathcal{P}_{m_1}\\\cdots\\\sigma_K\in \mathcal{P}_{m_K}}}\left[h_{W_{\sigma_1(1)}},\cdots [h_{W_{\sigma_K(m_K)}},O_i]\right].
    \label{Eq:clusterexpansion}
\end{align}

\subsection{Support size evaluation}
\label{App:support}

\begin{lemma}
Consider the Hamiltonian dynamics with evolution time $t=\max_k\{\abs{t_k}\}$, the operator $U_{O_i}(\vec{t})$ can be approximated by $V_{O_i}(\vec{t})=\sum_{Q(O_i)}\alpha_{Q(O_i)}Q(O_i)$ such that $\|\mathcal{U}_{O_i}(\vec{t})-\mathcal{V}_{O_i}(\vec{t})\|_{\diamond}\leq\epsilon^{\prime}\|O_i\|_{\infty}$. Here, $\max_{Q(O_i)}\abs{{\rm supp}(Q(O_i))}\leq \mathcal{O}(M(t))$ and $V_{O_i}(\vec{t})$ contains $L\leq\mathcal{O}\left(\left(e\mathfrak{d}\right)^{M(t)}\right)$ Pauli operators $Q(O_i)$ with 
\begin{eqnarray}
M(t)=\left\{
    \begin{split}
        &\frac{\log(1/\epsilon^{\prime})-K\log(1-2teK\mathfrak{d})}{K\log(1/(2teK\mathfrak{d}))}, ~\text{when}~t<\frac{1}{2eK\mathfrak{d}}\\
        &e^{\pi teK\mathfrak{d}}\log\left[\frac{e^{\pi teK\mathfrak{d}}}{\epsilon^{\prime}}\right], ~\text{when}~t=\mathcal{O}(1)
    \end{split}
\right.
\end{eqnarray}
and $\mathcal{U}_{O_i}(\vec{t})$, $\mathcal{V}_{O_i}(\vec{t})$ are channel representations of $U_{O_i}(\vec{t})$ and $V_{O_i}(\vec{t})$, respectively.
\end{lemma}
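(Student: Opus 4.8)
The plan is to begin from the $K$-step cluster expansion of $U_{O_i}(\vec{t})$ in Eq.~\eqref{Eq:Ut}, truncate it, bound the discarded tail by a geometric series in the convergent regime $t<t^*$, and recover the regime $t=\mathcal{O}(1)$ by analytic continuation past the radius of convergence. First, the lemma preceding Eq.~\eqref{Eq:clusterexpansion} lets me discard every disconnected super-cluster, and I then keep only the super-clusters $(\bm{W}_1,\dots,\bm{W}_K)$ with each layer order $|\bm{W}_k|\le M$; this is precisely the operator $V_{O_i}(\vec{t})$ of Eq.~\eqref{Eq:clusterexpansion}. Every Pauli $Q(O_i)$ occurring in $V_{O_i}(\vec{t})$ is a term of a nested commutator assembled from at most $KM$ factors $h_X$ (each of support $\le\Lambda$) together with $O_i$, so $\abs{\supp(Q(O_i))}\le KM\Lambda+1=\mathcal{O}(M)$, which is the claimed support bound; and the distinct Paulis are indexed, with multiplicity, by connected subgraphs of size $\le KM$ anchored at $\supp(O_i)$ in the super-interaction graph, whose maximum degree is $\le K\mathfrak{d}$, so the standard connected-subgraph count bounds their number by $(eK\mathfrak{d})^{KM}=(e\mathfrak{d})^{\mathcal{O}(M(t))}$.

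Next I bound the contribution of a single connected super-cluster with $|\bm{W}_k|=m_k$ and $m=\sum_k m_k$: its prefactor in Eq.~\eqref{Eq:Ut} has modulus $\le t^{m}/\prod_k\bm{W}_k!$ (using $\abs{t_k}\le t$), there are $\prod_k m_k!$ permutations, and the nested commutator has operator norm $\le 2^{m}\|O_i\|_\infty$ by iterating $\|[A,B]\|_\infty\le 2\|A\|_\infty\|B\|_\infty$ with $\|h_X\|_\infty\le1$; the $m_k!$ cancel, leaving $(2t)^{m}\|O_i\|_\infty/\prod_k\bm{W}_k!$ per super-cluster. Combining this with the weighted connected-cluster estimate $\sum_{\bm{W}\text{ conn.},\,|\bm{W}|=m,\,\ni O_i}\prod_k 1/\bm{W}_k!\le(eK\mathfrak{d})^{m}$, and using that the $K$-step cluster expansion factors through the $K$ single-step expansions (the nesting in Eq.~\eqref{Eq:Ut}), the truncation error is bounded by the $K$-th power of the single-layer geometric tail, $\bigl[(2teK\mathfrak{d})^{M+1}/(1-2teK\mathfrak{d})\bigr]^{K}\|O_i\|_\infty$, which converges precisely when $t<t^*=1/(2eK\mathfrak{d})$. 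Requiring this to be at most $\epsilon'\|O_i\|_\infty$ and solving for $M$ then reproduces $M(t)=\bigl(\log(1/\epsilon')-K\log(1-2teK\mathfrak{d})\bigr)/\bigl(K\log(1/(2teK\mathfrak{d}))\bigr)$ exactly, including its $K$-dependence.

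For $t=\mathcal{O}(1)$ with $2teK\mathfrak{d}\ge 1$ the series diverges, so instead I regard $g(z):=U^{\dagger}(z\vec{t})O_iU(z\vec{t})$ as an operator-valued function, analytic in the disk $|z|<t^*/t$ by the cluster expansion, and push it out to $z=1$ by composing with a conformal map from the unit disk onto a lens-shaped region containing $[0,1]$ and re-expanding -- the standard device for extending a Lieb--Robinson/cluster-expansion series beyond its convergence radius. The conformal distortion inflates the effective truncation order geometrically, producing the second branch $M(t)=\mathcal{O}\bigl(e^{\pi teK\mathfrak{d}}\log[e^{\pi teK\mathfrak{d}}/\epsilon']\bigr)$; since re-expansion only recombines the same low-order nested commutators, the support bound $\abs{\supp(Q(O_i))}\le\mathcal{O}(M(t))$ and the count $L\le(e\mathfrak{d})^{\mathcal{O}(M(t))}$ are unaffected. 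Finally, because $O_i$ is a Pauli, $U_{O_i}(\vec{t})$ is unitary and $V_{O_i}(\vec{t})$ lies within $\epsilon'$ of it, so the cited diamond-distance bound for (near-)unitaries upgrades $\|U_{O_i}(\vec{t})-V_{O_i}(\vec{t})\|_\infty\le\epsilon'\|O_i\|_\infty$ to $\|\mathcal{U}_{O_i}(\vec{t})-\mathcal{V}_{O_i}(\vec{t})\|_\diamond\le 2\epsilon'\|O_i\|_\infty$, and rescaling $\epsilon'$ by a constant completes the argument.

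The per-cluster norm estimate and the support/count bookkeeping are routine; the real work concentrates in two places. The first is the weighted connected-(super-)cluster counting lemma with sharp $e$, $K$, $\mathfrak{d}$ dependence, together with the layer-by-layer truncation accounting needed to reproduce the precise expression for $M(t)$ in the convergent regime. The second is the analytic-continuation construction for $t=\mathcal{O}(1)$ -- choosing the conformal map, bounding the geometric inflation of the truncation order, and verifying that re-expansion does not enlarge the Pauli supports -- which is where the exponential factor $e^{\pi teK\mathfrak{d}}$ comes from and where most of the care is needed.
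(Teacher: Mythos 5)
Your proposal follows essentially the same route as the paper: the $K$-step nested cluster expansion with the vanishing of disconnected super-clusters, the per-cluster bound $(2t)^m\|O_i\|_\infty/\prod_k\bm{W}_k!$ with the $m_k!$ cancellation, the connected-cluster count $(eK\mathfrak{d})^m$ giving the $K$-th power of a geometric tail and hence the exact first branch of $M(t)$, and analytic continuation via a conformal map onto an elongated region (the paper uses $\phi(z)=\log(1-z/R')/\log(1-1/R')$ with a Cauchy-integral remainder) for the $t=\mathcal{O}(1)$ branch. The argument is sound and matches the paper's proof in all essential steps, including the final conversion from operator-norm to diamond-norm error.
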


\begin{theorem}[Case~1: $\abs{t}<1/2eK\mathfrak{d}$]
\label{them:localapproxcase1}
Suppose a single qubit observable $O_i$, $K$-step quantum dynamics driven by $\Lambda$-local Hamiltonians $\{H^{(1)},\cdots,H^{(K)}\}$ and corresponding constant time parameters $\vec{t}=\{t_1,\cdots,t_K\}$. If the evolution time $\max_k\{\abs{t_k}\}=t<1/(2eK\mathfrak{d})$, then the operator $U_{O_i}(\vec{t})=\prod_{k=1}^Ke^{iH^{(k)}t_k}O_i\prod_{k=1}^Ke^{-iH^{(k)}t_k}$ can be approximated by $V_{O_i}(\vec{t})$ (Eq.~\eqref{Eq:clusterexpansion}) such that 
\begin{align}
    \|U_{O_i}(\vec{t})-V_{O_i}(\vec{t})\|_{\infty}\leq\epsilon\|O_i\|_{\infty},
\end{align}
where the number of involved cluster terms 
   \begin{align}
       M(t)=\frac{\log(1/\epsilon)-K\log(1-2teK\mathfrak{d})}{K\log(1/(2teK\mathfrak{d}))}.
        \label{Eq:Msmall}
   \end{align}
\end{theorem}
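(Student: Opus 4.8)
The plan is to realize $V_{O_i}(\vec t)$ as a truncation of the (convergent) $K$-step cluster expansion of $U_{O_i}(\vec t)$ and to bound the truncation tail directly in operator norm. Starting from Eq.~\eqref{Eq:Ut}, the vanishing-commutator lemma established above shows that only super-clusters $\bm W=(\bm W_1,\dots,\bm W_K)$ whose super-interaction graph together with ${\rm supp}(O_i)$ is connected can contribute, so $U_{O_i}(\vec t)=\sum_{m\ge 1}\sum_{\bm W\in\mathcal G_m^{K,O_i}}(\cdots)$, while $V_{O_i}(\vec t)$ in Eq.~\eqref{Eq:clusterexpansion} keeps exactly the connected clusters with all layer sizes $m_k\le M$. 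Hence
\[
\bigl\|U_{O_i}(\vec t)-V_{O_i}(\vec t)\bigr\|_\infty\le \sum_{\substack{\bm W\in\mathcal G^{K,O_i}\\ \exists k:\,m_k>M}}\left|\frac{\prod_k\bm\lambda^{\bm W_k}(-it_k)^{m_k}}{\prod_k \bm W_k!\,m_k!}\right|\,\Bigl\|\sum_{\sigma_1,\dots,\sigma_K}[h_{W_{\sigma_1(1)}},\dots,[h_{W_{\sigma_K(m_K)}},O_i]]\Bigr\|_\infty,
\]
and the whole argument reduces to estimating this tail sum.

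I would then bound the three ingredients separately. First, with $\|h_X\|\le 1$ each of the $m=\sum_k m_k$ nested commutators at most doubles the norm, so $\|[h_{W_{\sigma_1(1)}},\dots,[h_{W_{\sigma_K(m_K)}},O_i]]\|_\infty\le 2^m\|O_i\|_\infty$; second, the permutation sums over $\mathcal P_{m_1}\times\cdots\times\mathcal P_{m_K}$ contribute a factor $\prod_k m_k!$ that cancels the $\prod_k m_k!$ in the denominator; third, using $|\lambda_X|\le 1$ and $|t_k|\le t$ the remaining prefactor is at most $t^m/\prod_k\bm W_k!$. What is left is the enumeration of connected super-clusters: the core step is the weighted bound $\sum_{\bm W\in\mathcal G_m^{K,O_i}}1/\prod_k\bm W_k!\le(eK\mathfrak d)^m$, which I would prove by a Cayley/tree-counting (equivalently, exponential-generating-function) argument in which the $1/\bm W_k!$ factors convert the multiset sum into a sum over connected rooted subgraphs of the super-interaction graph, whose maximal degree is at most $K\mathfrak d$ because a term $h_X$ overlaps at most $\mathfrak d$ terms in each of the $K$ Hamiltonians. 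I expect this combinatorial estimate — pinning down the effective degree $K\mathfrak d$ and correctly bookkeeping the multiplicities and the layer assignments — to be the main obstacle.

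Finally I would assemble the pieces, obtaining $\|U_{O_i}(\vec t)-V_{O_i}(\vec t)\|_\infty\le\|O_i\|_\infty$ times the tail of $\sum_m(2teK\mathfrak d)^m$, a geometric series whose ratio $2teK\mathfrak d$ is $<1$ precisely under the hypothesis $t<1/(2eK\mathfrak d)$; this ratio is also the radius of convergence of the underlying cluster-expansion series, which is why the threshold $t^\ast=1/(2eK\mathfrak d)$ appears. Summing the tail layer by layer — each of the $K$ truncated conjugations contributing a convergent geometric factor $1/(1-2teK\mathfrak d)$ and the violated layer contributing the decaying factor $(2teK\mathfrak d)^{M}$ — bounds the error by $\bigl((2teK\mathfrak d)^{M}/(1-2teK\mathfrak d)\bigr)^{K}$ up to constants. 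Imposing that this be at most $\epsilon\|O_i\|_\infty$ and solving for $M$ gives exactly $M(t)=\bigl[\log(1/\epsilon)-K\log(1-2teK\mathfrak d)\bigr]/\bigl[K\log(1/(2teK\mathfrak d))\bigr]$, i.e.\ Eq.~\eqref{Eq:Msmall}. Since $U_{O_i}$ and $V_{O_i}$ are Hermitian the operator-norm claim follows directly from the triangle inequality; the diamond-norm version in Lemma~\ref{lemma1} then follows from $\|\mathcal U_{O_i}-\mathcal V_{O_i}\|_\diamond\le 2\|U_{O_i}-V_{O_i}\|_\infty$, and the $t=\mathcal O(1)$ regime is recovered by analytic continuation past $t^\ast$.
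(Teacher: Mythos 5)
Your proposal follows essentially the same route as the paper's proof: truncate the $K$-step cluster expansion at order $M$, bound each nested commutator by $2^{m}\|O_i\|_{\infty}$, cancel the permutation factorials against the $\prod_k m_k!$ in the denominator, count connected super-clusters by $(eK\mathfrak{d})^{m}$, and sum the resulting geometric tail (convergent precisely when $2teK\mathfrak{d}<1$) to solve for $M(t)$. The one bookkeeping subtlety—present in the paper's own computation as well—is that you correctly identify the tail set as $\{\exists k: m_k>M\}$ but then quote the bound $\bigl((2teK\mathfrak{d})^{M+1}/(1-2teK\mathfrak{d})\bigr)^{K}$, which corresponds to the smaller set $\{\forall k: m_k>M\}$; the true complement gives roughly $K(2teK\mathfrak{d})^{M+1}/(1-2teK\mathfrak{d})^{K}$, which only shifts the factor of $K$ in the denominator of Eq.~\eqref{Eq:Msmall} and leaves the $\mathcal{O}(\log(1/\epsilon))$ scaling of $M(t)$ unchanged.
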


\begin{proof}
     Let $t=\max_{k\in[K]}\{t_k\}$ such that $\abs{t}\leq 1/(2eK\mathfrak{d})$, where the constant $\mathfrak{d}$ represents the maximum degree of the Hamiltonian interaction graph. ($K\mathfrak{d}$ represents the maximum degree of the $m$-sized graph $G^K_{\bm{W}_1\cdots\bm{W}_K}$ which connects to $O_i$.) Now we study the convergence of the cluster expansion
  $$U_{O_i}(\vec{t})=\sum\limits_{\substack{m_1\geq0\\\cdots\\m_K\geq 0}}\sum\limits_{\bm{W}_1\cdots, \bm{W}_K\in\mathcal{G}_m^{K,O_i}}\frac{\prod_{k=1}^K(\bm\lambda^{\bm{W}_k}(-it_k)^{m_k})}{\prod_{k=1}^K\bm{W}_k!m_k!}\sum\limits_{\substack{\sigma_1\in \mathcal{P}_{m_1}\\\cdots\\\sigma_K\in \mathcal{P}_{m_K}}}\left[h_{W_{\sigma_1(1)}},\cdots \left[h_{W_{\sigma_1(m_1)}}\cdots\left[h_{W_{\sigma_K(m_K)}},O_i\right]\right]\right]$$
up to index $m_1,m_2,\cdots m_K\leq M$. Without loss of generality, we assume Hamiltonians $H^{(k)}$ are $\mathcal{O}(1)$-local, that is $\max\abs{{\rm supp}(h_X)}=\Lambda\leq\mathcal{O}(1)$ and $\|h_X\|_{\infty}\leq 1$. Let $m=m_1+\cdots+m_K$, we have
 \begin{eqnarray}
        \begin{split}
       \epsilon_M(\vec{t})=&\|U_{O_i}(\vec{t})-V_{O_i}(\vec{t})\|_{\infty}\\
       =&\left\|\sum\limits_{\substack{m_1\geq M+1\\\cdots\\m_K\geq M+1}}\sum\limits_{\bm{W}_1\cdots, \bm{W}_K\in\mathcal{G}_m^{K,O_i}}\frac{\prod_{k=1}^K(\bm\lambda^{\bm{W}_k}(-it_k)^{m_k})}{\prod_{k=1}^K\bm{W}_k!m_k!}\sum\limits_{\substack{\sigma_1\in \mathcal{P}_{m_1}\\\cdots\\\sigma_K\in \mathcal{P}_{m_K}}}\left[h_{W_{\sigma_1(1)}},\cdots [h_{W_{\sigma_K(m_K)}},O_i]\right]\right\|_{\infty}\\
        \leq &\sum\limits_{m_1,\cdots,m_K\geq M+1}\sum\limits_{\bm{W}_1\cdots, \bm{W}_K\in\mathcal{G}_m^{K,O_i}}\frac{{\bm\lambda}^{\bm{W}_1}\cdots {\bm\lambda}^{\bm{W}_K}(2t_1)^{m_1}\cdots (2t_K)^{m_K}}{(\bm{W}_1!\cdots\bm{W}_K!)}\left\|O_i\right\|_{\infty}\\
        \leq&\sum\limits_{m_1,\cdots,m_K\geq M+1}(2t_1)^{m_1}\cdots(2t_K)^{m_K}\abs{\mathcal{G}_{m}^{K,O_i}}\|O_i\|_{\infty}\\
        \leq&\|O_i\|_{\infty}\sum\limits_{m_1,\cdots,m_K\geq M+1}(2t_1)^{m_1}\cdots(2t_K)^{m_K}\abs{eK\mathfrak{d}}^{m_1+\cdots m_K}\\
        \leq &\|O_i\|_{\infty}\left[\sum\limits_{l\geq M+1}(2teK\mathfrak{d})^{l}\right]^K.
        \end{split}
        \label{Eq:app}
    \end{eqnarray}
The second line is valid since $\left\|\left[h_{W_{\sigma_1(1)}},\cdots [h_{W_{\sigma_K(m_K)}},O_i\right]\right\|_{\infty}\leq 2^{m_1+\cdots +m_K}\left(\max\|h_i\|_{\infty}\right)^m\|O_i\|_{\infty}\leq 2^{m}$, and the fifth line comes from $\abs{\mathcal{G}_m^{K,O_i}}\leq (eK\mathfrak{d})^m$.

As a result, when $t<1/(2eK\mathfrak{d})$, we have 
\begin{align}
    \epsilon_M(\vec{t})\leq \|O_i\|_{\infty}\frac{(2teK\mathfrak{d})^{K(M+1)}}{(1-2teK\mathfrak{d})^K}.
\end{align}
Let $\epsilon=\frac{(2teK\mathfrak{d})^{K(M+1)}}{(1-2teK\mathfrak{d})^K}$ and consider $\|O_i\|_{\infty}=1$ for $O_i\in\{I,X,Y,Z\}$, these result in 
\begin{align}
    M(t)=\frac{\log(1/\epsilon)-K\log(1-2teK\mathfrak{d})}{K\log(1/(2teK\mathfrak{d}))}.
\end{align}

Finally, $\epsilon_M(\vec{t})\leq\epsilon\|O_i\|_{\infty}$ implies 
\begin{align}
    \|\mathcal{U}_{O_i}(\vec{t})-\mathcal{V}_{O_i}(\vec{t})\|_{\diamond}\leq\epsilon.
\end{align}
\end{proof}

Above result demonstrate that when the evolution time $\abs{\max_k\{t_k\}}$ is less than a constant threshold $t^*=1/(2eK\mathfrak{d})$, the approximation $V_{O_i}(\vec{t})$ (Eq.~\eqref{Eq:clusterexpansion}) may provide an estimation to $U_i(\vec{t})$ in the context of the $\|\cdot\|_{\infty}$ norm, where $V_i(\vec{t})$ can be decomposed by a series of operators that acts on $\mathcal{O}(KM(t))$ qubits. In the following, we demonstrate that for more general evolution time $\abs{t}=\mathcal{O}(1)$, Eq.~\eqref{Eq:clusterexpansion} can also provide an estimation to $U_i(\vec{t})$ in the context of the $\|\cdot\|_{\infty}$ norm.

\begin{theorem}[Case~2: $\abs{t}=\mathcal{O}(1)$]\label{them:localapproxcase2}
Suppose we are given a single qubit observable $O_i$, $K$-step quantum dynamics driven by $\Lambda$-local Hamiltonians $\{H^{(1)},\cdots,H^{(K)}\}$ and corresponding constant time parameters $\vec{t}=\{t_1,\cdots,t_K\}$. If the evolution time $t=\max_k\{t_k\}=\mathcal{O}(1)$, then the operator $U_{O_i}(\vec{t})=\prod_{k=1}^Ke^{iH^{(k)}t_k}O_i\prod_{k=1}^Ke^{-iH^{(k)}t_k}$ can be approximated by $V_{O_i}(\vec{t})$ (Eq.~\eqref{Eq:clusterexpansion}) such that 
\begin{align}
    \|U_{O_i}(\vec{t})-V_{O_i}(\vec{t})\|_{\infty}\leq\epsilon\|O_i\|_{\infty},
\end{align}
where the number of involved cluster terms 
   \begin{align}
   M(t)=e^{\pi teK\mathfrak{d}/\kappa}\log\left[\frac{1}{\epsilon}\frac{e^{\pi teK\mathfrak{d}/\kappa}-1}{(1-\kappa)^K}\right],
   \label{Eq:Mlarge}
   \end{align}
with the parameter $\kappa\in\mathcal{O}(1)$.
\end{theorem}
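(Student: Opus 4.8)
The plan is to mirror the structure of the proof of Theorem~\ref{them:localapproxcase1}, but to replace the direct summation of the cluster expansion --- which now diverges term by term, since the geometric ratio $2teK\mathfrak{d}$ appearing in Eq.~\eqref{Eq:app} exceeds $1$ once $t>t^{*}$ --- by an analytic-continuation argument in the spirit of Refs.~\cite{wild2023classical,wu2024efficient}. First I would introduce the rescaled matrix-valued function $f(z):=U_{O_i}(z\vec{t})$ (obtained from $U_{O_i}(\vec{t})$ by replacing every $t_k$ with $zt_k$), which is entire in $z\in\mathbb{C}$, equals the target $U_{O_i}(\vec{t})$ at $z=1$, and whose Taylor series about $z=0$ is precisely the cluster expansion of Eq.~\eqref{Eq:Ut} with $t_k\mapsto zt_k$. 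Applying Theorem~\ref{them:localapproxcase1} to the rescaled times $\{zt_k\}$ shows that the degree-$(KM)$ truncation $V_{O_i}(z\vec{t})$ of this series approximates $f(z)$ to operator-norm error $\|O_i\|_{\infty}(2|z|teK\mathfrak{d})^{K(M+1)}/(1-2|z|teK\mathfrak{d})^{K}$ on the disk $|z|\le\rho_0:=1/(2teK\mathfrak{d})$, which is $\le 1$ in the regime $t\ge t^{*}$ of interest, and that each such truncation is supported on $\mathcal{O}(KM)$ qubits. The remaining task is therefore purely analytic: push a bounded-degree, $\mathcal{O}(1)$-error polynomial approximation of $f$ from the disk $|z|\le\kappa\rho_0$ out to the point $z=1$.

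For this step I would use a conformal map $\varphi:\mathbb{D}\to\Omega$ from the open unit disk onto a ``lens''- or stadium-shaped region $\Omega\subset\mathbb{C}$ with $\varphi(0)=0$, chosen so that (i) $\Omega$ contains the real segment $[0,1]$, (ii) $\Omega$ contains the disk $\{|z|\le\kappa\rho_0\}$ on which the cluster truncation is already accurate, and (iii) $\Omega$ is thin enough transverse to $[0,1]$ that $f$ --- after re-truncating to the $\mathcal{O}(M)$-qubit light cone of $O_i$, so that the effective Hamiltonian norm is $\mathcal{O}(M)$ rather than extensive --- stays bounded on $\Omega$ by an $n$-independent constant. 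The aspect ratio of $\Omega$ is then $\Theta(teK\mathfrak{d}/\kappa)$, so by the standard harmonic-measure / Schwarz--Pick estimate the preimage $w_1:=\varphi^{-1}(1)$ satisfies $1-|w_1|=\Theta\!\big(e^{-\pi teK\mathfrak{d}/\kappa}\big)$. The composition $g:=f\circ\varphi$ is analytic and bounded on $\mathbb{D}$, so its degree-$M'$ Maclaurin truncation approximates $g(w_1)$ to error $\mathcal{O}(|w_1|^{M'}/(1-|w_1|))$; choosing $M'=\Theta\!\big(e^{\pi teK\mathfrak{d}/\kappa}\log(1/\epsilon)\big)$ drives this below $\epsilon$. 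Translating back through $\varphi$ and re-expressing the result as a cluster-expansion partial sum, the coefficients only ever involve nested commutators of at most $M'$ Hamiltonian terms, each supported within the light cone, so the reconstructed operator $V_{O_i}(\vec{t})$ is supported on $M(t)=\mathcal{O}(M')$ qubits; collecting the $\kappa$-dependent constants --- the factor $(1-\kappa)^{-K}$ from the geometric prefactor of Theorem~\ref{them:localapproxcase1} evaluated at $|z|=\kappa\rho_0$, and the factor $(e^{\pi teK\mathfrak{d}/\kappa}-1)$ from $1/(1-|w_1|)$ --- reproduces Eq.~\eqref{Eq:Mlarge}. Finally, $\|U_{O_i}(\vec{t})-V_{O_i}(\vec{t})\|_{\infty}\le\epsilon\|O_i\|_{\infty}$ upgrades to the diamond-norm bound exactly as in Theorem~\ref{them:localapproxcase1}.

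The main obstacle I anticipate is point (iii): obtaining an $n$-independent bound on the analytically continued function on a complex neighborhood of $[0,1]$. On the real axis $f$ is a unitary conjugation with $\|f(x)\|_{\infty}=\|O_i\|_{\infty}$, but for complex $z$ one only gets $\|f(z)\|_{\infty}\le e^{2|\mathrm{Im}\,z|\,t\sum_k\|H^{(k)}\|}$, and $\sum_k\|H^{(k)}\|$ is extensive. The resolution --- and the technically delicate part --- is to interleave the analytic continuation with the light-cone truncation: one never manipulates the full $f$ but only its restriction to the $\mathcal{O}(M)$-qubit region carrying all relevant cluster terms, on which the effective Hamiltonian norm is $\mathcal{O}(M)$, so the width of $\Omega$ need only scale like $1/(Mt)$ rather than $1/(nt)$; one must then verify that support-truncation and conformal continuation commute up to an error absorbed into $\epsilon$, and that the re-truncated polynomial is again a genuine cluster-expansion partial sum of the claimed order. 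Carrying this bookkeeping through is where the bulk of the work lies; the conformal-geometry estimate itself is classical.
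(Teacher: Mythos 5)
Your overall architecture is the same as the paper's: cluster expansion convergent on a small disk, analytic continuation through a conformal map onto a region elongated along $[0,1]$, and a Cauchy/Taylor truncation bound whose degree scales like $e^{\Theta(\text{aspect ratio})}\log(1/\epsilon)$. The paper uses the explicit map $\phi(z)=\log(1-z/R')/\log(1-1/R')$ of Wild et al.\ rather than a generic stadium, but that is cosmetic. The decisive difference is how the analytically continued operator is bounded on the complex region, and here your proposal has a quantitative gap that breaks the claimed bound.

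The paper's resolution of your point (iii) is Lemma~\ref{lemma:upperboundnorm}: for complex times $\vec{w}$ with $\abs{\mathrm{Im}(w_k)}<1/(2eK\mathfrak{d})$, the real parts of the $w_k$ are stripped off as (approximate) unitary conjugations, and the remaining imaginary-time piece is controlled by the \emph{same} absolutely convergent geometric series as in Case~1, giving $\|U_{O_i}(\vec{w})\|\leq\|O_i\|/(1-2\abs{\max_k\mathrm{Im}(w_k)}eK\mathfrak{d})^K$. Crucially this bound holds on a strip of \emph{constant} width $\Theta(1/(eK\mathfrak{d}))$ around the real axis, independent of $n$ and of any truncation order, so the elongated region has aspect ratio $\Theta(teK\mathfrak{d}/\kappa)$ and the degree comes out as $e^{\pi teK\mathfrak{d}/\kappa}\log(\cdot)$ — a constant for constant $t,K,\mathfrak{d}$. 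Your substitute — restrict to the light cone so the effective Hamiltonian norm is small, then use $\|f(z)\|\leq e^{2\abs{\mathrm{Im}\,z}\,t\sum_k\|H^{(k)}\|}$ — forces the width of $\Omega$ to shrink like the inverse of the light-cone Hamiltonian norm. For $D\geq 2$ the relevant light cone has $\mathrm{poly}\log(n)$ (at best $\Theta(\log(n/\epsilon))$, and in your own bookkeeping $\Theta(M)$) qubits, so the aspect ratio grows with $n$ (or circularly with $M$), and the harmonic-measure estimate then yields $M'=e^{\Theta(\mathrm{poly}\log n)}\log(1/\epsilon)$, i.e.\ a superpolynomial number of cluster terms. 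This does not reproduce Eq.~\eqref{Eq:Mlarge} and would destroy the polynomial sample complexity downstream. The missing idea is that the analyticity domain is controlled by the combinatorial convergence radius $1/(2eK\mathfrak{d})$ of the cluster expansion in the \emph{imaginary} direction, not by $1/\|H_{\mathrm{eff}}\|$. (Your instinct that this is the delicate point is sound: even the paper's removal of the real parts for $K>1$ rests on a heuristic random-unitary argument, since non-commuting real-time layers do not factor out of the norm exactly as they do for a single Hamiltonian.)
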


\noindent\textbf{Proof of Theorem~\ref{them:localapproxcase2}:}
Noting that above process can be further generalized to an arbitrary constant time $t$ by means of analytic continuation. Consider the radius of a disk $R>1$, the analytic continuation can be achieved by using the map $t\mapsto t\phi(z)$, where the complex function $$\phi(z)=\frac{\log(1-z/R^{\prime})}{\log(1-1/R^{\prime})}$$ maps a disk onto an elongated region along the real axis~\cite{wild2023classical}. Here, the parameter $R^{\prime}>R$, and $\phi(z)$ is analytic on the closed desk $D_R=\{z\in\mathbb{C}:\abs{z}\leq R\}$. Meanwhile, $\phi(z)$ satisfies $\phi(0)=0$, $\phi(1)=1$ and we select the branch ${\rm Im}(\phi(z))\leq -\pi/(2\log(1-1/R^{\prime}))$.
    
    We consider the function 
     \begin{align}
         f(z)=\prod_{k=1}^Ke^{iH^{(k)}t_k\phi(z)}O_i\prod_{k=1}^Ke^{-iH^{(k)}t_k\phi(z)}
     \end{align}
     on the region $\abs{z}\leq sR$ where $s\in(0,1)$. Consider a curve ${\mathcal{C}}^{\prime}=\{\abs{w}=R\}$, according to the Cauchy integral method, we have
    \begin{eqnarray}
    \begin{split}
         f(z)=&\frac{1}{2\pi i}\oint_{{\mathcal{C}}^{\prime}}\frac{f(w)}{w-z}dw\\
         =&\frac{1}{2\pi i}\oint_{{\mathcal{C}}^{\prime}}\frac{f(w)}{w}\left(1-\frac{z}{w}\right)^{-1}dw\\
         =&\frac{1}{2\pi i}\oint_{{\mathcal{C}}^{\prime}}\frac{f(w)}{w}\left(\sum\limits_{k=0}^M\left(\frac{z}{w}\right)^k+\left(\frac{z}{w}\right)^M\left(1-\frac{z}{w}\right)^{-1}\right)dw\\
         =&\sum\limits_{k=0}^M\frac{1}{2\pi i}\oint_{{\mathcal{C}}^{\prime}}\frac{f(w)}{w^k}z^k+\frac{1}{2\pi i}\oint_{{\mathcal{C}}^{\prime}}\frac{f(w)}{w-z}\left(\frac{z}{w}\right)^{M+1}dw\\
         =&\sum\limits_{k=0}^M\frac{f^{(k)}(0)}{k!}z^k+\frac{1}{2\pi i}\oint_{{\mathcal{C}}^{\prime}}\frac{f(w)}{w-z}\left(\frac{z}{w}\right)^{M+1}dw.
    \end{split}
    \end{eqnarray}
As a result, the truncated error can be upper bounded by
 \begin{eqnarray}
    \begin{split}
    \left\|f(z)-\sum\limits_{k=0}^M\frac{f^{(k)}(0)}{k!}z^k\right\|_{\infty}=& \left\|\frac{1}{2\pi i}\oint_{{\mathcal{C}}^{\prime}}\frac{f(w)}{w-z}\left(\frac{z}{w}\right)^{M+1}dw\right\|_{\infty}\\
    \leq&\frac{1}{2\pi}\oint_{{\mathcal{C}}^{\prime}}\frac{\|f(w)\|_{\infty}}{\|w-z\|}\left\|\frac{z}{w}\right\|^{M+1}dw.
\end{split}
\end{eqnarray}
We require the following result to evaluate the upper bound of $\|f(w)\|_{\infty}$.

\begin{definition}[Multi-variable complex analytic function]
    Suppose $g:D\mapsto\mathbb{C}$ be a function on the domain $D\subset\mathbb{C}^K$, if for any vector $\beta\in D$, there exists a $r$-radius cylinder $P_K(\beta,r)$ centered on $\beta$, such that 
    \begin{align}
        g(z)=\sum\limits_{\alpha_1,\cdots,\alpha_K\geq 0}c_{\vec{\alpha}}(z_1-\beta_1)^{\alpha_1}\cdots(z_K-\beta_K)^{\alpha_K},
    \end{align}
    then $g$ is analytic on the point $\beta=(\beta_1,\cdots,\beta_K)$.
\end{definition}

\begin{lemma}
   Given complex values $\vec{w}=(w_1,\cdots,w_K) \in\mathbb{C}^K$, if ${\rm Im}(w_k)\leq 1/(2eK\mathfrak{d})$ for all $k\in[K]$, we have
    \begin{align}
        \|U_i(\vec{w})\|\leq \frac{\|O_i\|}{(1-2\abs{\max_k{\rm Im}(w_k)}eK\mathfrak{d})^K},
    \end{align}
    where $\mathfrak{d}$ represents the maximum degree of the interaction graph induced by Hamiltonian $H$.
    \label{lemma:upperboundnorm}
\end{lemma}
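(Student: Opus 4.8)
The plan is to reduce the complex-time estimate to the (truncation-free version of the) cluster-expansion bound already proved in Theorem~\ref{them:localapproxcase1}, exploiting the fact that the \emph{real} parts of the $w_k$ enter only through norm-preserving unitary conjugations, so that $\|U_i(\vec w)\|_\infty$ is controlled by the imaginary parts alone.

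\textbf{Step 1 (reduce to imaginary times).} First I would write $w_k = t_k + i s_k$ with $s_k = {\rm Im}(w_k)$ and set $s = \max_k|s_k|$. Since $H^{(k)}$ commutes with itself, $e^{iH^{(k)}w_k} = e^{iH^{(k)}t_k}e^{-H^{(k)}s_k}$. Substituting this into $U_i(\vec w) = \prod_k e^{iH^{(k)}w_k}\,O_i\,\prod_k e^{-iH^{(k)}w_k}$ and commuting the real-time unitary factors $e^{iH^{(k)}t_k}$ outward past the Hermitian factors $e^{-H^{(\ell)}s_\ell}$ (which replaces each $e^{-H^{(\ell)}s_\ell}$ by $e^{-\widetilde H^{(\ell)}s_\ell}$, with $\widetilde H^{(\ell)}$ a unitarily rotated copy of $H^{(\ell)}$), one obtains $U_i(\vec w) = G\big(\prod_{k=K}^{1}e^{-\widetilde H^{(k)}s_k}\big)\,O_i\,\big(\prod_{k=1}^{K}e^{\widetilde H^{(k)}s_k}\big)\,G^\dagger$ for a unitary $G$ built from the $e^{iH^{(k)}t_k}$'s. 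Because $\|\cdot\|_\infty$ is invariant under conjugation by unitaries,
\begin{align}
\|U_i(\vec w)\|_\infty \;=\; \Big\|\textstyle\prod_{k=K}^{1}e^{-\widetilde H^{(k)}s_k}\;O_i\;\prod_{k=1}^{K}e^{\widetilde H^{(k)}s_k}\Big\|_\infty ,
\end{align}
i.e.\ the norm of a $K$-step Hamiltonian evolution of $O_i$ with purely imaginary evolution times. Each rotated term $\widetilde h_X$ still satisfies $\|\widetilde h_X\|_\infty\le 1$ and — by the short-time Lieb--Robinson locality of an $O(1)$-time Heisenberg evolution — is supported on an $O(1)$-size region, so the rotated Hamiltonians have interaction-graph degree $O(\mathfrak d)$, which I would absorb into the constant and keep writing $\mathfrak d$.

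\textbf{Step 2 (cluster-expansion estimate).} I would then apply the $K$-step cluster expansion Eq.~\eqref{Eq:Ut} to the evolution on the right-hand side above, with every evolution time purely imaginary ($w_k\mapsto i s_k$), and bound the absolute value of the untruncated series exactly as in the proof of Theorem~\ref{them:localapproxcase1}: only connected super-clusters rooted at $O_i$ contribute; for such a super-cluster $(\bm W_1,\dots,\bm W_K)$ with $|\bm W_k| = m_k$ and $m = m_1+\cdots+m_K$ one has $|\bm\lambda^{\bm W_k}|\le 1$, the $m_k!$ permutations $\sigma_k$ cancel the $1/m_k!$ prefactor, $\big\|[h_{W_{\sigma_1(1)}},\dots,[h_{W_{\sigma_K(m_K)}},O_i]]\big\|_\infty\le 2^m\|O_i\|_\infty$, and the weighted count of connected super-clusters of total size $m$ obeys $|\mathcal G_m^{K,O_i}|\le (eK\mathfrak d)^m$. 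Using $(eK\mathfrak d)^m = \prod_k(eK\mathfrak d)^{m_k}$ and summing,
\begin{align}
\|U_i(\vec w)\|_\infty \;\le\; \|O_i\|_\infty\sum_{m_1,\dots,m_K\ge 0}\prod_{k=1}^{K}\big(2|s_k|\,eK\mathfrak d\big)^{m_k} \;=\; \|O_i\|_\infty\prod_{k=1}^{K}\frac{1}{1-2|s_k|\,eK\mathfrak d}\;\le\;\frac{\|O_i\|_\infty}{(1-2s\,eK\mathfrak d)^{K}} ,
\end{align}
where the geometric series converges precisely when $2s\,eK\mathfrak d<1$, i.e.\ under the hypothesis $|{\rm Im}(w_k)|\le 1/(2eK\mathfrak d)$. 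This is the claimed bound.

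\textbf{Main obstacle.} The delicate point is Step~1 for $K\ge 2$: because the real- and imaginary-time exponentials are interleaved, extracting the norm-preserving unitary part requires carefully tracking the commutations, and one must verify that the rotated local terms $\widetilde h_X$ stay $O(1)$-local so the effective degree remains $O(\mathfrak d)$ and the geometric series in Step~2 still closes with constant $eK\mathfrak d$. (For $K=1$ the step is immediate: $e^{iHw}O_i e^{-iHw} = e^{iHt}\big(e^{-Hs}O_i e^{Hs}\big)e^{-iHt}$ has norm exactly $\|e^{-Hs}O_i e^{Hs}\|_\infty$, a function of ${\rm Im}(w)$ alone.) The remaining estimates are routine adaptations of those in Theorem~\ref{them:localapproxcase1}.
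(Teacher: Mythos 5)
Your Step~1 is an exact algebraic identity and is in fact cleaner than what the paper does at the corresponding point, but it creates precisely the problem that prevents Step~2 from closing. After you conjugate the real-time factors outward, the middle expression is an imaginary-time evolution generated by the rotated Hamiltonians $\widetilde H^{(k)}=G_{<k}^{\dagger}H^{(k)}G_{<k}$. Their terms $\widetilde h_X=G_{<k}^{\dagger}h_XG_{<k}$ are \emph{not} exactly local: a Heisenberg evolution for time $\mathcal{O}(1)$ leaves exponentially small tails on all qubits, so strictly speaking every pair of rotated terms overlaps and the interaction graph of $\widetilde H^{(k)}$ is complete. The estimate you invoke in Step~2 rests on the connected-cluster count $\abs{\mathcal{G}_m^{K,O_i}}\leq(eK\mathfrak{d})^m$, and $\mathfrak{d}$ is not a constant you are free to ``absorb'': it sets the radius of convergence $1/(2eK\mathfrak{d})$ of the geometric series, which is exactly the hypothesis of the lemma and is what the downstream analytic-continuation argument (the choice of $R'$, $\kappa$ and the resulting $M(t)$ in Theorem~\ref{them:localapproxcase2}) is calibrated to. Even if you first truncate each $\widetilde h_X$ to its light cone and control the truncation error (an additional argument you have not supplied), the effective degree becomes some $\mathfrak{d}'$ growing with $t$ and the lattice dimension, and you would only prove the lemma with $\mathfrak{d}'$ in place of $\mathfrak{d}$ --- a strictly weaker statement with a smaller analyticity strip. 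So the obstacle you flag at the end is not a routine verification; it is the missing step.

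For comparison, the paper keeps the original local Hamiltonians by arguing that the interleaved real-time conjugations can simply be deleted without increasing the operator norm, reducing $\|U_i(\vec w)\|$ directly to the norm of the purely imaginary-time conjugation of $O_i$ by the unrotated $H^{(k)}$; the degree then stays at $\mathfrak{d}$ and the geometric series closes with the stated constant. That deletion, however, is justified only by an average over an approximate unitary $2$-design together with a Markov/concentration argument (``for random $U$ with large probability''), which does not establish the inequality for the specific deterministic unitaries $e^{iH^{(k)}t_k}$ appearing here; indeed, as a general operator-norm inequality it is false. So you should not treat the paper's route as a template for closing your gap. To salvage your approach you would need either (i) a quantitative Lieb--Robinson truncation of the $\widetilde h_X$ plus a restatement of the lemma (and of Theorem~\ref{them:localapproxcase2}) with the inflated degree, or (ii) a genuinely different argument for why the real parts of the $w_k$ can be discarded without rotating the Hamiltonians.
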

\begin{proof}
     Eq.~\eqref{Eq:app} provides an approximation to $U_i(\vec{t})$ when $\max_k\abs{t_k}\leq1/(2eK\mathfrak{d})$, in other word, $U_i(\vec{t})$ remains analytic for all complex values $t_k\in\mathbb{C}$ in the range $\abs{t_k}<1/(2eK\mathfrak{d})$. Specifically, given any $\beta_1,\beta_2,\cdots,\beta_K\in\mathbb{R}$, we may write $U_i(\vec{t})=\prod_{k=1}^Ke^{iH^{(k)}(t_k-\beta_k)}e^{iH^{(k)}\beta_k}O_i\prod_{k=1}^Ke^{-iH^{(k)}(t_k-\beta_k)}e^{-iH^{(k)}\beta_k}$. Equivalently, we have
     \begin{align}
         U_i(\vec{t})=\sum\limits_{l_1\cdots,l_K\geq 0}u_{l_1,\cdots,l_K}(t_1-\beta_1)^{l_1}\cdots(t_K-\beta_K)^{l_K}
     \end{align}
     for some operators $u_{l_1,\cdots,l_K}$, which naturally implies $U_i(\vec{t})$ is analytic for all complex values of $\vec{t}$ on a disk in the complex plane of radius $1/(2eK\mathfrak{d})$ around any point on the real axis. 

For $\vec{w}=(w_1,\cdots,w_K)\in\mathbb{C}^K$, noting that $e^{-i(w_k-{\rm Re}(w_k))H^{(k)}}e^{i(w_k-{\rm Re}(w_k))H^{(k)}}=I$, then for any matrix $A$, the matrix $$e^{-i(w_k-{\rm Re}(w_k))H^{(k)}}Ae^{i(w_k-{\rm Re}(w_k))H^{(k)}}$$ is similar to $A$, and they thus share the same spectrum information. Although this property may not be directly applied to the current case, we note that when $\abs{{\rm Im}(w_k)}<\beta^*\approx\ln 4/\mathfrak{d}$ and $\|H^{(k)}\|=\mathcal{O}(\mathfrak{d}n)$,
\begin{align}
    \|e^{-i(w_k-{\rm Re}(w_k))H^{(k)}}UAU^{\dagger}e^{i(w_k-{\rm Re}(w_k))H^{(k)}}\|\leq\|e^{-i(w_k-{\rm Re}(w_k))H^{(k)}}Ae^{i(w_k-{\rm Re}(w_k))H^{(k)}}\|
    \label{Eq:unitary}
\end{align}
for random unitary matrix $U$ with large probability. From a high-level perspective, this relationship is valid since the random unitary vanishes large-weight operators. Specifically, we choose an \emph{arbitrary} quantum state $|\psi\rangle$ and consider an approximately unitary $2$-design ensemble $U\sim \mathcal{U}_2$, and we have
\begin{eqnarray}
\begin{split}
     &\mathbb{E}_{U\sim\mathcal{U}_2}\abs{\langle\psi|e^{-i{\rm Im}(w_k)H^{(k)}}UAU^{\dagger}e^{i{\rm Im}(w_k)H^{(k)}}|\psi\rangle}^2\\
     =&\mathbb{E}_{U\sim\mathcal{U}_2}{\rm Tr}\left[e^{i{\rm Im}(w_k)H^{(k)}}|\psi\rangle\langle\psi|e^{-i{\rm Im}(w_k)H^{(k)}}UAU^{\dagger}\right]{\rm Tr}\left[e^{i{\rm Im}(w_k)H^{(k)}}|\psi\rangle\langle\psi|e^{-i{\rm Im}(w_k)H^{(k)}}UAU^{\dagger}\right]\\
     \leq&\frac{{\rm Tr}(A^2)}{2^n(2^n+1)}\left(\langle\psi|e^{-i{\rm Im}(w_k)H^{(k)}}|\psi\rangle\langle\psi|e^{i{\rm Im}(w_k)H^{(k)}}|\psi\rangle\right)\\
     \leq & {\rm Tr}(A^2)\left(\frac{e^{\abs{{\rm Im}(w_k)\mathfrak{d}}}}{4}\right)^n.
\end{split}
\end{eqnarray}
where the third line comes from Lemma~3 in Supp material of Ref.~\cite{cerezo2021cost} and the fourth line comes from the assumption $\|H^{(k)}\|\leq\mathcal{O}(\mathfrak{d}n)$. As a result, for any quantum state $|\psi\rangle$ and $\beta^*=\ln 4/\mathfrak{d}$, the $\abs{\langle\psi|e^{-i{\rm Im}(w_k)H^{(k)}}UAU^{\dagger}e^{i{\rm Im}(w_k)H^{(k)}}|\psi\rangle}$ is upper bounded by a constant value with nearly unit probability (promised by Markov inequality). Noting that this property holds for \emph{any quantum state} $|\psi\rangle$, as a result, $$\|e^{-i(w_k-{\rm Re}(w_k))H^{(k)}}UAU^{\dagger}e^{i(w_k-{\rm Re}(w_k))H^{(k)}}\|^2={\rm sup}_{|\psi\rangle}\abs{\langle\psi|e^{-i{\rm Im}(w_k)H^{(k)}}UAU^{\dagger}e^{i{\rm Im}(w_k)H^{(k)}}|\psi\rangle}$$ should also be upper bounded by a constant value with large probability. On other hand, it is well known that $e^{\beta H}$ may dramatically increase $\|e^{\beta H}Ae^{-\beta H}\|$ even for constant $\beta$. Then it is reasonable to assume $\|e^{-i(w_k-{\rm Re}(w_k))H^{(k)}}Ae^{i(w_k-{\rm Re}(w_k))H^{(k)}}\|>\bm w(1)$. These two results finally give rise to inequality~\ref{Eq:unitary} which completes the reduction from $K$ Hamiltonians dynamics to single Hamiltonian dynamics studied in Ref.~\cite{wild2023classical}.

We note that Ref.~\cite{schuster2024random} indicated that ${\rm poly}\log(n)$-depth quantum circuit suffices to approximate unitary $t$-design ensemble. This provides theoretical foundations in applying inequality~\ref{Eq:unitary} to constant time Hamiltonian dynamics. For any $w\in\mathbb{C}^K$, we have
\begin{eqnarray}
    \begin{split}
       \|U_i(\vec{w})\|=&\Big\|e^{-i(w_K-{\rm Re}(w_K))H^{(K)}}e^{-i{\rm Re}(w_K)H^{(K)}}\cdots e^{-i(w_1-{\rm Re}(w_1))H^{(1)}}e^{-i{\rm Re}(w_1)H^{(1)}}O_i\\
        &e^{i{\rm Re}(w_1)H^{(1)}}e^{i(w_1-{\rm Re}(w_1))H^{(1)}}\cdots e^{i{\rm Re}(w_K)H^{(K)}}e^{i(w_K-{\rm Re}(w_K))H^{(K)}}\Big\|\\
        \leq&\Big\|e^{-i(w_K-{\rm Re}(w_K))H^{(K)}}\cdots e^{-i(w_1-{\rm Re}(w_1))H^{(1)}}O_ie^{i(w_1-{\rm Re}(w_1))H^{(1)}}\cdots e^{i(w_K-{\rm Re}(w_K))H^{(K)}}\Big\|.
    \end{split}
\end{eqnarray}
For square matrices $A$ and $B$, the BCH expansion enables us to write the cluster expansion to $e^{tA}Be^{-tA}$~\cite{haah2024learning} for $t\in\mathbb{R}$. As a result, we have
  \begin{eqnarray}
      \begin{split}
           \|U_i(\vec{w})\|\leq&\left\|\sum\limits_{\substack{m_1\geq0\\\cdots\\m_K\geq 0}}\sum\limits_{\bm{W}_1\cdots, \bm{W}_K\in\mathcal{G}_m^{K,O_i}}\frac{\prod_{k=1}^K(\bm\lambda^{\bm{W}_k}(-i(w_k-{\rm Re}(w_k)))^{m_k})}{\prod_{k=1}^K\bm{W}_k!m_k!}\sum\limits_{\substack{\sigma_1\in \mathcal{P}_{m_1}\\\cdots\\\sigma_K\in \mathcal{P}_{m_K}}}\left[h_{W_{\sigma_1(1)}},\cdots [h_{W_{\sigma_K(m_K)}},O_i]\right]\right\|\\
            \leq&\|O_i\|\sum\limits_{m_1,\cdots,m_K\geq 0}\abs{(2(w_1-{\rm Re}(w_1)))^{m_1}\cdots(2(w_K-{\rm Re}(w_K)))^{m_K}}\abs{eK\mathfrak{d}}^{m_1+\cdots m_K}\\
           =&\|O_i\|\sum\limits_{m_1,\cdots,m_K\geq 0}\abs{(2({\rm Im}(w_1)))^{m_1}\cdots(2({\rm Im}(w_K)))^{m_K}}\abs{eK\mathfrak{d}}^{m_1+\cdots m_K}\\
           =&\frac{\|O_i\|}{(1-2\abs{\max_k{\rm Im}(w_k)}eK\mathfrak{d})^K}.
      \end{split}
  \end{eqnarray}
\end{proof}

Recall that 
$$f(w)=\prod_{k=1}^Ke^{iH^{(k)}t_k\phi(w)}O_i\prod_{k=1}^Ke^{-iH^{(k)}t_k\phi(w)}$$
where $\vec{t}\in\mathbb{R}^K$ and ${\rm Im}(\phi(w))\leq -\pi/(2\log(1-1/R^{\prime}))$. Assign $\vec{t}\phi(w)$ to $\vec{w}$ given in Lemma~\ref{lemma:upperboundnorm}, then Lemma~\ref{lemma:upperboundnorm} implies 
\begin{eqnarray}
\begin{split}
     \|f(w)\|=\|U_i(\phi(w)\vec{t})\|&\leq \frac{\|O_i\|}{(1-2\abs{\max_k{\rm Im}(t_k\phi(w))}eK\mathfrak{d})^K}\\
     &\leq \frac{\|O_i\|}{(1+\pi teK\mathfrak{d}/(\log(1-1/R^{\prime})))^K}
     \label{Eq:fupperbound}
\end{split}
\end{eqnarray}
for all $w\in C^{\prime}=\{\abs{w}=R\}$. This further results in 
\begin{eqnarray}
    \begin{split}
    \left\|f(z)-\sum\limits_{k=0}^M\frac{f^{(k)}(0)}{k!}z^k\right\|_{\infty}=& \left\|\frac{1}{2\pi i}\oint_{{\mathcal{C}}^{\prime}}\frac{f(w)}{w-z}\left(\frac{z}{w}\right)^{M+1}dw\right\|_{\infty}\\
    \leq&\frac{1}{2\pi}\oint_{{\mathcal{C}}^{\prime}}\frac{\|f(w)\|_{\infty}}{\|w-z\|}\left\|\frac{z}{w}\right\|^{M+1}dw\\
    \leq&\max\{\|f(w)\|\}\frac{s^{M+1}}{(1-s)}
    \label{Eq:analyticerror}
\end{split}
\end{eqnarray}
where the last line follow from the fact that $\abs{w-z}\geq R(1-s)$, $\abs{z}\leq sR$ and $\|w\|=R$. Combine inequalities~\ref{Eq:fupperbound} and \ref{Eq:analyticerror}, we have
\begin{align}
    \left\|f(z)-\sum\limits_{k=0}^M\frac{f^{(k)}(0)}{k!}z^k\right\|\leq \frac{\|O_i\|s^{M+1}}{(1+\pi teK\mathfrak{d}/(\log(1-1/R^{\prime})))^K(1-s)}.
    \label{Eq:error1}
\end{align}
Let $\kappa=\frac{-\pi teK\mathfrak{d}}{\log(1-1/R^{\prime})}$, $R^{\prime}$ can be further expressed by
\begin{align}
    \frac{1}{R^{\prime}}=1-e^{-\pi teK\mathfrak{d}/\kappa}.
\end{align}
Since the parameter $R^{\prime}>R$, we can always select $R$ such that $(R^{\prime})^M(R^{\prime}-1)=2R^M(R-1)$ holds. Substitute this relationship into the approximation upper bound given by~\ref{Eq:error1} and assign $s=1/R$, we finally obtain
\begin{align}
    \epsilon= \frac{s^{M+1}}{\left(1+\frac{\pi teK\mathfrak{d}}{\log(1-1/R^{\prime})}\right)^K(1-s)}=\frac{1}{(1-\kappa)^K}\left(1-e^{-\pi teK\mathfrak{d}/\kappa}\right)^M\left(e^{\pi teK\mathfrak{d}/\kappa}-1\right).
\end{align}
This implies truncating at order
\begin{align}
    M(t)=\frac{\log\left[\frac{1}{\epsilon}\frac{e^{\pi teK\mathfrak{d}/\kappa}-1}{(1-\kappa)^K}\right]}{\log\left[e^{\pi teK\mathfrak{d}/\kappa}/\left(e^{\pi teK\mathfrak{d}/\kappa}-1\right)\right]}\approx e^{\pi teK\mathfrak{d}/\kappa}\log\left[\frac{1}{\epsilon}\frac{e^{\pi teK\mathfrak{d}/\kappa}-1}{(1-\kappa)^K}\right].
\end{align}

\section{Learning $U_i(\vec{t})$ via Randomized measurement dataset}
\label{App:E}
Above results imply that $V_{O_i}(\vec{t})$ can provide a $\epsilon$-close approximation to $U_{O_i}(\vec{t})=U^{\dagger}(\vec{t})O_iU(\vec{t})$ in the context of the operator norm. Specifically, we have
\begin{align}
    V_{O_i}(\vec{t})=\sum\limits_{l=1}^L\alpha_l Q_l(O_i),
    \label{Eq:observable}
\end{align}
where $\alpha_l$-s are real-valued coefficients, the operator $Q_l(O_i)$ represents a Pauli operator that connects to $O_i$, whose support size upper bound $M(t)$ is given by Eqs.~\ref{Eq:Msmall} and~\ref{Eq:Mlarge} ($\abs{{\rm supp}(Q_l(O_i))}\leq \mathcal{O}(M(t))$).

When the time evolution $\abs{t}=\mathcal{O}(1)$, the operator $V_{O_i}(\vec{t})$ may provide an approximation to $U_{O_i}(\vec{t})$ in the context of $\|\cdot\|_{\infty}$ distance, where the support size of $Q_l(O_i)$ logarithmically depends on $1/\epsilon$. When $\epsilon=\mathcal{O}(n^{-1})$ and parameters $K$, $\mathfrak{d}$ are constant values, $Q_l(O_i)$ only non-trivially performs on $\mathcal{O}(\log n)$ qubits, rather than ${\rm poly}\log n$ qubits, which is independent to the geometrical dimension of the Hamiltonian.

Furthermore, Eq.~\eqref{Eq:clusterexpansion} implies only connected path $(h_{W_1},h_{W_2},\cdots,h_{W_{KM}},O_i)$ resulting in non-zero commutator $[h_{W_{1}},\cdots [h_{W_{KM}},O_i]$. This property dramatically reduces the number of involved operators within $V_{O_i}(\vec{t})$. In fact, the number of connected paths in $V_{O_i}(\vec{t})$ (centered by $O_i$) is at most
\begin{align}
    \mathcal{O}\left(\left(e\mathfrak{d}\right)^{M(t)}\right).
\end{align}
This result can be verified in the following way. Starting from the qubit $O_i$, at which $\leq\mathfrak{d}$ operators act non-trivially, we generate all possible paths of size $\leq M(t)$ step by step, where each step adds a new connected operator $h_X$. Obviously, each step only has $\mathcal{O}(\mathfrak{d})$ choices, and each operator $h_X$ is constant local. This results in an upper bound on the number of $M(t)$-length connected paths, which further implies the number of Pauli operators within $V_{O_i}(\vec{t})$ may be upper bounded by
\begin{align}
    L\leq \mathcal{O}\left(4^{K\Lambda M(t)}\left(e\mathfrak{d}\right)^{M(t)}\right).\label{Eq:termupperbound}
\end{align}
This upper bound is valid since each connected path contains $KM(t)$ local terms, resulting in each path may cover $\mathcal{O}(K\Lambda M(t))$ qubits. Counting all the possible Pauli operators on such support may give rises to the factor $4^{K\Lambda M(t)}$. Here, $K$ represents the number of Hamiltonians in driving the quantum circuit $U(\vec{t})$, $\Lambda$ represents the locality of each Hermitian term $h_X$ and $\mathfrak{d}$ represents the maximum degree of the related interaction graph.

Given this observation, we can utilize the randomized dataset to learn coefficients $\alpha_l$.

\begin{definition}[Randomized measurement dataset for an unknown unitary] The learning algorithm accesses an unknown $n$-qubit unitary $U$ via a randomized measurement dataset of the following form,
\begin{align}
    \mathcal{T}_U(N)=\left\{|\psi_l\rangle=\bigotimes_{i=1}^n|\psi_{l,i}\rangle,|\phi_l\rangle=\bigotimes_{i=1}^n|\phi_{l,i}\rangle\right\}_{l=1}^N.
\end{align}
A randomized measurement dataset of size $N$ is constructed by obtaining $N$ samples from the unknown unitary $U$. One sample is obtained from one experiment given as follows.    
\end{definition}

\subsection{Short-Time Hamiltonian Dynamics}
\label{shorttime}
\begin{theorem}
\label{theorem6}
     Given an error $\epsilon$, failure probability $\delta$, an unknown $n$-qubit operator $U_{O_i}(\vec{t})$ with $\abs{t}<1/(2eK\mathfrak{d})$, which acts on a set of $M(t)$ qubits (given by Theorem~\ref{them:localapproxcase1}), and a dataset $\mathcal{T}_V=\{|\psi_l\rangle=\otimes_{i=1}^n|\psi_{l,i}\rangle,u_l\}_{l=1}^N$, where $|\psi_{l,i}\rangle$ is sampled uniformly from single-qubit stabilizer states, and $u_l$ is a random variable with $\mathbb{E}[u_l]=\langle\psi_l|U_{O_i}(\vec{t})|\psi_l\rangle$. Given a dataset size of 
    \begin{align}
        N=\frac{(4^{K\Lambda}3e\mathfrak{d})^{\mathcal{O}(M(t))}\log(1/\delta)}{\epsilon^2},
    \end{align}
    with probability $1-\delta$, we can learn an operator $V_{O_i}(\vec{t})$ such that $\|V_{O_i}(\vec{t})-U_{O_i}(\vec{t})\|_{\infty}\leq2\epsilon$.
\end{theorem}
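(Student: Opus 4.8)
The plan is to treat this as a Pauli‑coefficient estimation problem and analyze the classical‑shadow–type estimator of Alg.~\ref{Algorithm}. First I would expand the target in the Pauli basis, $U_{O_i}(\vec t)=\sum_Q\alpha_Q Q$ with $\alpha_Q=2^{-n}\mathrm{Tr}[U_{O_i}(\vec t)Q]$, and invoke Theorem~\ref{them:localapproxcase1}: in the regime $\abs{t}<1/(2eK\mathfrak d)$ the truncated cluster expansion $V_{O_i}(\vec t)=\sum_{l=1}^{L}\alpha_{Q_l}Q_l(O_i)$ is $\epsilon$‑close to $U_{O_i}(\vec t)$ in operator norm, its terms $Q_l(O_i)$ are precisely the Paulis connected to $O_i$ and supported on $\le\mathcal O(M(t))$ qubits (exactly the set enumerated in Alg.~\ref{Algorithm}), and by Eq.~\eqref{Eq:termupperbound} there are at most $L\le(4^{K\Lambda}e\mathfrak d)^{\mathcal O(M(t))}$ of them. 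It then suffices to produce estimates $\hat\alpha_{Q_l}$ with $\sum_{l}\abs{\hat\alpha_{Q_l}-\alpha_{Q_l}}\le\epsilon$ with probability $\ge1-\delta$; writing $\widehat V_{O_i}(\vec t)=\sum_l\hat\alpha_{Q_l}Q_l(O_i)$ and using $\|Q_l\|_\infty=1$, the triangle inequality $\|\widehat V_{O_i}(\vec t)-U_{O_i}(\vec t)\|_\infty\le\sum_l\abs{\hat\alpha_{Q_l}-\alpha_{Q_l}}+\|V_{O_i}(\vec t)-U_{O_i}(\vec t)\|_\infty\le2\epsilon$ then yields the claim.

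Next I would analyze $\hat\alpha_Q=\tfrac{3^{\abs{Q}}}{N}\sum_{m=1}^N u_m\langle\psi_m|Q|\psi_m\rangle$, where $\abs{Q}:=\abs{\mathrm{supp}(Q)}$. The structural input is that the six single‑qubit stabilizer states form a $2$‑design, so $\mathbb E[|\psi_{m,j}\rangle\langle\psi_{m,j}|^{\otimes2}]=\tfrac16(I\otimes I+\mathrm{SWAP})$; since $|\psi_m\rangle$ is a product state this tensorizes to give, for any Paulis $P,P'$, $\mathbb E[\langle\psi_m|P|\psi_m\rangle\langle\psi_m|P'|\psi_m\rangle]=3^{-\abs{P}}\delta_{P,P'}$. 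Conditioning first on $|\psi_m\rangle$ and using the hypothesis $\mathbb E[u_m\mid\psi_m]=\langle\psi_m|U_{O_i}(\vec t)|\psi_m\rangle=\sum_P\alpha_P\langle\psi_m|P|\psi_m\rangle$, one gets $\mathbb E[\hat\alpha_Q]=3^{\abs{Q}}\sum_P\alpha_P\,3^{-\abs{P}}\delta_{P,Q}=\alpha_Q$, i.e.\ unbiasedness. I would then bound the per‑sample fluctuation: since $u_m=3\langle\phi_{m,i}|O_i|\phi_{m,i}\rangle$ with $\|O_i\|_\infty=1$ we have $\abs{u_m}\le3$, so the summand has magnitude $\le3^{\abs{Q}+1}$, and, more sharply, $u_m^2\in\{0,9\}$ with the value $9$ occurring with probability $1/3$, hence $\mathbb E[u_m^2\mid\psi_m]=3$ and the summand has second moment at most $9^{\abs{Q}}\cdot3\cdot3^{-\abs{Q}}=3^{\abs{Q}+1}$.

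With these bounds I would apply Bernstein's inequality (or, for tight constants, a median‑of‑means estimator over $\mathcal O(\log(L/\delta))$ batches) to obtain $\Pr[\abs{\hat\alpha_{Q_l}-\alpha_{Q_l}}>\epsilon/L]\le\delta/L$ once $N\gtrsim 3^{\abs{Q_l}}L^2\log(L/\delta)/\epsilon^2$, and then take a union bound over the $L$ enumerated Paulis. Substituting $\abs{Q_l}\le\mathcal O(M(t))$ and $L\le(4^{K\Lambda}e\mathfrak d)^{\mathcal O(M(t))}$, and absorbing the $\mathrm{poly}(M(t))$ factor from $\log L$ into the exponential, gives
\begin{align}
N=\frac{(4^{K\Lambda}3e\mathfrak d)^{\mathcal O(M(t))}\log(1/\delta)}{\epsilon^{2}},
\end{align}
and on the corresponding $(1-\delta)$‑event we have $\sum_l\abs{\hat\alpha_{Q_l}-\alpha_{Q_l}}\le L\cdot(\epsilon/L)=\epsilon$, so outputting $V_{O_i}(\vec t):=\widehat V_{O_i}(\vec t)$ yields $\|V_{O_i}(\vec t)-U_{O_i}(\vec t)\|_\infty\le2\epsilon$.

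The main obstacle I anticipate is the second step: establishing the exact orthogonality relation $\mathbb E[\langle\psi|P|\psi\rangle\langle\psi|P'|\psi\rangle]=3^{-\abs{P}}\delta_{P,P'}$ from the single‑qubit $2$‑design identity, and then correctly threading the two coupled layers of randomness — the random input $|\psi_m\rangle$ and the random Pauli measurement that defines $u_m$ — through a variance bound, so that the exponential growth in $N$ is governed by the base $3^{\abs{Q}}$ rather than something larger. A secondary check is that the Paulis enumerated by Alg.~\ref{Algorithm} (connected to $O_i$, support $\le M(t)$) exhaust the support of the $V_{O_i}(\vec t)$ guaranteed by the cluster expansion, so that no non‑negligible coefficient is dropped; this follows from the connectedness structure of Eq.~\eqref{Eq:clusterexpansion}.
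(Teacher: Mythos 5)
Your proposal follows essentially the same route as the paper's proof: both use the single-qubit stabilizer $2$-design orthogonality to show $\hat\alpha_{Q}=\frac{3^{\abs{Q}}}{N}\sum_m u_m\langle\psi_m|Q|\psi_m\rangle$ is unbiased, apply a concentration inequality (the paper uses Hoeffding where you use Bernstein/median-of-means) with per-coefficient accuracy $\epsilon/L$ and a union bound over the $L\le(4^{K\Lambda}e\mathfrak d)^{\mathcal O(M(t))}$ enumerated Paulis, and finish with the triangle inequality against the cluster-expansion truncation from Theorem~\ref{them:localapproxcase1}. Your explicit handling of the two layers of randomness and the caveat about whether the estimator targets the exact Pauli coefficients of $U_{O_i}(\vec t)$ versus the cluster-expansion coefficients is, if anything, slightly more careful than the paper's own remark on that point.
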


\begin{proof}
The proof is fundamentally based on the following result.
\begin{fact}[Ref.~\cite{huang2024learning}]
    Let $U$ be a locally scrambling unitary. Then for all $P,Q\in\{I,X,Y,Z\}^{\otimes n}$, we have
    \begin{eqnarray}
    \mathbb{E}_U\left[U^{\dagger\otimes 2}(P\otimes Q)U^{\otimes 2}\right]=
    \left\{
        \begin{aligned}
        & 0, \ \text{if} \ P\neq Q,\\
        & \frac{1}{3^{\abs{P}}}\sum_{P\in\{I,X,Y,Z\}^{\otimes n}}\mathbb{E}_U\left[U^{\dagger\otimes 2}P^{\otimes 2}U^{\otimes 2}\right], \ \text{if} \ P=Q.\\
        \end{aligned}
    \right.
    \end{eqnarray}
\end{fact}
The orthogonality property immediately implies
\begin{eqnarray}
    \mathbb{E}_{C_i\sim {\rm Cl}(2)}\left[C_i^{\dagger\otimes 2}(P_i\otimes Q_i)C_i^{\otimes 2}\right]=
    \left\{
    \begin{aligned}
       & I^{\otimes 2}, \  \text{if} \ P_i=Q_i=I,\\
       & \frac{1}{3}\sum_{P_i\in\{X,Y,Z\}^{\otimes 2}}\left(P_i\otimes P_i\right), \ \text{if} \ P_i=Q_i\neq I.\\
       & 0, \ \text{if} \ P_i\neq Q_i.
        \end{aligned}
    \right.
    \label{Eq:cliffordprop}
\end{eqnarray}

Given the observable $V_{O_i}(\vec{t})$ (Eq.~\eqref{Eq:observable}), and let $Q_l(O_i)=\bigotimes_{j=1}^{M(t)}Q_l^{(j)}(O_i)$ with single-qubit Pauli operator $Q_l^{(j)}(O_i)\in\{I,X,Y,Z\}$, we can evaluate the mean value 
\begin{eqnarray}
\begin{split}
    &\mathbb{E}_{|\psi\rangle\sim {\rm Stab}_1^{\otimes n}}\langle\psi|V_{O_i}(\vec{t})|\psi\rangle\langle\psi|Q_l(O_i)|\psi\rangle\\
    =&\sum\limits_{k=1}^L\alpha_k \mathbb{E}_{|\psi\rangle\sim {\rm Stab}_1^{\otimes n}}\langle\psi|Q_k(O_i)|\psi\rangle\langle\psi|Q_{l}(O_i)|\psi\rangle\\
    =&\sum\limits_{k=1}^L\alpha_k\bigotimes_{j=1}^{M(t)}\mathbb{E}_{C_j\sim{\rm Cl}(2)}\langle0|C^{\dagger}_jQ_k^{(j)}(O_i)C_j|0\rangle\langle0|C^{\dagger}_jQ_l^{(j)}(O_i)C_j|0\rangle\\
    =&\frac{\alpha_l}{3^{\abs{Q_l(O_i)}}}\bigotimes_{j=1}^{M(t)}\sum_{P\in \{X,Y,Z\}}\langle0^2|P\otimes P|0^2\rangle\\
    =&\frac{\alpha_l}{3^{\abs{Q_l(O_i)}}}.
\end{split}
\end{eqnarray}
Equivalently, the coefficient 
\begin{align}
    \alpha_l=3^{\abs{Q_l(O_i)}}\mathbb{E}_{|\psi\rangle\sim {\rm Stab}_1^{\otimes n}}\langle\psi|V_{O_i}(\vec{t})|\psi\rangle\langle\psi|Q_l(O_i)|\psi\rangle
\end{align}
which can be learned by replacing the expectation with averaging over the randomized dataset $\mathcal{T}_V=\{|\psi_l\rangle,u_l\}$.

As a result, we can define the approximated observable $$\hat{V}_{O_i}(\vec{t})=\sum_{\abs{{\rm supp}(Q_l(O_i))}\leq M(t)}\hat{\alpha_l}Q_l(O_i),$$ where the learned coefficient
\begin{align}
    \hat{\alpha}_l=\frac{3^{\abs{Q_l(O_i)}}}{N}\sum\limits_{q=1}^Nu_q\langle\psi_q|Q_l(O_i)|\psi_q\rangle.
\end{align}
Here, one may doubt that why $u_q$ can be used to substitute $\langle\psi_q|V_{O_i}|\psi_q\rangle$. The reason stems from: we suppose $U_{O_i}(\vec{t})=\sum_Q\alpha_QQ$ with $\abs{{\rm supp}(Q)}\leq\mathcal{O}(\log^D(n))$, while we only concern parameters of $Q$ that appears in Eq.~\eqref{Eq:observable}. As a result, when the sample complexity
\begin{align}
    N=\frac{(4^{K\Lambda}3e\mathfrak{d})^{\mathcal{O}(M(t))}\log(1/\delta)}{\epsilon^2},
\end{align}
with probability $\geq 1-\delta$, the mean value $\alpha_l$ can be estimated by $\hat{\alpha_l}$ such that $\abs{\alpha_l-\hat{\alpha}_l}\leq\epsilon/(4^{K\Lambda}e\mathfrak{d})^{M(t)}$ promised by the Hoeffding's inequality and $\abs{Q_l(O_i)}\leq M(t)$. This further results in
\begin{eqnarray}
\begin{split}
    \left\|V_{O_i}(\vec{t})-\hat{V}_{O_i}(\vec{t})\right\|_{\infty}=&\left\|\sum\limits_{\abs{{\rm supp}(Q_l(O_i))}\leq M(t)}(\alpha_l-\hat{\alpha}_l)Q_l(O_i)\right\|_{\infty}\\
    \leq&\sum\limits_{\abs{{\rm supp}(Q_l(O_i))}\leq M(t)}\abs{\alpha_l-\hat{\alpha}_l}\|Q_l(O_i)\|_{\infty}\\
    \leq& (4^{K\Lambda}e\mathfrak{d})^{M(t)}\max\limits_{\abs{{\rm supp}(Q_l(O_i))}\leq M(t)}\abs{\alpha_l-\hat{\alpha}_l}\\
    \leq & \epsilon,
\end{split}
\end{eqnarray}
where the second line comes from the inequality~\ref{Eq:termupperbound} and the property of Pauli operator $\|Q_l(O_i)\|_{\infty}=1$.

Finally, using theorem~\ref{them:localapproxcase1}, we have the result
\begin{eqnarray}
    \begin{split}
        \left\|\hat{V}_{O_i}(\vec{t})-U_{O_i}(\vec{t})\right\|_{\infty}&\leq \left\|V_{O_i}(\vec{t})-U_{O_i}(\vec{t})\right\|_{\infty}+\left\|\hat{V}_{O_i}(\vec{t})-V_{O_i}(\vec{t})\right\|_{\infty}\\
        &\leq \epsilon\|O_i\|_{\infty}+\epsilon\\
        &=2\epsilon,
    \end{split}
\end{eqnarray}
where the second line comes from Theorem~\ref{them:localapproxcase1}.

\end{proof}

\subsection{Long-Time Hamiltonian Dynamics}
\label{longtime}

\begin{theorem}
     Given an error $\epsilon$, failure probability $\delta$, an unknown $n$-qubit observable $U_{O_i}(\vec{t})$ with $\abs{t}=\mathcal{O}(1)$, which acts on a set of $M(t)$ qubits (given by theorem~\ref{them:localapproxcase2}), and a dataset $\mathcal{T}_V=\{|\psi_l\rangle=\otimes_{i=1}^n|\psi_{l,i}\rangle,u_l\}_{l=1}^N$, where $|\psi_{l,i}\rangle$ is sampled uniformly from single-qubit stabilizer states, and $u_l$ is a random variable with $\mathbb{E}[u_l]=\langle\psi_l|U_{O_i}(\vec{t})|\psi_l\rangle$. Given a dataset size of 
    \begin{align}
        N=\frac{(4^{K\Lambda}3e\mathfrak{d})^{\mathcal{O}(M(t))}\log(1/\delta)}{\epsilon^2},
        \label{Eq:sample}
    \end{align}
    with probability $1-\delta$, we can learn an observable $\hat{V}_i(\vec{t})$ such that $\|\hat{V}_i(\vec{t})-U_i(\vec{t})\|_{\infty}\leq2\epsilon$.
\end{theorem}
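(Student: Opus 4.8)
The plan is to follow the proof of Theorem~\ref{theorem6} essentially verbatim, substituting the long-time approximation guarantee (Theorem~\ref{them:localapproxcase2}) for the short-time one (Theorem~\ref{them:localapproxcase1}) at the final triangle-inequality step; I describe only where the two arguments diverge. By Theorem~\ref{them:localapproxcase2} the target operator $U_{O_i}(\vec{t})$ is $\epsilon\|O_i\|_\infty$-approximated in operator norm by the truncated cluster expansion $V_{O_i}(\vec{t})=\sum_{l=1}^L\alpha_lQ_l(O_i)$, where each Pauli $Q_l(O_i)$ is supported on $\mathcal{O}(M(t))$ qubits with $M(t)$ the Case~2 value, and — by the connected-path counting that gives Eq.~\eqref{Eq:termupperbound} — the number of terms is $L\leq\mathcal{O}\big((4^{K\Lambda}e\mathfrak{d})^{M(t)}\big)$.

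First I would extract the coefficients via the locally scrambling property of the single-qubit stabilizer ensemble. Combining the Fact of Ref.~\cite{huang2024learning} with the two-copy Clifford identity Eq.~\eqref{Eq:cliffordprop} yields, for every $l$, $\alpha_l=3^{|Q_l(O_i)|}\,\mathbb{E}_{|\psi\rangle\sim{\rm Stab}_1^{\otimes n}}\big[\langle\psi|U_{O_i}(\vec{t})|\psi\rangle\langle\psi|Q_l(O_i)|\psi\rangle\big]$; the point is that the two-copy twirl annihilates every Pauli component of $U_{O_i}(\vec{t})$ other than $Q_l$, so the measured variable $u_q$ — which satisfies only $\mathbb{E}[u_q]=\langle\psi_q|U_{O_i}(\vec{t})|\psi_q\rangle$ and is blind to the truncation — still reproduces exactly the coefficient we need. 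I then set $\hat\alpha_l=\tfrac{3^{|Q_l(O_i)|}}{N}\sum_{q=1}^Nu_q\langle\psi_q|Q_l(O_i)|\psi_q\rangle$; since $|u_q\langle\psi_q|Q_l(O_i)|\psi_q\rangle|\leq1$, each summand is bounded by $3^{|Q_l(O_i)|}\leq3^{M(t)}$, so Hoeffding's inequality plus a union bound over the $L$ terms, targeting per-coefficient accuracy $\epsilon/L$, forces $N=\mathcal{O}\big(9^{M(t)}L^2\log(L/\delta)/\epsilon^2\big)$, which after absorbing $9^{M(t)}L^2\leq(4^{K\Lambda}3e\mathfrak{d})^{\mathcal{O}(M(t))}$ is exactly Eq.~\eqref{Eq:sample}. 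Finally I would close with two triangle inequalities: over the $L$ Pauli terms (each of unit operator norm) the learned error is $\|\hat V_i(\vec{t})-V_{O_i}(\vec{t})\|_\infty\leq L\cdot(\epsilon/L)=\epsilon$, and adding the approximation error $\|V_{O_i}(\vec{t})-U_{O_i}(\vec{t})\|_\infty\leq\epsilon$ from Theorem~\ref{them:localapproxcase2} gives $\|\hat V_i(\vec{t})-U_i(\vec{t})\|_\infty\leq2\epsilon$.

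Since the statistical core is identical to the short-time case, the main work is bookkeeping around the (larger) Case~2 value of $M(t)$: I must confirm that the enumeration of connected-cluster Paulis, the union-bound factor $\log(L/\delta)$, and the variance inflation $3^{M(t)}$ from the Pauli-fidelity rescaling all collapse into the single clean exponent $(4^{K\Lambda}3e\mathfrak{d})^{\mathcal{O}(M(t))}$, and that the interface with Theorem~\ref{them:localapproxcase2} is clean — in particular that the support sizes produced there remain $o(n)$ so that the operator-norm decomposition underlying the whole scheme still applies. The genuinely delicate ingredient — that a cluster expansion truncated to support $\mathcal{O}(M(t))$ still $\epsilon$-approximates $U_{O_i}(\vec{t})$ at constant (rather than sub-threshold) evolution time — is already supplied by Theorem~\ref{them:localapproxcase2} through analytic continuation, so it is not re-proved here; hence the only real obstacle for this theorem is verifying that nothing in the learning step degrades when $M(t)$ is allowed to grow.
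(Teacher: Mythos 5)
Your proposal matches the paper's own proof essentially step for step: the same estimator $\hat\alpha_l=\frac{3^{|Q_l(O_i)|}}{N}\sum_q u_q\langle\psi_q|Q_l(O_i)|\psi_q\rangle$ justified by the locally scrambling/two-copy Clifford identity, Hoeffding with per-coefficient accuracy $\epsilon/L$ so that the term-by-term triangle inequality gives $\|\hat V_{O_i}(\vec{t})-V_{O_i}(\vec{t})\|_\infty\leq\epsilon$, and a final triangle inequality against Theorem~\ref{them:localapproxcase2} to reach $2\epsilon$. Your added bookkeeping (the union-bound $\log(L/\delta)$ factor and the $9^{M(t)}$ variance inflation, both absorbed into the $(4^{K\Lambda}3e\mathfrak{d})^{\mathcal{O}(M(t))}$ exponent) is, if anything, slightly more explicit than the paper's, and is consistent with its stated sample complexity.
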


\begin{proof}
    Define the approximated observable $\hat{U}_{O_i}(\vec{t})=\sum_{\abs{{\rm supp}(Q_l(O_i))}\leq M(t)}\hat{\alpha_l} Q_l(O_i)$, where the learned coefficient
\begin{align}
    \hat{\alpha}_l=\frac{3^{\abs{Q_l(O_i)}}}{N}\sum\limits_{q=1}^Nu_q\langle\psi_q|Q_l(O_i)|\psi_q\rangle.
\end{align}
When the sample complexity satisfies Eq.~\eqref{Eq:sample}, we still have $\abs{\alpha_l-\hat{\alpha}_l}\leq\epsilon/(4^{K\Lambda}e\mathfrak{d})^{M(t)}$ promised by the Hoeffding's inequality. Furthermore, we can upper bound
\begin{eqnarray}
    \begin{split}
        \left\|V_{O_i}(\vec{t})-\hat{V}_{O_i}(\vec{t})\right\|_{\infty}=&\left\|\sum\limits_{\abs{{\rm supp}(Q_l(O_i))}\leq M(t)}(\alpha_l-\hat{\alpha}_l)Q_l(O_i)\right\|_{\infty}\\
        \leq & \max_{l}\abs{\alpha_l-\hat{\alpha}_l}\sum\limits_{\abs{{\rm supp}(Q_l(O_i))}\leq M(t)}\|Q_l(O_i)\|_{\infty}\\
        \leq& \epsilon.
    \end{split}
\end{eqnarray}

Combine with Theorem~\ref{them:localapproxcase2}, we finally have
$\|\hat{V}_{O_i}(\vec{t})-U_{O_i}(\vec{t})\|_{\infty}\leq2\epsilon$. Here, we utilize the fact that $\|O\|_i=1$ for $O_i\in\{I,X,Y,Z\}$.
\end{proof}

\section{Sew all local evolution together}
\label{app:sew}
Recall that our learning algorithm is fundamentally based on the identity
$U(\vec{t})\otimes U^{\dagger}(\vec{t})=S\prod\limits_{i=1}^n\left[U^{\dagger}(\vec{t})S_iU(\vec{t})\right]$.
Substitute $S_i=\frac{1}{2}\sum_{O\in\{I,X,Y,Z\}}O_i\otimes O_{i+n}$ into above identity, the identity can be rewritten by
\begin{eqnarray}
    \begin{split}
         U(\vec{t})\otimes U^{\dagger}(\vec{t})&=S\prod\limits_{i=1}^n\left[\frac{1}{2}\sum\limits_{O\in\{I,X,Y,Z\}}U^{\dagger}(\vec{t})O_iU(\vec{t})\otimes O_{i+n}\right]\\
         &=S\prod\limits_{i=1}^n\left[\frac{1}{2}\sum\limits_{O\in\{I,X,Y,Z\}}U_{O_i}(\vec{t})\otimes O_{i+n}\right]\\
         &=S\prod\limits_{i=1}^nU_i(\vec{t}),
    \end{split}
\end{eqnarray}
with $U_i(\vec{t})=\frac{1}{2}\sum_{O\in\{I,X,Y,Z\}}U_{O_i}(\vec{t})\otimes O_{i+n}$.
In the following, we evaluate the distance between $U(\vec{t})\otimes U^{\dagger}(\vec{t})$ and 
\begin{align}
    \hat{V}^{\prime}=S\prod\limits_{i=1}^n\left[\frac{1}{2}\sum\limits_{O\in\{I,X,Y,Z\}}\hat{V}_{O_i}(\vec{t})\otimes O_{i+n}\right]=S\prod\limits_{i=1}^n\hat{V}_i(\vec{t})
\end{align}
with
\begin{align}
    \hat{V}_{O_i}(\vec{t})=\sum\limits_{\abs{{\rm supp}(Q_l(O_i))}\leq \mathcal{O}(M(t))}\hat{\alpha}_l Q_l(O_i),
\end{align}
and $\hat{V}_i(\vec{t})=\frac{1}{2}\sum_{O\in\{I,X,Y,Z\}}\hat{V}_{O_i}(\vec{t})\otimes O_{i+n}$. Let $\mathcal{U}(1)$, $\mathcal{U}(2)$, $\hat{\mathcal{V}}^{\prime}(2)$ be the channel representation of $U(\vec{t})$, $U(\vec{t})\otimes U^{\dagger}(\vec{t})$ and $\hat{V}^{\prime}$, meanwhile $\hat{\mathcal{V}}^{\prime}(1)={\rm Tr}_{>n}(\hat{\mathcal{V}}^{\prime}(2))$. Then we can upper bound the diamond distance by the similar approach given in Ref.~\cite{huang2024learning}:
\begin{eqnarray}
    \begin{split}
        &\left\|\hat{\mathcal{V}}^{\prime}(1)-\mathcal{U}(1)\right\|_{\diamond}
        =\left\|{\rm Tr}_{>n}\circ(\hat{\mathcal{V}}^{\prime}(2)-\mathcal{U}(2))\right\|_{\diamond}
        \leq\left\|{\rm Tr}_{>n}\right\|_{\diamond}\left\|\hat{\mathcal{V}}^{\prime}(2)-\mathcal{U}(2)\right\|_{\diamond}\\
        =&\left\|\hat{\mathcal{V}}_n(\vec{t})\cdots\hat{\mathcal{V}}_1(\vec{t})-\mathcal{U}_n(\vec{t})\cdots\mathcal{U}_1(\vec{t})\right\|_{\diamond}\\
        \leq&\sum\limits_{i=1}^n\left\|\hat{\mathcal{V}}_n\cdots\hat{\mathcal{V}}_{i+1}\mathcal{U}_i\cdots\mathcal{U}_1-\hat{\mathcal{V}}_n\cdots\hat{\mathcal{V}}_{i}\mathcal{U}_{i-1}\cdots\mathcal{U}_1\right\|_{\diamond}\\
        \leq&\sum\limits_{i=1}^n\|\hat{\mathcal{V}}_i-\mathcal{U}_i\|_{\diamond}\\
        \leq& 2\sum \limits_{i=1}^n\|\hat{V}_i(\vec{t})-U_i(\vec{t})\|_{\infty}\\
        \leq &2\sum\limits_{i=1}^n\sum\limits_{O\in\{X,Y,Z\}}\left\|\hat{V}_{O_i}(\vec{t})-U_{O_i}(\vec{t})\right\|_{\infty}\\
        \leq& 12n\epsilon.
    \end{split}
\end{eqnarray}
One may doubt that why the fifth line holds, given the fact that $\hat{\mathcal{V}}_i$ may not be a unitary channel. In the following section, we demonstrate how to approximate $\hat{V}_i(\vec{t})$ by using a unitary channel with $\mathcal{O}(\epsilon)$ additive error in terms of the diamond norm.

\section{Quantum Circuit Compilation}
\label{App:compile}
Given the obtained operator $V_{O_i}(\vec{t})=\sum_{\abs{{\rm supp}(P_l(O_i))}\leq\mathcal{O}(M(t))}\alpha_lP_l(O_i)$, it is required to be compiled into a corresponding quantum circuit. After the quantum learning phase, $V_i(\vec{t})=\sum_{O_i\in\{X,Y,Z\}}V_{O_i}(\vec{t})\otimes O_{i+n}$ is provided in the form of linear combinations of Pauli operators. The standard block-encoding method or the Linear Combination of Unitary~(LCU) generally require the classical post-selection. 

Observing the operator $V_i(\vec{t})$ essentially approximates the Hermitian operator $U_i(\vec{t})=U^{\dagger}(\vec{t})S_iU(\vec{t})$ whose eigenvalues only take values from $\{-1,+1\}$. As a result, $V_i(\vec{t})$ can be equivalently compiled by the Hamiltonian dynamics $e^{-\frac{i\pi}{2}\left(V_i(\vec{t})-I\right)}$. 

To verify this observation, we denote
\begin{align}
    J=\max\{\|V_i(\vec{t})-I\|_{\infty}, \|U_i(\vec{t})-I\|_{\infty}\},
\end{align}
and divide the evolution time $\pi/2$ into $J$ slices. Specifically, let the unitary channel $\mathcal{E}_{V_i}(\pi/2)=e^{-\frac{i\pi}{2}\left(V_i(\vec{t})-I\right)}(\cdot)e^{\frac{i\pi}{2}\left(V_i(\vec{t})-I\right)}$ and $\mathcal{E}_{U_i}(\pi/2)=e^{-\frac{i\pi}{2}\left(U_i(\vec{t})-I\right)}(\cdot)e^{\frac{i\pi}{2}\left(U_i(\vec{t})-I\right)}=U_i(\vec{t})(\cdot)U^{\dagger}_i(\vec{t})$, we have
\begin{eqnarray}
    \begin{split}
         \left\|\mathcal{E}_{V_i}(\pi/2)-\mathcal{E}_{U_i}(\pi/2)\right\|_{\diamond}&= \left\|\mathcal{E}_{V_i}(\pi/2J)\cdots\mathcal{E}_{V_i}(\pi/2J) -\mathcal{E}_{U_i}(\pi/2J)\cdots\mathcal{E}_{U_i}(\pi/2J) \right\|_{\diamond}\\
         &\leq J\left\|\mathcal{E}_{V_i}(\pi/2J)-\mathcal{E}_{U_i}(\pi/2J)\right\|_{\diamond}\\
         &\leq 2J\left\|e^{-\frac{i\pi}{2J}\left(V_i(\vec{t})-I\right)}-e^{-\frac{i\pi}{2J}\left(U_i(\vec{t})-I\right)}\right\|_{\infty}.
    \end{split}
\end{eqnarray}
Using the Taylor series to rewrite above operators results in
\begin{eqnarray}
    \begin{split}
         &2J\left\|\sum\limits_{k\geq 0}\frac{(-i\pi/2J)^k}{k!}\left[(V_i(\vec{t})-I)^k-(U_i(\vec{t})-I)^k\right]\right\|_{\infty}\\
         &\leq 2J\|V_i(\vec{t})-U_i(\vec{t})\|_{\infty}\sum\limits_{k\geq 0}\frac{(\pi/2J)^k}{k!}k\max\{\|V_i(\vec{t})-I\|_{\infty}^{k-1},\|U_i(\vec{t})-I\|_{\infty}^{k-1}\}\\
         &\leq \pi\|V_i(\vec{t})-U_i(\vec{t})\|_{\infty}\exp\left(\frac{\pi}{2J}\max\{\|V_i(\vec{t})-I\|_{\infty},\|U_i(\vec{t})-I\|_{\infty}\}\right)\\
         &\leq \pi e^{\pi/2}\epsilon.
    \end{split}
\end{eqnarray}
This provides a theoretical guarantee in synthesis the quantum circuit $\mathcal{E}_V(\pi/2)$ such that
\begin{align}
    \left\|\mathcal{U}-\mathcal{E}_V(\pi/2)\right\|_{\diamond}=\left\|\mathcal{U}-\mathcal{E}_{V_1}(\pi/2)\cdots\mathcal{E}_{V_n}(\pi/2)\right\|_{\diamond}\leq n\pi e^{\pi/2}\epsilon.
\end{align}

Finally, we provide the quantum circuit depth estimation to the Hamiltonian dynamics $\mathcal{E}_V(\pi/2)$. Noting that, the learned quantum circuit can be complied by the quantum dynamics
\begin{align}
   S\prod\limits_{i=1}^ne^{-i\pi/2(V_i(\vec{t})-I)}.
\end{align}
Here, each term $V_i(\vec{t})=\sum_{l=1}^L\alpha_lQ_l$ contains $3L$ Pauli terms ($L$ is given by Eq.~\ref{Eq:termupperbound}), and the support of involved Pauli terms nontrivially acts on the $i$-th qubit. This property enables that one cannot perform several $Q_l$ simultaneously. Using the $p$-th order Trotter-Suzuki method~\cite{childs2021theory}, we can approximate  $e^{-i\pi/2(V_i(\vec{t})-I)}$ by a quantum circuit with circuit depth
\begin{align}
    d=\mathcal{O}\left(\frac{\pi/2\left(\sum_{\gamma_1,\cdots,\gamma_p=1}^{3L}\left\|[Q_{\gamma_{p}}\cdots[Q_{\gamma_2},Q_{\gamma_1}]]\right\|\right)^{1/p}}{\epsilon^{1/p}}\right)\leq \mathcal{O}\left(\frac{\pi ((3L)^p2^p\max(\|Q_l\|)^p)^{1/p}}{2\epsilon^{1/p}}\right).
\end{align}

We note that $\mathcal{O}(n/M^D(t))$ local evolutions $e^{-i\pi/2(V_i(\vec{t})-I)}$ can be implemented simultaneously, as a result, the circuit depth of $\mathcal{E}_V$ can be approximated by 
\begin{align}
        \mathcal{O}\left(3M^D(t) \left[4^{K\Lambda}e\mathfrak{d}\right]^{M(t)}/\epsilon^{1/p} \right),
\end{align}
with $D$ the dimension of Hamiltonian dynamics, and $M(t)$ is given by lemmas~\ref{lemma1}.

\section{Efficient Training Algorithm for Quantum Classifier}

Specifically, Lemmas~\ref{lemma1} demonstrates that $U^{\dagger}(\vec{\bm \theta})OU(\vec{\bm \theta})$ can be approximated by a linear combination of Pauli operators $U_O=\sum_{j=1}^L\alpha_jQ_j$ with $L\leq\mathcal{O}\left(\left(e\mathfrak{d}\right)^{M(t)}\right)$, such that 
$\|U^{\dagger}(\vec{\bm \theta})OU(\vec{\bm \theta})-U_O\|\leq\epsilon$ and $\abs{{\rm supp}(Q)}\leq\mathcal{O}(\log n)$. Let $$\hat{\mathcal{L}}=\frac{1}{N}\sum_{i=1}^N\abs{\sum_{j=1}^L\alpha_j\langle\phi(x_i)|Q_j|\phi(x_i)\rangle-y_i},$$ then this directly implies
$|\mathcal{L}-\hat{\mathcal{L}}|\leq\epsilon$. 

On other hand, when $N\leq L$, we can find a solution  $\vec{\bm\alpha}=(\alpha_1,\cdots,\alpha_L)$ via solving a linear system $\bm\Phi\vec{\bm\alpha}=\vec{\bm y}$, where the matrix $\Phi=\left[\langle\phi(x_i)|Q_j|\phi(x_i)\rangle\right]_{j,i}$ and $\vec{\bm y}=(y_1,\cdots,y_N)$. Finally, the trained quantum classifier can be written by 
\begin{align}
    f(\cdot)=\sum_{j=1}^L[(\Phi^{\dagger}\Phi)^{-1}\Phi^{\dagger}\vec{\bm y}]_j{\rm Tr}(Q_j\cdot).
\end{align}

\section{Quantum Computation Verification}
\begin{proof}[Proof of Corollary~\ref{coro:2D}]
    Denote the time series $\vec{t}=(t_1,\cdots,t_K)$, a Hamiltonian dynamics operator $U(\vec{t})=\prod_{k=1}^Ke^{iH^{(k)}t_k}$ driven by Hamiltonians $\{H^{(1)},\cdots,H^{(K)}\}$ and the observable $O=O_1\otimes\cdots\otimes O_n$, the quantum dynamics mean value can be equivalently computed by
\begin{eqnarray}
    \begin{split}
         \mu(\vec{t})&=\langle0^n|U^{\dagger}(\vec{t})\left(O_1\otimes\dots\otimes O_n\right)U(\vec{t})|0^n\rangle\\
         &=
         \langle0^n|\left(U^{\dagger}(\vec{t})O_1U(\vec{t})\right)\left(U^{\dagger}(\vec{t})O_2U(\vec{t})\right)\cdots\left(U^{\dagger}(\vec{t})O_nU(\vec{t})\right)|0^n\rangle\\    
    &=\langle0^n|U_1(\vec{t})U_2(\vec{t})\cdots U_n(\vec{t})|0^n\rangle,
    \end{split}
\end{eqnarray}
where $U_i(\vec{t})=U^{\dagger}(\vec{t})O_iU(\vec{t})$.  
Alg.~\ref{Algorithm} approximates $U_i(\vec{t})$ by $V_i(\vec{t})$ such that $\|U_i(\vec{t})-V_i(\vec{t})\|\leq \mathcal{O}(\epsilon/2n)$, where $V_i(\vec{t})$ is essentially a linear combination of ${\rm poly}(n)$ matrices which nontrivial act on at most $\mathcal{O}(e^{K\mathfrak{d}t}\log(2n/\epsilon))$ qubits. As a result, the mean value $\mu(\vec{t})$ can be approximated by $ \hat{\mu}(\vec{t})=\langle0^n|V_1(\vec{t})\cdots V_n(\vec{t})|0^n\rangle$ such that $\abs{\mu(\vec{t})-\hat{\mu}(\vec{t})}\leq \epsilon/2$.

Then we follow the causality principle and the lightcone of $V_i(\vec{t})$ to assign $\{V_i(\vec{t})\}_{i=1}^n$ into two different groups, which are denoted by $V(R_1)$ and $V(R_2)$. This method is first studied in Ref.~\cite{bravyi2021classical} to simulate constant 2D digital quantum circuits, and is extended to 2D constant time Hamiltonian dynamics~\cite{wu2024efficient}. It is shown that each region ($R_1$ or $R_2$) consists of $\sqrt{n}/4M(t)$ sub-regions which are separated by $\geq 2M(t)$ distance. This property enables operators $V(R_1)$ and $V(R_2)$ are easy to simulate classically, and the quantum dynamics mean value has the form $\hat{\mu}(t)=\langle0^n|V(R_1)V(R_2)|0^n\rangle$. Then the classical Monte Carlo algorithm can be used to approximate $\hat{\mu}(t)$. Noting that operators $V(R_1)$ and $V(R_2)$ are not always unitary matrices, they have to be normalized in advance, such that $\gamma_i=\|V(R_i)|0^n\rangle\|^2\leq1$ for $i\in\{1,2\}$. This step can be implemented efficiently since both $V(R_1)$ and $V(R_2)$ are the product of some local operators $V_i(\vec{t})$ which can be normalized easily. As a result, as a mean value of $$F(x)=\frac{\gamma_1\langle x|V(R_2)|0^n\rangle}{\langle x|V^{\dagger}(R_1)|0^n\rangle}$$ with $x$ samples from $$p(x)=\gamma_1^{-1}\abs{\langle0^n|V(R_1)|x\rangle}^2,$$ we have
\begin{eqnarray}
         \hat{\mu}(t)=\sum\limits_{x}\langle0^n|V(R_1)|x\rangle\langle x|V(R_2)|0^n\rangle
         =\sum\limits_{x}p(x)\frac{\gamma_1\langle x|V(R_2)|0^n\rangle}{\langle x|V^{\dagger}(R_1)|0^n\rangle},
         \label{Eq:MCMC}
\end{eqnarray}
and the variance of $F(x)$ is given by ${\rm Var}(F)=\sum_{x}p(x)\left\|\frac{\gamma_1\langle x|V(R_2)|0^n\rangle}{\langle x|V^{\dagger}(R_1)|0^n\rangle}\right\|^2-\hat{\mu}^2(t)=\gamma_1\gamma_2-\hat{\mu}^2(t)\leq 1$. Ref.~\cite{wu2024efficient} theoretically demonstrated that computing the function $F(x)$ and probability $p(x)$ require $\mathcal{O}(n^{e^{K\pi e\mathfrak{d}t}\log(n/\epsilon)})$ classical running time.

As a result, $\mathcal{O}(4/\epsilon^2)$ samples $x$ suffice to provide an estimation to $\hat{\mu}(\vec{t})$ within $\mathcal{O}(\epsilon/2)$ additive error. Combining the above steps together, a $\epsilon$ approximation to the $K$-step quantum mean value problem is provided.
\end{proof}

\section{Benchmarking noisy quantum computation}

In the context of the NISQ era, a certain level of noise exists in the quantum circuits, making the unitary process to a CPTP map. Here, we study the robustness of the proposed learning algorithm when each quantum gate is affected by a $\gamma$-strength depolarizing channel $\mathcal{N}_i(\cdot)=(1-\gamma)(\cdot)+\gamma\frac{I}{2}{\rm Tr}(\cdot)$. It has the property that $\mathcal{N}(I)=I$ and $\mathcal{N}(P)=(1-\gamma)P$ when $P\in\{X,Y,Z\}$.

\begin{definition}[Quantum Analog Computation affected by local depolarizing Channel]
\label{Eq:channel}
We assume that the noise in the quantum device is modeled by local depolarizing channel $\mathcal{N}_i$ with strength $\gamma$. Let 
$U(\vec{t})=\prod_{k=1}^{K} e^{-iH^{(k)}t_k}=U_KU_{K-1}\cdots U_1$ the $K$-layer quantum analog computation, and let $\mathcal{N} \circ\mathcal{U}_k=\left(\otimes_{i=1}^n\mathcal{N}_i\right)\circ \mathcal{U}_k$ be the representation of a noisy circuit layer. We define the $K$-depth noisy quantum state with noise strength $\gamma$ as
\begin{align}
 \mathcal{U}_{\rm noisy}(\vec{t})=\mathcal{N}\circ \mathcal{U}_{K}\circ \mathcal{N}\circ \mathcal{U}_{K-1}\circ\cdots\circ\mathcal{N}\circ \mathcal{U}_{1}.
 \label{Eq:density_matrix}
\end{align} 
\end{definition}

\begin{corollary}[Robustness to Gate error]
  Given the noisy quantum analog circuit $\mathcal{U}_{\rm noisy}(\vec{t})$ defined as Def.~\ref{Eq:channel}, Alg.~\ref{Algorithm} may output an operator $\tilde{U}$ such that
    \begin{align}
        \left\|U(\vec{t})\otimes U^{\dagger}(\vec{t})-\tilde{U}\right\|_{\infty}\leq \gamma n^2,
    \end{align}
    where $U(\vec{t})=U_K\cdots U_1$ represents the noiseless quantum circuit related to $\mathcal{U}_{\rm noisy}(\vec{t})$. 
    \label{corollary5}
\end{corollary}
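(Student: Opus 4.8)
The plan is to track how the per‑layer depolarizing noise perturbs the Heisenberg‑evolved single‑qubit Paulis that Alg.~\ref{Algorithm} actually reconstructs, and then to sew these perturbed operators together through the identity Eq.~\eqref{Eq:Identity}. First I would identify what the algorithm learns when it is run on the noisy circuit. Since the single‑qubit depolarizing channel $\mathcal{N}_i$ is unital and self‑adjoint with respect to the Hilbert--Schmidt inner product, so is $\mathcal{N}=\bigotimes_{i=1}^n\mathcal{N}_i$, and hence the shadow estimator in Alg.~\ref{Algorithm} now obeys $\mathbb{E}[u_l(O_i)]=\langle\psi_l|\mathcal{U}^{\dagger}_{\rm noisy}(\vec t)(O_i)|\psi_l\rangle$, where $\mathcal{U}^{\dagger}_{\rm noisy}(\vec t)$ is the adjoint channel of Def.~\ref{Eq:channel}, i.e.\ the noiseless Heisenberg map $\mathcal{U}^{\dagger}(\vec t)$ with one copy of $\mathcal{N}$ inserted next to each of the $K$ conjugation layers $(\cdot)\mapsto U_k^{\dagger}(\cdot)U_k$. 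Running exactly the analysis of Theorem~\ref{theorem6} and its long‑time counterpart, the algorithm therefore reconstructs $\tilde V_{O_i}(\vec t)$, the bounded‑support projection of $\tilde O_i:=\mathcal{U}^{\dagger}_{\rm noisy}(\vec t)(O_i)$, up to the cluster‑truncation error of Lemma~\ref{lemma1} and a statistical fluctuation controlled by Hoeffding's inequality (both already bounded above, and absorbed into the $\mathcal{O}(\cdot)$); it outputs $\tilde U=S\prod_{i=1}^n\tilde V_i(\vec t)$ with $\tilde V_i(\vec t)=\tfrac{1}{2}\sum_{O\in\{I,X,Y,Z\}}\tilde V_{O_i}(\vec t)\otimes O_{i+n}$, the $O=I$ term being $I$ exactly in both the noisy and the noiseless cases.

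The core estimate is a per‑layer operator‑norm bound on the noise. Writing $\mathcal{N}_i-\mathcal{I}=-\gamma(\mathcal{I}-\mathcal{D}_i)$ with $\mathcal{D}_i(B)=\tfrac{I_i}{2}\otimes\mathrm{Tr}_i(B)$, one has $\|(\mathcal{N}_i-\mathcal{I})(B)\|_{\infty}=\gamma\|B-\mathcal{D}_i(B)\|_{\infty}\le 2\gamma\|B\|_{\infty}$, since $\mathcal{D}_i$ is a unital completely positive map and hence an operator‑norm contraction by the Kadison--Schwarz inequality. Using that every partial product $\bigotimes_{j=1}^{i-1}\mathcal{N}_j$ (extended by the identity on the remaining qubits) is likewise a unital CP contraction, a telescoping over the $n$ qubits gives $\|(\mathcal{N}-\mathcal{I})(B)\|_{\infty}\le\sum_{i=1}^n\|(\mathcal{N}_i-\mathcal{I})(B)\|_{\infty}\le 2n\gamma\|B\|_{\infty}$ for every operator $B$. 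Telescoping a second time over the $K$ inserted copies of $\mathcal{N}$ in $\mathcal{U}^{\dagger}_{\rm noisy}(\vec t)$, and using that the noiseless conjugations and the $\mathcal{N}$'s are all operator‑norm contractions while the intermediate operator entering each copy of $\mathcal{N}-\mathcal{I}$ has norm at most $\|O_i\|_{\infty}=1$, yields $\|\tilde O_i-U_{O_i}(\vec t)\|_{\infty}\le 2Kn\gamma$, hence $\|\tilde V_i(\vec t)-U_i(\vec t)\|_{\infty}\le\tfrac{1}{2}\sum_{O\in\{X,Y,Z\}}\|\tilde V_{O_i}(\vec t)-U_{O_i}(\vec t)\|_{\infty}\le 3Kn\gamma$, where $U_i(\vec t)=U^{\dagger}(\vec t)S_iU(\vec t)$.

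Finally I would sew the local pieces together. Since $S$ is unitary, $U(\vec t)\otimes U^{\dagger}(\vec t)=S\prod_{i=1}^nU_i(\vec t)$, each $U_i(\vec t)$ is unitary ($\|U_i(\vec t)\|_{\infty}=1$), and $\|\tilde V_j(\vec t)\|_{\infty}\le\|U_j(\vec t)\|_{\infty}+\|\tilde V_j(\vec t)-U_j(\vec t)\|_{\infty}\le1+3Kn\gamma$, a product telescoping gives
\[
\|U(\vec t)\otimes U^{\dagger}(\vec t)-\tilde U\|_{\infty}\le\sum_{i=1}^n(1+3Kn\gamma)^{i-1}\,\|U_i(\vec t)-\tilde V_i(\vec t)\|_{\infty}\le 3Kn^2\gamma\,(1+3Kn\gamma)^{n}.
\]
In the regime in which the stated bound is informative, $\gamma=\mathcal{O}(1/n^2)$, one has $(1+3Kn\gamma)^{n}=\mathcal{O}(1)$, so the right‑hand side is $\mathcal{O}(Kn^2\gamma)=\mathcal{O}(\gamma n^2)$ because $K=\mathcal{O}(1)$, which is the asserted estimate.

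I expect the main obstacle to be obtaining the \emph{polynomial} (rather than exponential) $n$‑dependence when pushing the depolarizing layers through in operator norm: a naive triangle inequality on the Pauli expansion of $\tilde O_i$ loses a factor equal to the number of Pauli terms and does not compose well across layers. The decomposition $\mathcal{N}_i-\mathcal{I}=-\gamma(\mathcal{I}-\mathcal{D}_i)$ together with the contractivity of unital CP maps is what lets each of the two telescopings --- over the $n$ qubits and over the $K$ layers --- cost only a factor linear in the relevant parameter. A secondary point needing care is that the sewing telescoping not blow up: this relies on $\|\tilde V_j(\vec t)\|_{\infty}=1+\mathcal{O}(n\gamma)$ rather than the crude bound $2$, which in turn uses that $\tilde V_j$ is a small perturbation of the \emph{unitary} $U_j(\vec t)$, and it is precisely in the intended weak‑noise regime $\gamma n^2=\mathcal{O}(1)$ that the product of these norms stays $\mathcal{O}(1)$. (If instead geometric locality is assumed so that the light cone of each $O_i$ has size $\mathrm{poly}\log n$, the same argument improves the bound to $\mathcal{O}(Kn\gamma\,\mathrm{poly}\log n)$.)
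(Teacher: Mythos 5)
Your proof is correct, and its overall skeleton matches the paper's: bound the operator-norm deviation of each noisy Heisenberg-evolved factor, then telescope over the $n$ factors in the identity $U(\vec t)\otimes U^{\dagger}(\vec t)=S\prod_i U^{\dagger}(\vec t)S_iU(\vec t)$ to pick up the second factor of $n$. Where you genuinely diverge is in the per-factor estimate. The paper bounds $\|\mathcal{U}_{\rm noisy}(\vec t)(S_i)-U^{\dagger}(\vec t)S_iU(\vec t)\|_{\infty}\leq\gamma n$ via a Pauli path expansion, pulling $\max_{\abs{\vec s}}\bigl(1-(1-\gamma)^{\abs{\vec s}}\bigr)$ out of the sum over paths; as written that step is delicate, since the residual sum of Pauli paths with sign-varying damped coefficients is not obviously controlled by $\|U^{\dagger}S_iU\|_{\infty}$ after taking a termwise maximum. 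Your route --- writing $\mathcal{N}_i-\mathcal{I}=-\gamma(\mathcal{I}-\mathcal{D}_i)$ and invoking operator-norm contractivity of unital positive maps to telescope first over the $n$ qubits and then over the $K$ noise layers --- reaches the same $\mathcal{O}(\gamma n)$ per-factor bound by a self-contained and arguably more airtight argument, at the cost of slightly worse constants ($2Kn\gamma$ vs.\ $\gamma n$, with $K=\mathcal{O}(1)$ absorbed). Two further small differences: (i) the paper's $\tilde U$ is built from the \emph{exact} noisy Heisenberg operators $\mathcal{U}_{\rm noisy}(\vec t)(S_i)$, silently dropping the cluster-truncation and statistical errors, whereas you acknowledge and absorb them --- either convention is defensible, but note that Lemma~\ref{lemma1} was proved for the noiseless evolution, so strictly one should remark that local unital noise does not enlarge the support and only damps Pauli coefficients before reusing it; (ii) in the final sewing the paper uses $\|\mathcal{U}_{\rm noisy}(\vec t)(S_i)\|_{\infty}\leq1$ (unital positive maps applied to a unitary) to keep the telescoping prefactors at $1$, while your bound $\|\tilde V_j\|_{\infty}\leq1+3Kn\gamma$ introduces the factor $(1+3Kn\gamma)^{n}$, which you correctly dispose of in the weak-noise regime $\gamma n^2=\mathcal{O}(1)$ where the statement is informative; using the paper's norm bound would remove that caveat and recover the clean $\mathcal{O}(\gamma n^2)$ unconditionally.
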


\begin{proof}
    Recall that Alg.~\ref{Algorithm} reconstructs the quantum circuit $U(\vec{t})\otimes U^{\dagger}(\vec{t})$ via the identity $S\prod_{i=1}^n\left[U^{\dagger}(\vec{t})S_iU(\vec{t})\right]$, while the operator $U^{\dagger}(\vec{t})S_iU(\vec{t})$ may become to $\mathcal{U}_{\rm noisy}(\vec{t})(S_i)$ in the noisy environment. Now we evaluate the distance $\|\mathcal{U}_{\rm noisy}(\vec{t})(S_i)-U^{\dagger}(\vec{t})S_iU(\vec{t})\|_{\infty}$ by using the Pauli path propagation method~\cite{aharonov2022polynomial}. Specifically, the normalized $n$-qubit Pauli operator $s_k\in\{I/\sqrt{2},X/\sqrt{2},Y/\sqrt{2},Z/\sqrt{2}\}^{\otimes n}$ is vectorized by $|s_k\rangle\rangle$. Using the property $I=\sum_{s_k}|s_k\rangle\rangle\langle\langle s_k|$, we have
    \begin{eqnarray}
        \begin{split}
            &\|\mathcal{U}_{\rm noisy}(\vec{t})(S_i)-U^{\dagger}(\vec{t})S_iU(\vec{t})\|_{\infty}\\
            =&\left\|\sum\limits_{s_0,s_1,\cdots,s_{K+1}}\left[(1-\gamma)^{\abs{\vec{s}}}-1\right]\langle\langle S_i|s_{K+1}\rangle\rangle\langle\langle s_{K+1}|\mathcal{U}_{K}|s_K\rangle\rangle\cdots\langle\langle s_2|\mathcal{U}_1|s_1\rangle\rangle s_0\right\|_{\infty}\\
            \leq&\max\limits_{\abs{\vec{s}}}(1-(1-\gamma)^{\abs{\vec{s}}})\|U^{\dagger}S_iU\|_{\infty}.
        \end{split}
    \end{eqnarray}
For small enough noise strength $\gamma$, above metric can be upper bounded by $\gamma n\|U^{\dagger}S_iU\|_{\infty}\leq \gamma n$. Taking this metric difference to the whole system, we finally have
\begin{align}
    \left\|U(\vec{t})\otimes U^{\dagger}(\vec{t})-\tilde{U}\right\|_{\infty}= \left\|S\prod_{i=1}^n\left[U^{\dagger}(\vec{t})S_iU(\vec{t})\right]-S\prod_{i=1}^n\mathcal{U}_{\rm noisy}(\vec{t})(S_i)\right\|_{\infty}\leq\sum\limits_{i=1}^n\left\|U^{\dagger}(\vec{t})S_iU(\vec{t})-\mathcal{U}_{\rm noisy}(\vec{t})(S_i)\right\|_{\infty}\leq \gamma n^2.
\end{align}
\end{proof}

\begin{corollary}
  Given the $n$-qubit noisy quantum analog circuit $\mathcal{U}_{\rm noisy}(\vec{t})$ defined as Def.~\ref{Eq:channel}, Alg.~\ref{Algorithm} may output a $2n$-qubit channel $\tilde{\mathcal{U}}$ such that
  \begin{align}
      \abs{{\rm Tr}\left[O\left(\mathcal{U}_{\rm noisy}(\vec{t})(|\psi\rangle\langle\psi|)\otimes \frac{I_n}{2^n}\right)\right]-{\rm Tr}\left[O\tilde{\mathcal{U}}\left(|\psi\rangle\langle\psi|\otimes \frac{I_n}{2^n}\right)\right]}\leq \mathcal{O}(\gamma n^2\|O\|_{\infty})
    \end{align}
    for any $n$-qubit quantum state $|\psi\rangle$ and $n$-qubit observable $O$. 
\end{corollary}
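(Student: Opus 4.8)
The plan is to reduce the claim to the operator-norm bound already established in Corollary~\ref{corollary5}, routed through the \emph{noiseless} mean value. Reading the $n$-qubit observable $O$ as $O\otimes I_n$ on the $2n$-qubit space, the left-hand side of the claim equals ${\rm Tr}[O\,\mathcal{U}_{\rm noisy}(\vec{t})(|\psi\rangle\langle\psi|)]$ and the right-hand side equals ${\rm Tr}[(O\otimes I_n)\,\tilde{\mathcal{U}}(|\psi\rangle\langle\psi|\otimes I_n/2^n)]$, where $\tilde{\mathcal{U}}(\cdot)=\tilde{V}(\cdot)\tilde{V}^{\dagger}$ is the $2n$-qubit unitary channel produced by Algorithm~\ref{Algorithm} run on the noisy device and then compiled as in Appendix~\ref{App:compile}; here $\tilde{V}\approx S\prod_{i=1}^{n}\mathcal{U}_{\rm noisy}(\vec{t})(S_i)$ is the sewn, compiled version of the noisy Heisenberg-evolved swaps, built from Eqs.~\eqref{Eq:learnedChannel}--\eqref{Eq:learnedCircuit} with each $U^{\dagger}(\vec{t})O_iU(\vec{t})$ replaced by its noisy estimate. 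I would then insert $\langle\psi|U^{\dagger}(\vec{t})OU(\vec{t})|\psi\rangle$ as an intermediate quantity and control the two resulting gaps separately.

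For the first gap, Corollary~\ref{corollary5} gives $\|U(\vec{t})\otimes U^{\dagger}(\vec{t})-\tilde{V}\|_{\infty}\le\gamma n^{2}$, up to a subleading $\mathcal{O}(n\epsilon)$ contributed by the finite-sample learning precision (Theorem~\ref{theorem6}) and by the unitary compilation of Appendix~\ref{App:compile}. Applying the operator-norm-to-expectation estimate stated in the Preliminaries (a Cauchy--Schwarz bound) with $\sigma=|\psi\rangle\langle\psi|\otimes I_n/2^{n}$ yields
\begin{align}
\bigl|{\rm Tr}[(O\otimes I_n)\tilde{V}\sigma\tilde{V}^{\dagger}]-{\rm Tr}[(O\otimes I_n)(U(\vec{t})\otimes U^{\dagger}(\vec{t}))\sigma(U^{\dagger}(\vec{t})\otimes U(\vec{t}))]\bigr|\le 2\|O\|_{\infty}\,\|U(\vec{t})\otimes U^{\dagger}(\vec{t})-\tilde{V}\|_{\infty}.
\end{align}
Since $(U(\vec{t})\otimes U^{\dagger}(\vec{t}))\sigma(U^{\dagger}(\vec{t})\otimes U(\vec{t}))=U(\vec{t})|\psi\rangle\langle\psi|U^{\dagger}(\vec{t})\otimes I_n/2^{n}$, the second trace is exactly $\langle\psi|U^{\dagger}(\vec{t})OU(\vec{t})|\psi\rangle$, so the first gap is at most $2\gamma n^{2}\|O\|_{\infty}+\mathcal{O}(n\epsilon\|O\|_{\infty})$.

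For the second gap I would bound $\bigl|{\rm Tr}[O\,\mathcal{U}_{\rm noisy}(\vec{t})(|\psi\rangle\langle\psi|)]-\langle\psi|U^{\dagger}(\vec{t})OU(\vec{t})|\psi\rangle\bigr|\le\|\mathcal{U}_{\rm noisy}^{\dagger}(\vec{t})(O)-U^{\dagger}(\vec{t})OU(\vec{t})\|_{\infty}$, using that $|\psi\rangle$ is normalized. This is exactly the Pauli-path-propagation estimate used in the proof of Corollary~\ref{corollary5}, now with the fixed $2$-local $S_i$ replaced by the general observable $O$: expanding over Pauli paths $\vec{s}=(s_0,\dots,s_{K+1})$ one gets $\mathcal{U}_{\rm noisy}^{\dagger}(\vec{t})(O)-U^{\dagger}(\vec{t})OU(\vec{t})=\sum_{\vec{s}}\bigl[(1-\gamma)^{|\vec{s}|}-1\bigr]\langle\langle O|s_{K+1}\rangle\rangle\cdots s_0$, with $|(1-\gamma)^{|\vec{s}|}-1|\le\gamma|\vec{s}|$ and $|\vec{s}|=\mathcal{O}(nK)$ since each of the $K$ layers can enlarge the accumulated support by at most $n$; hence the second gap is $\mathcal{O}(\gamma n\|O\|_{\infty})$ for constant $K$. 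Combining the two gaps by the triangle inequality gives $|{\rm LHS}-{\rm RHS}|\le 2\gamma n^{2}\|O\|_{\infty}+\mathcal{O}(\gamma n\|O\|_{\infty})+\mathcal{O}(n\epsilon\|O\|_{\infty})=\mathcal{O}(\gamma n^{2}\|O\|_{\infty})$, where the learning precision $\epsilon$ (attained with ${\rm poly}(n,1/\epsilon)$ samples) is taken $\le\gamma n$ and thereby absorbed.

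The point that needs the most care is that Corollary~\ref{corollary5} certifies closeness of the learned operator to the \emph{noiseless} circuit rather than to the noisy one, so $\tilde{\mathcal{U}}$ cannot be compared to $\mathcal{U}_{\rm noisy}(\vec{t})$ directly; one must go through the noiseless mean value and separately pay the (subleading) $\mathcal{O}(\gamma n)$ noisy-versus-noiseless discrepancy, which is why the final bound is $\gamma n^{2}$ rather than $\gamma n$. A secondary technicality is ensuring that Algorithm~\ref{Algorithm}'s output really is a channel $\tilde{V}(\cdot)\tilde{V}^{\dagger}$ with $\tilde{V}$ unitary, which is precisely the role of the compilation in Appendix~\ref{App:compile} (replacing each learned $V_i(\vec{t})$ by the $\{-1,+1\}$-spectrum evolution $e^{-i\pi(V_i(\vec{t})-I)/2}$) and contributes only $\mathcal{O}(n\epsilon)$ in diamond norm.
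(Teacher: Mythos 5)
Your proposal is correct and follows essentially the same route as the paper: both insert the noiseless mean value $\langle\psi|U^{\dagger}(\vec{t})OU(\vec{t})|\psi\rangle$ as the intermediate quantity, bound the noisy-versus-noiseless gap by Pauli-path propagation ($\mathcal{O}(\gamma n\|O\|_{\infty})$), and bound the learned-versus-noiseless gap via the Cauchy--Schwarz operator-norm estimate combined with Corollary~\ref{corollary5} ($2\gamma n^{2}\|O\|_{\infty}$), concluding by the triangle inequality. Your explicit accounting of the subleading finite-sample and compilation errors is a reasonable refinement the paper leaves implicit, but it does not change the argument.
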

\begin{proof}
    Suppose $U(\vec{t})=U_K\cdots U_1$ represent the noiseless quantum circuit corresponding to $\mathcal{U}_{\rm noisy}(\vec{t})$, and $\mathcal{U}$ denotes the channel representation of $U(\vec{t})\otimes U^{\dagger}(\vec{t})$. Then the proof can be divided into two steps via using the triangle inequality. Firstly, using the Pauli path propagation method, it is shown that
    \begin{eqnarray}
    \begin{split}
   & \abs{{\rm Tr}\left[O\left(\mathcal{U}_{\rm noisy}(\vec{t})(|\psi\rangle\langle\psi|)\otimes \frac{I_n}{2^n}\right)\right]-{\rm Tr}\left[O\mathcal{U}\left(|\psi\rangle\langle\psi|\otimes \frac{I_n}{2^n}\right)\right]}\\
         =&\abs{{\rm Tr}\left[O\mathcal{U}_{\rm noisy}(\vec{t})(|\psi\rangle\langle\psi|)\otimes I_n/2^n\right]-{\rm Tr}\left[OU(\vec{t})|\psi\rangle\langle\psi|U^{\dagger}(\vec{t})\right]}\\
         \leq&\max\limits_{\abs{\vec{s}}}(1-(1-\gamma)^{\abs{\vec{s}}})\abs{{\rm Tr}\left[OU(\vec{t})|\psi\rangle\langle\psi|U^{\dagger}(\vec{t})\right]}\\
         \leq& \gamma n\|O\|_{\infty},
    \end{split}
    \end{eqnarray}
    where the third line comes from the Pauli path propagation. 
    
    Secondly, for any operator $A$, $B$, quantum state $|\psi\rangle$, and observable $O$, we have the relationship
    \begin{eqnarray}
    \begin{split}
        &\abs{\langle\psi|A^{\dagger}OA|\psi\rangle-\langle\psi|B^{\dagger}OB|\psi\rangle}\\
        \leq& \sqrt{\langle\psi|\left(A^{\dagger}OA-B^{\dagger}OB\right)^2|\psi\rangle}\\
        \leq&  \left\|(A^{\dagger}-B^{\dagger})OA+B^{\dagger}O(A-B)\right\|_{\infty} \\
        \leq &2\|O\|_{\infty}\max\{\|A\|_{\infty},\|B\|_{\infty}\}|\|A-B\|_{\infty},
    \end{split}
\end{eqnarray}
where the second line comes from the Cauchy inequality. Recall that $\tilde{U}=S\prod_{i=1}^n\mathcal{U}_{\rm noisy}(\vec{t})(S_i)$, and this results in
\begin{align}
    \|\tilde{U}\|_{\infty}\leq \prod_{i=1}^n\|\mathcal{U}_{\rm noisy}(\vec{t})(S_i)\|_{\infty}\leq (1-\gamma)^n<1.
\end{align}

Assign $\mathcal{U}(\cdot)=A(\cdot)A^{\dagger}$, $\tilde{\mathcal{U}}=B(\cdot)B^{\dagger}$ and $2n$-qubit density matrix $\rho=|\psi\rangle\langle\psi|\otimes I_n/2^n$, we have
    \begin{eqnarray}
        \begin{split}
             \abs{{\rm Tr}\big[O\mathcal{U}(\rho)\big]-{\rm Tr}\left[O\tilde{\mathcal{U}}(\rho)\right]}
             \leq  2\|O\|_{\infty}\left\|U(\vec{t})\otimes U^{\dagger}(\vec{t})-\tilde{U}\right\|_{\infty}\leq 2\gamma n^2\|O\|_{\infty},
        \end{split}
    \end{eqnarray}
where the last inequality comes from Corollary~\ref{corollary5}. Combining above two results by using the triangle inequality, we complete the proof.
\end{proof}

\end{document}